\documentclass[11pt]{article}
\usepackage{float}

 \usepackage{basket_Fourier}
 \usepackage[T1]{fontenc}
 \usepackage[utf8]{inputenc}
 \usepackage{lmodern}
 \usepackage{microtype}
 \usepackage{cmap} 
 
 \input{glyphtounicode}
\makeatletter
\def\BState{\State\hskip-\ALG@thistlm}
\makeatother

\usepackage[thinlines]{easytable}
\renewcommand{\arraystretch}{1.3}
\title{Data-Driven Stochastic Optimal Control for Intraday Electricity Trading by Renewable Producers}

\author[1]{Chiheb Ben Hammouda}
\author[2]{Michael Samet\thanks{samet@uq.rwth-aachen.de}}
\author[3]{Ra\'ul Tempone}
\affil[1]{Mathematical Institute, Utrecht University, Utrecht, the Netherlands}
\affil[2]{Chair of Mathematics for Uncertainty Quantification, RWTH Aachen University, Aachen, Germany.}
\affil[3]{King Abdullah University of Science and Technology (KAUST), Computer, Electrical and Mathematical Sciences \& Engineering Division (CEMSE), Thuwal, Saudi Arabia.}

\usepackage{comment}
\usepackage{tikz}

\usetikzlibrary{calc}
\usetikzlibrary{shapes.geometric, arrows}
\tikzstyle{startstop} = [rectangle, rounded corners, minimum width=3cm, minimum height=1cm, text centered, text width=4.5cm, draw=black, fill=green!30]
\tikzstyle{io} = [trapezium, trapezium left angle=70, trapezium right angle=110, minimum width=3cm, minimum height=1cm, text centered, draw=black, fill=blue!30]
\tikzstyle{process} = [rectangle, rounded corners, minimum width=3cm, minimum height=1cm, text centered, text width=10cm, draw=black, fill=red!10]
\tikzstyle{decision} = [rectangle, minimum width=3cm, minimum height=1cm, text centered, text width=3cm, draw=black, fill=blue!30]
\tikzstyle{alternative} = [rectangle, minimum width=5cm, minimum height=1cm, text width=5.5cm, draw=black, fill=blue!10]
\tikzstyle{alternative2} = [rectangle, minimum width=5cm, minimum height=1cm, text width=5.5cm, draw=black, fill=green!10, text centered]
\tikzstyle{arrow} = [thick,->,>=stealth]
\usepackage{xcolor}

\usetikzlibrary{arrows.meta, positioning, decorations.pathreplacing, shapes.arrows}

\definecolor{forwardcolor}{RGB}{27, 158, 119}
\definecolor{dayaheadcolor}{RGB}{217, 95, 2}
\definecolor{intradaycolor}{RGB}{117, 112, 179}
\definecolor{balancingcolor}{RGB}{231, 41, 138}
\definecolor{deliverycolor}{RGB}{102, 166, 30}

\begin{document}
	\date{}
\maketitle
\begin{abstract}
The rapid growth of weather-dependent renewable generation increases
price volatility and imbalance penalty risk in power markets, creating the need for advanced quantitative
trading strategies. We
develop a data-driven continuous-time stochastic optimal control framework for intraday electricity trading using stochastic differential equations with drift terms ensuring mean reversion to deterministic forecast trajectories. Production follows a Jacobi diffusion, while prices follow an asymmetric jump-diffusion to reflect the heavy-tailed behavior observed in intraday markets. The framework accounts for realistic market features by incorporating gate closure and energy-based imbalance settlement over the delivery window, where the path-dependent imbalance cost is handled by state augmentation to preserve the Markovian structure. The value function is characterized via the dynamic programming principle by a three-stage sequence of two linear Kolmogorov backward equations and a nonlinear Hamilton-Jacobi-Bellman partial integro-differential equation. To solve this problem efficiently, we propose a monotone IMEX finite-difference scheme with operator splitting, semi-implicit linearization, and a differential formulation for the jump operator. Numerical experiments based on German market data indicate that, under the provided forecasts, the computed strategy outperforms the TWAP benchmark and approaches the perfect-foresight benchmark.  Sensitivity experiments further show how jump intensity, delivery-window length, and trading horizon affect the trading policy and the resulting profit-and-loss distribution.

\end{abstract}
\textbf{Keywords:} 
stochastic optimal control, Hamilton-Jacobi-Bellman equation, partial integro-differential equations, jump-diffusion models, optimal execution, intraday electricity trading, renewable integration

{\textbf{2020 Mathematics Subject Classification.} 93E20, 91B74,
65M06, 35R09, 49L25.}

\tableofcontents

\section{Introduction}
The increasing penetration of variable renewable generation, particularly wind and solar, has become a primary source of uncertainty in electricity markets. Forecast errors in renewable generation, combined with volatile intraday prices and liquidity frictions expose producers to imbalance charges levied by TSOs when the committed trading position deviates from the actual metered production. Continuous intraday markets allow producers to revise committed positions up to a fixed exchange-specific lead time before the start of physical delivery, providing a mechanism to reduce imbalance exposure. Designing profitable intraday trading strategies that revise previously committed day-ahead positions therefore requires realistic continuous-time stochastic models for both power production and prices, and a representation reflecting the imbalance settlement mechanism appropriately.  

Building on the optimal execution framework of \cite{almgren2001optimal}, \cite{aid2016optimal} formulate an intraday electricity trading problem in which producers trade toward a stochastic terminal target representing the residual demand (the net difference between load and renewable production), so that the final inventory position is aligned with realized production shortfall. In their formulation, both intraday prices and residual demand forecasts follow arithmetic Brownian motion (ABM) dynamics, and linear temporary impact together with a quadratic terminal imbalance penalty lead to a linear-quadratic control problem. As a result, the associated Hamilton-Jacobi-Bellman (HJB) equation admits a closed-form solution. While this approach provides analytical insight, the modeling assumptions are largely driven by tractability and limit the ability of the model to reproduce several empirically observed features of intraday electricity markets. A related formulation is studied in \cite{tan2018optimal}, where a wind producer trades across forward, day-ahead, and intraday markets through a single aggregated price process. Their model combines linear market impact with a utility-based objective. Prices and wind speed are described by ABMs, and wind speed is converted into power production through a deterministic power curve. The resulting HJB equation admits a closed-form solution, which allows for an explicit analysis of how market conditions and risk-preference parameters affect the optimal trading strategy.

Subsequent work by \cite{glas2020intraday} extends the framework of \cite{aid2016optimal} to a producer operating renewable and multiple thermal generation units and incorporates data-driven, time-varying liquidity and production costs calibrated from limit-order-book data. The richer modeling of market impact and production costs renders the associated HJB analytically intractable, and the authors therefore resort to a numerical solution based on policy iteration. Despite these advances in modeling execution costs, residual demand and intraday price  are still described by ABM dynamics and assumed to be independent in contrast to econometric evidence documenting dependence between renewable production and intraday price forecasts \cite{kiesel2017econometric}. 

A related strand of the literature shifts focus from production uncertainty to price dynamics and execution risk. In this direction, \cite{cartea2022optimal} study intraday cross-border electricity trading in a European power network, where prices follow mean-reverting jump-diffusion dynamics with country-specific compound Poisson spikes. Jump sizes are modeled as Gaussian, leading to symmetric price spikes, and the associated HJB equation was solved explicitly via a Riccati-type ansatz. A different line of work focuses on execution risk and market impact at the microstructural level of intraday trading. Inspired by Hawkes-process-based execution models in equity markets \cite{alfonsi2016dynamic}, \cite{chatziandreou2026optimal} study optimal execution problems in which order flow is modeled by self-exciting point processes. In this framework, trading intensity depends on past order arrivals through a decay kernel, capturing endogenous clustering and transient impact effects. While Hawkes processes are non-Markovian in general, restricting the kernel to an exponential form yields a finite-dimensional Markovian representation, allowing the associated HJB equation to be solved analytically and leading to a tractable characterization of optimal execution strategies. The modeling of transient market impact is further generalized in \cite{abi2025optimal}, where the price impact is described through  general propagator functions, allowing the effect of past trades to persist over time in a more general way. In this setting, the optimal control problem no longer admits a finite-dimensional  HJB equation, and the authors reformulate the problem as a stochastic Fredholm integral equation, which they solve numerically. 

Despite methodological differences, the aforementioned works share several structural limitations. First, imbalance charges are typically modeled by penalty functions applied only at a single delivery instant, rather than through settlement based on energy integrated over an imbalance settlement period \cite{brusco2025rolling}. Second, the exchange-specific lead time inherent to intraday markets is not modeled explicitly. Finally, power production and price dynamics are often simplified, either by treating production deterministically or by relying on stylized stochastic models that fail to reproduce key empirical features of intraday electricity markets (see  \cite{aid2015electricity} for an overview).

We address these limitations by formulating a continuous-time stochastic optimal control framework that models intraday electricity markets more realistically. The key contributions are:
\begin{enumerate}

	\item  \textbf{Market mechanism and operational timeline:} We model imbalance penalties as path-dependent costs on cumulative metered
energy delivered over a settlement window and explicitly account for gate
closure and the lead time between the end of trading and the start of
physical delivery.
 To the best of our knowledge, existing continuous-time stochastic control formulations do not simultaneously capture gate closure and delivery over a settlement window within a unified framework. This yields a novel three-stage formulation involving the sequence of two linear Kolmogorov backward equations and a nonlinear HJB partial integro-differential equation (PIDE).
	
\item \textbf{Forecast-driven stochastic modeling of prices and production:} Building on the data-driven parametric SDE framework of \cite{caballero2021quantifying,ben2025data} for modeling forecast errors in wind and solar power production, we model both prices and production using SDEs that mean-revert around deterministic forecast trajectories. In contrast to commonly used specifications, forecast-driven dynamics incorporate information directly contained in the forecast paths themselves, such as seasonality and intraday patterns. Embedding the forecast trajectories in the drift renders the dynamics time-inhomogeneous and leads to a non-autonomous stochastic control problem, in which the optimal policy adjusts to anticipated future production and price movements rather than relying solely on current observations. Moreover, we enrich the intraday price dynamics by incorporating  an asymmetric jump component. While \cite{cartea2022optimal} assumes symmetric jump size distributions, we allow for asymmetric jumps, reflecting the empirical observation that upward and downward price movements differ in frequency and magnitude. This modeling choice is supported by empirical evidence of directionally unbalanced price jumps in intraday electricity markets \cite{kostrzewski2021impact} and allows the model to capture abrupt market events such as sudden generation outages, grid incidents, or large forecast errors.

	\item \textbf{Numerical methodology for nonlinear HJB-PIDE:} The presence of jumps in the price dynamics renders the associated
HJB equation nonlocal, leading to a PIDE.
Consequently, the resulting SOC problem must be solved numerically. We tackle this by developing
a monotone finite-difference scheme based on operator splitting and implicit-explicit (IMEX) time
discretization, with a suitable linearization of the Hamiltonian followed by Picard iterations, together with
a structure-exploiting discretization of the jump operator that avoids off-grid interpolation and
preserves monotonicity under the CFL condition \eqref{eq:cfl}. 
\end{enumerate}
Numerical experiments on German market data illustrate how penalty specifications, liquidity costs, and jump intensities shape the optimal trading policy. Under the provided forecasts, the optimal strategy achieves improved profit-and-loss distributions relative to the TWAP benchmark and approaches the perfect-foresight benchmark.

The remainder of the paper is organized as follows. Section \ref{sec:problem_setting_pricing_framework} introduces the market model for wind production, intraday prices, and inventory. Section \ref{sec:SOC_formulation} formulates the three-stage stochastic control problem and derives the associated P(I)DEs to solve. Section \ref{sec:numerical_methodology} presents the monotone IMEX operator-splitting scheme, the discretization of the jump operator, and the associated numerical analysis, including domain truncation, boundary treatment, and monotonicity properties. Section \ref{sec:num_exp_results} presents the numerical experiments and results on German market data. The appendices contain auxiliary proofs.

\section{Stochastic Market Model}\label{sec:problem_setting_pricing_framework}
Let $(\Omega,\mathcal{F},\mathbb{F}=(\mathcal{F}_t)_{t\in[0,T]},\mathbb{P})$ be a filtered probability space satisfying the right-continuity and completeness conditions \cite{oksendal2007applied}. We denote by $T_{\mathrm{gc}} > 0$ the gate closure time, i.e., the last time at which trading is permitted, by $h > 0$ the exchange-specific lead time between the gate closure and the start of physical delivery, and by $L > 0$ the product-specific delivery window length.

The problem horizon is then $T := T_{\mathrm{gc}} + h + L$, corresponding to the delivery end time. These quantities define three consecutive intervals: the trading window $[0, T_{\mathrm{gc}}]$, the lead-time period $[T_{\mathrm{gc}},\ T_{\mathrm{gc}}+h]$, and the imbalance settlement period $[T - L, T]$. The market timeline structure is summarized schematically in Figure~\ref{fig:electricity_trading_timeline}.
\begin{figure}[H]
	\centering
	\caption{Timeline of German electricity markets with delivery window and gate closure.}
	\label{fig:electricity_trading_timeline}
	\resizebox{\linewidth}{!}{%
		\begin{tikzpicture}[font=\sffamily,>=Stealth,scale=0.9,transform shape]
			\definecolor{daC}{RGB}{210,120,40}
			\definecolor{idC}{RGB}{90,90,180}
			
			\coordinate (xmin)       at (-3.5,0);
			\coordinate (xmax)       at (11.3,0);
			\coordinate (DAopen)     at (-2.4,0);
			\coordinate (DAclose)    at (-0.6,0);   
			\coordinate (IDopen)     at (0.6,0);    
			\coordinate (DayDstart)  at (2.0,0);    
			\coordinate (Tzero)      at (0.0,0);    
			\coordinate (Tbar)       at (5.0,0);    
			\coordinate (TminusL)    at (7.0,0);    
			\coordinate (Tend)       at (9.4,0);    
			\coordinate (DayDend)    at (10.6,0);   
			
			\draw[-{Stealth[length=2.4mm]}] (xmin) -- (xmax) node[right, font=\tiny]{Time};
			
			\node[fill=daC!20, rounded corners, minimum height=8mm,
			text width=32mm, align=center]
			at (-2.3,1.45) {Day-Ahead (DA) Hourly Bidding};
			
			\node[fill=idC!20, rounded corners, minimum height=8mm,
			text width=41mm, align=center]
			at (2.8,1.45) {Intraday (ID) \\ Continuous Trading};
			
			\node[fill=black!12, rounded corners, minimum height=8mm,
			text width=22mm, align=center]
			at (8.2,1.45) {Balancing\\Market};
			
			\fill[idC!10] (TminusL) rectangle ($(Tend)+(0,0.45)$);
			
			\draw[dashed] (DAclose) -- ++(0,1.65);
			\draw[dashed] (IDopen) -- ++(0,1.65);
			\draw[dashed] (DayDstart) -- ++(0,1.65);
			\draw[dashed] (Tbar) -- ++(0,1.65);
			\draw[dashed] (TminusL) -- ++(0,1.65);
			\draw[dashed] (Tend) -- ++(0,1.65);
			\draw[dashed] (DayDend) -- ++(0,1.65);
			
			\draw[<->] ($(Tzero)+(0.6,0.3)$) -- node[above=2pt, font=\tiny]{Trading horizon ($T_{gc}$)} ($(Tbar)+(0,0.3)$);
			\draw[<->] ($(Tbar)+(0,0.3)$) -- node[above=2pt, font=\tiny]{Lead time ( $h$ )} ($(TminusL)+(0,0.3)$);
			\draw[<->] ($(TminusL)+(0,0.3)$) -- node[above=2pt, font=\tiny]{Delivery ( $L$ )} ($(Tend)+(0,0.3)$);
			
			\node[below=2pt, font=\tiny, rotate=90, anchor=east] at (DAclose)   {DA Closes on Day D-1 (12:00)};
			\node[below=2pt, font=\tiny, rotate=90, anchor=east] at (IDopen)    {ID Opens (15:00)};
			\node[below=2pt, font=\tiny, rotate=90, anchor=east] at (DayDstart) {Start of Day D (00:00)};
			\node[below=2pt, font=\tiny, rotate=90, anchor=east] at (Tbar)      {Gate closure ($T_{gc}$)};
			\node[below=2pt, font=\tiny, rotate=90, anchor=east] at (TminusL)   {Start of Delivery ($T_{gc} + h$) };
			\node[below=2pt, font=\tiny, rotate=90, anchor=east] at (Tend)      {End of Delivery ($T := T_{gc} + h + L$ )};
			\node[below=2pt, font=\tiny, rotate=90, anchor=east] at (DayDend)   {End of Day D (00:00)};
			
	\end{tikzpicture}}
\end{figure}
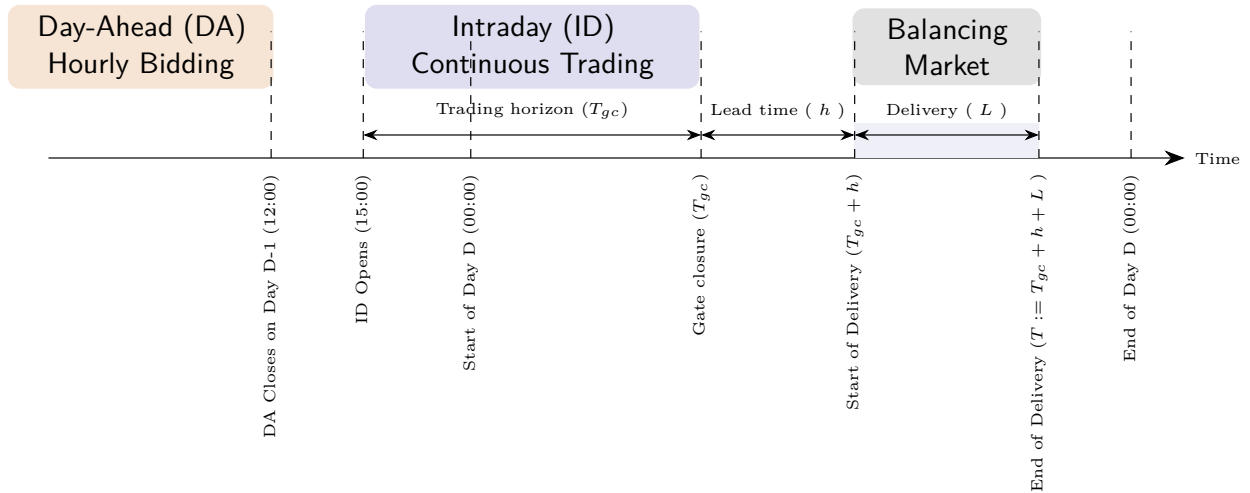

\subsection{Data-Driven Production Dynamics}
\label{sec:wind_dynamics}

In this work, we model the production as a mean-reverting process around a
deterministic forecast curve {$p_X$}
(see Section~\ref{sec:data_description_parameters} for more details).
{The normalized forecast observations are first projected onto
$[\varepsilon_{\mathrm{tr}},1-\varepsilon_{\mathrm{tr}}]$, with
$0<\varepsilon_{\mathrm{tr}}\ll1$, and then extended to $[0,T]$ using a
shape-preserving cubic Hermite interpolation. We denote the resulting
trajectory by $p_X^{\varepsilon_{\mathrm{tr}}}$. By construction,}
\begin{equation}
\label{eq:X_forecast_truncation}
{
p_X^{\varepsilon_{\mathrm{tr}}}\in C^1([0,T]),
\qquad
\varepsilon_{\mathrm{tr}}
\leq
p_X^{\varepsilon_{\mathrm{tr}}}(t)
\leq
1-\varepsilon_{\mathrm{tr}},
\qquad t\in[0,T].
}
\end{equation}
{The derivative
$\dot p_X^{\varepsilon_{\mathrm{tr}}}$ is obtained directly from the
interpolant. Henceforth, to lighten the notation, we write
$p_X:=p_X^{\varepsilon_{\mathrm{tr}}}$.}

The production dynamics feature a time-dependent mean-reversion rate,
$\theta(t)$, and a nonlinear state-dependent diffusion, given by:
\begin{equation}
	\label{eq:wind_sde}
	\left\{
	\begin{array}{l}
		\mathrm{d}X_t=\big({\dot p_X(t)}
		-\theta(t)\,(X_t-{p_X(t)}) \big)\,\mathrm{d}t
		+\sqrt{2\alpha\theta_0\,X_t\big(1-X_t\big)}\,\mathrm{d}B^X_t,\quad t\in[0,T],\\[4pt]
		X_0= p_X(0),
	\end{array}
	\right.
\end{equation}
where $\theta_0, \alpha >0$. The mean-reversion rate $\theta(t)$ is then defined as
\begin{equation}
	\label{eq:mean_reversion_condition}
	\theta(t):=\max \left\{\theta_0,
	\frac{\left|{\dot{p}_X(t)}\right|}
	{\min \left({p_X(t)}, 1-{p_X(t)}\right)}
	\right\}.
\end{equation}
The constraint in \eqref{eq:mean_reversion_condition} ensures that
the process $X_t$ in \eqref{eq:wind_sde} is almost surely (a.s.) bounded in $[0,1]$
\cite{caballero2021quantifying}.
{Moreover, \eqref{eq:X_forecast_truncation} implies that
$\dot p_X$ is bounded and that the denominator in
\eqref{eq:mean_reversion_condition} is bounded below by
$\varepsilon_{\mathrm{tr}}$. Hence, $\theta(\cdot)$ is continuous and bounded on
$[0,T]$.}
The diffusion term in \eqref{eq:wind_sde} qualitatively captures the fact that forecast
uncertainty is smaller near physical capacity limits, since we cannot
exceed them. The presence of $\theta_0$ in the diffusion form is to
follow the standard form of Jacobi-type SDEs, we refer to
\cite{iacus2008simulation} for more details.

Figure~\ref{fig:wind_paths} illustrates the production model on the
Amprion control zone (Germany) for 2024-04-11. The day-ahead forecast
$p_X(t)$, the realized production, and a sample of simulated paths from the calibrated Jacobi diffusion~\eqref{eq:wind_sde}. Model parameters are as in
Table~\ref{tab:parameters}. Further details on the data 
 are provided in Section~\ref{sec:data_description_parameters}.

\begin{figure}[!ht]
	\centering
	\includegraphics[width=0.55\linewidth]{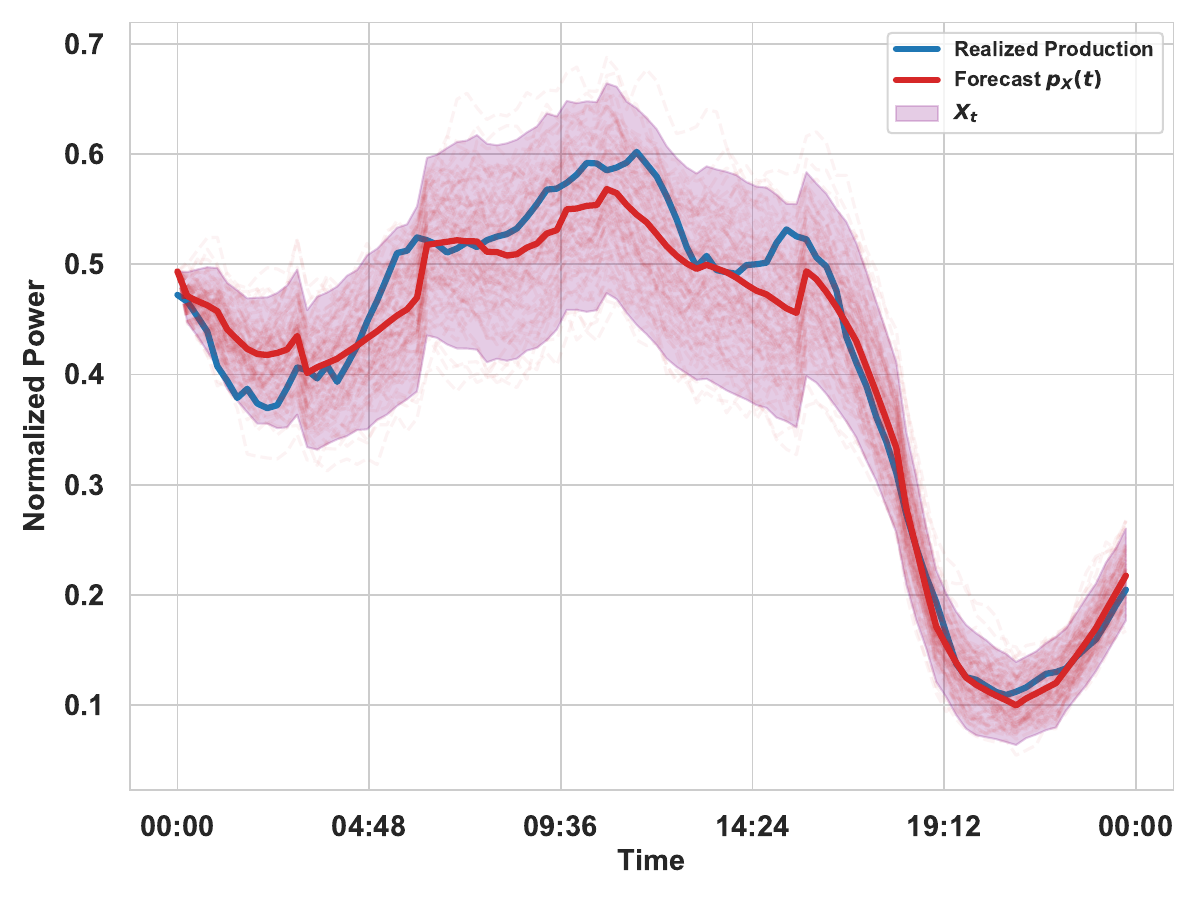}
	\caption{Normalized wind power production on 2024-04-11 in the
		Amprion control zone, Germany. Day-ahead forecast $p_X(t)$,
		realized production, and simulated paths from the calibrated
		Jacobi diffusion~\eqref{eq:wind_sde}. Parameters as in
		Table~\ref{tab:parameters}.}
	\label{fig:wind_paths}
\end{figure}

\begin{remark}
	A key distinction from related works \cite{aid2016optimal,tan2018optimal,glas2020intraday} is that these studies model the production  process \(X_t\) as the evolution of successive forecast updates over time. In contrast, we model the actual power production and deliberately exclude forecast revisions within the trading horizon. While power producers often receive updated forecasts in practice, incorporating such updates would require the stochastic optimal control problem to be re-solved when new information becomes available. In \cite{tan2018optimal}, this formalism is mathematically consistent because analytical solutions are obtained under filtrations that explicitly account for the arrival of updated forecasts, and the control is adapted to that enlarged information flow. In a fully numerical setting, however, this approach cannot be implemented in a single backward dynamic programming solve because each forecast update changes the drift of the SDE, hence the infinitesimal generator and the associated HJB equation. A consistent numerical treatment would thus necessitate a rolling-horizon procedure in which the control problem is re-initialized and re-solved upon every forecast update, which is left for future investigation.
	
\end{remark}

\subsection{Data-Driven Intraday Price Dynamics}
\label{sec:price_dynamics}
Similarly to the wind power dynamics described in Section \ref{sec:wind_dynamics}, we model the intraday quoted  price process $Y = (Y_t)_{t \in [0,T]}$ as a real-valued stochastic process that mean-reverts around a deterministic forecast trajectory $(p_Y(t))_{t \in [0,T_{gc}]} \subset \mathbb{R}$. We refer to Section \ref{sec:data_description_parameters} for details on the construction of $p_Y(t)$. The variable $Y_t$ denotes the quoted forward price at time $t$ for delivery of one unit of electricity during the interval $[T_{gc} + h,\; T_{gc} + h + L]$\footnote{For notational simplicity, we omit the explicit dependence of $Y_t$ on $(T,h,L)$, since the control problem is formulated for a fixed delivery product.}, expressed in [EUR/MWh]. This formulation is consistent with the empirical findings of \cite{chatziandreou2026optimal}, who report that quoted prices corresponding to different delivery periods follow distinct stochastic dynamics, motivating the separate modeling of each delivery product.

Empirical analyses of European intraday markets indicate that price returns display both heavy tails and discrete jumps \cite{han2022complexity}. These abrupt movements are typically linked to forecast errors in renewable generation, liquidity shocks, or operational incidents such as grid contingencies and plant outages. Moreover, the jump distribution is empirically asymmetric, for instance, upward spikes may occur less frequently but can be of larger magnitude than downward ones \cite{kostrzewski2021impact}. To reproduce these stylized features within a tractable framework, we enrich the mean-reverting diffusion model with a compensated compound Poisson process whose jump sizes follow the asymmetric double-exponential law, proposed in financial mathematics literature by \cite{kou2002jump}. This specification jointly captures leptokurtic price dynamics and the asymmetry observed in electricity price data.

{Similarly to the wind production process, the intraday price $Y_t$
is assumed to mean-revert toward the forecast curve $p_Y(t)$, obtained
using a shape-preserving cubic Hermite interpolation. Hence,
$p_Y\in C^1([0,T_{\mathrm{gc}}])$, and the derivative $\dot p_Y$ is
obtained directly from the interpolant.} The dynamics combine (i) a mean-reverting drift term, (ii) a constant volatility term as in \cite{tan2018optimal,aid2016optimal,glas2020intraday}, and (iii) the compensated jump component described above, yielding
\begin{equation}
	\label{eq:price_sde}
	\left\{
	\begin{array}{l}
		\mathrm{d}Y_t = \big(\dot p_Y(t) - \kappa(Y_t - p_Y(t)) \big)\,\mathrm{d}t 
		+ \sigma\,\mathrm{d}B_t^Y + \mathrm{d}J_t, \quad t \in [0,T_{gc}],\\[4pt]
		Y_t = Y_{T_{gc}}, \quad t \in [T_{gc},\,T],\\[4pt]
		Y_0 = p_Y(0),
	\end{array}
	\right.
\end{equation}
where $\kappa > 0$ denotes the mean-reversion rate, $\sigma > 0$ the volatility, and $B^Y=(B_t^Y)_{t\in[0,T]}$ a standard $\mathbb{F}$-Brownian motion, possibly correlated with $B^X$, with $\mathrm{d}\langle B^X, B^Y\rangle_t = \rho\,\mathrm{d}t$ for some $\rho\in[-1,1]$, typically $\rho<0$ \cite{tan2018optimal}. The negative correlation reflects the merit-order effect whereby higher renewable output decreases prices.

\begin{remark}
	The process $Y_t$ has a physical interpretation only over the trading horizon $[0,T_{gc}]$, during which the contract for delivery on $[T_{gc} + h,\; T_{gc} + h + L]$ remains tradable. Beyond $T_{gc}$, the dynamics in \eqref{eq:price_sde} are extended  to ensure a well-posed stochastic control formulation (see Section~\ref{sec:SOC_formulation}), for $t \geq T_{gc}$, $Y_t$ neither influences the control nor enters the objective functional (see Equation \eqref{eq:J_extended}).
\end{remark}

The jump component in \eqref{eq:price_sde} is defined as a compensated compound Poisson process,
\begin{equation}
	\label{eq:jump_process}
	J_t = \sum_{i=1}^{{\mathcal N_t}} Z_i - \lambda\,\mathbb{E}[Z]\,t,
\end{equation}
where ${\mathcal N_t}$ is a Poisson process with intensity $\lambda>0$, and $\{Z_i\}_{i\ge1}$ are i.i.d.\ jump sizes with density
$$
f_Z(z)=
\begin{cases}
	p_+\,\eta_+ e^{-\eta_+ z}, & z>0,\\[4pt]
	p_-\,\eta_- e^{\eta_- z}, & z<0,
\end{cases}
\qquad p_+ + p_- = 1,\quad \eta_+,\eta_- > 0.
$$
The associated compensated Poisson random measure $\tilde{N} (\cdot,\cdot)$ is related to the Poisson random measure $N(\cdot,\cdot)$ via the Lévy measure $\nu(\cdot)$ as follows
\begin{equation}
	\label{eq:compensated_measure}
	\widetilde N(\mathrm{d}t,\mathrm{d}z)
	= N(\mathrm{d}t,\mathrm{d}z) - \nu(\mathrm{d}z)\,\mathrm{d}t,
	\qquad \nu(\mathrm{d}z) = \lambda f_Z(z)\,\mathrm{d}z.
\end{equation}
This construction ensures the forecast-tracking condition i.e.,  $\mathbb{E}[Y_t]=p_Y(t)$ for all $t\in[0,T_{gc}]$ (see \cite{caballero2021quantifying}), while reproducing the empirical jump characteristics of intraday electricity prices reported in \cite{han2022complexity,kostrzewski2021impact}.

Figure~\ref{fig:price_paths} illustrates the price model for the same
day (2024-04-11): the day-ahead hourly price, the smoothed forecast
$p_Y(t)$, and simulated paths from the 
jump-diffusion SDE in \eqref{eq:price_sde}. The construction of the deterministic forecast trajectory
$p_Y(t)$ from the available market data is described in detail in
Section~\ref{sec:data_description_parameters}, and the used model parameters
are given in Table~\ref{tab:parameters}.
\FloatBarrier
\begin{figure}[!ht]
	\centering
	\includegraphics[width=0.55\linewidth]{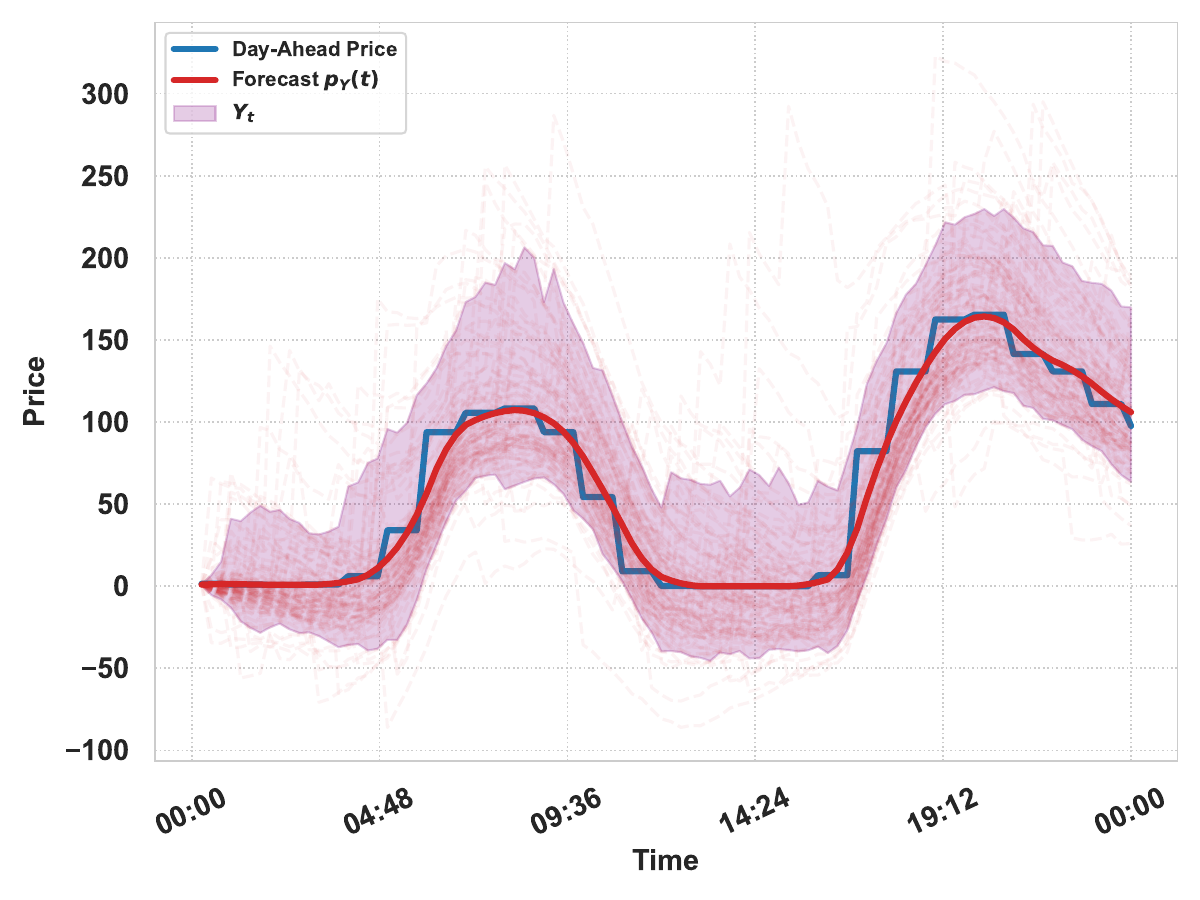}
	\caption{Day-ahead price, smoothed forecast $p_Y(t)$, and simulated
		paths from the calibrated jump-diffusion~\eqref{eq:price_sde}
		on 2024-04-11 in the Amprion control zone, Germany. Parameters
		as in Table~\ref{tab:parameters}.}
	\label{fig:price_paths}
\end{figure}
\FloatBarrier
\begin{remark}[Market Structure and Trading Horizon of the Price Process \(Y_t\)]
	In this work, prices in the intraday continuous market arise from a limit order book (LOB) containing multiple bids and asks for the same delivery period rather than a single quoted price \cite{graf2021modeling}. Hence, the stochastic price process \((Y_t)_{t\in[0,T_{gc}]}\) in Equation \eqref{eq:price_sde} is interpreted as the mid-price (the average of best bid and best ask), following  the standard  LOB modeling as in \cite{glas2020intraday}. Moreover,  we note that $Y_t$ models the evolution of the price for one fixed delivery product. In all numerical experiments, we use $h=5$ minutes for the German product studied here, matching the local EPEX SPOT rule set documented for the relevant product and data period in~\cite{epex_spot_trading}.  However, the proposed framework accommodates other product specifications by simply adjusting the values of $h$ and $L$.

\end{remark}

\paragraph{Notation.}
For the remainder of the paper, we define the following notation for the drift and diffusion coefficients of the aforementioned SDEs\footnote{We remark that $\mu_Y$ includes the compensation of the jump component.}:

$$
\mu_X(t,x)=\dot p_X(t)-\theta(t)\big(x-p_X(t)\big),\qquad
\sigma_X(x)=\sqrt{2\alpha\theta_0\,x(1-x)},
$$
$$
\mu_Y(t,y)=\dot p_Y(t)-\kappa\big(y-p_Y(t)\big)-\lambda\,\mathbb E[Z],\qquad
\sigma_Y=\sigma.
$$

\subsection{Inventory Dynamics}
We denote by $Q=\left(Q_t\right)_{t \in[0, T]}$ the inventory process, representing the cumulative amount of energy sold or purchased in [MWh] up to time $t \leq T_{\mathrm{gc}}$ for delivery during $[T_{\mathrm{gc}}+h,\, T_{\mathrm{gc}}+h+L]$. A positive value $Q_t>0$ indicates a net selling position, while $Q_t<0$ corresponds to a net buying position.

The inventory evolves according to
$$
\left\{\begin{array}{l}
	\mathrm{d} Q_t=\psi_t\,\mathrm{d} t, \quad t \in[0, T], \\
	Q_0=0 ,
\end{array}\right.
$$

{where the trading rate $\psi$, measured in $[\mathrm{MWh}/\mathrm{h}]$,
belongs to set of admissible controls, $\mathcal{A}_{0,T}$, which
 we define for $t\in[0,T]$,
{
\begin{equation}
\label{eq:admissible_control_set}
\mathcal{A}_{t,T}
:=
\left\{
\psi=(\psi_s)_{s\in[t,T]}:
\psi \text{ is } \mathbb{F}\text{-progressively measurable and takes values in }
[\psi_{\min},\psi_{\max}]
\right\}.
\end{equation}
}
The bounds $\psi_{\min},\psi_{\max}\in\mathbb{R}$ are discussed in
Section~\ref{sec:domain_truncation}. Since trading stops at
$T_{\mathrm{gc}}$, we impose $\psi_t=0$ for all
$t\in(T_{\mathrm{gc}},T]$, $\mathbb{P}$-a.s. Consequently, the inventory
remains constant thereafter, i.e.,
$Q_t=Q_{T_{\mathrm{gc}}}$ for all $t\in[T_{\mathrm{gc}},T]$.}

\section{Stochastic Optimal Control Formulation of the Trading Problem}
\label{sec:SOC_formulation}

\subsection{Cost Functional and Value Function}

In most existing works \cite{aid2016optimal,tan2018optimal,glas2020intraday}, the terminal imbalance penalty is applied in a simplified way at a single delivery time based on the instantaneous power production. However, in practice TSOs impose imbalance charges on the energy delivered over an imbalance settlement period (ISP), typically 15 or 30 minutes on European markets \cite{brusco2025rolling}. To reflect this operational feature, we propose a path-dependent imbalance penalty based on cumulative metered energy over $[T_{\mathrm{gc}}+h, T_{\mathrm{gc}}+h + L]$.

\paragraph{Standard formulation.}
The standard way to define the electricity trading problem  \cite{aid2016optimal,tan2018optimal,glas2020intraday} is to include the gains/losses from trading activity, liquidity/execution costs accumulated on this window and a terminal imbalance penalty at a single time point at gate closure $T_{\mathrm{gc}}$. {For an admissible control $\psi\in\mathcal{A}_{t,T}$} and the state vector $(Q_t,X_t,Y_t)=(q,x,y)$, the cost functional is given by
\begin{align}
	\label{eq:J_classic}
	J^{\mathrm{std}}(t,x,y,q;\psi)
	:=
	\mathbb{E}\Big[
	-\!\int_{t}^{T_{gc}}\psi_s\,Y_s\,\mathrm{d}s
	+\frac{\gamma}{2}\!\int_{t}^{T_{gc}}\psi_s^{2}\,\mathrm{d}s
	+\,g\!\big(Q_{T_{gc}}-P_{\max} X_{T_{gc}}\big)
	\ \Big|\ Q_t=q,\ X_t=x,\ Y_t=y\Big],
\end{align}
where:
\begin{itemize}
	\item $-\int_{t}^{T_{gc}}\psi_s\,Y_s\,\mathrm{d}s$ represents the cumulative trading cash flow over $[t,T_{gc}]$, with the sign convention $\psi_s>0$ for net sales and $\psi_s<0$ for net purchases.
	\item The  term $\frac{\gamma}{2}\int_{t}^{T_{gc}}\psi_s^{2}\,\mathrm{d}s$ models the temporary market impact, with $\gamma>0$ scaling liquidity frictions and penalizing aggressive trading rates. This is a modeling choice aligned with \cite{tan2018optimal}, rather than incorporating the control in the price SDE  \eqref{eq:price_sde} as in \cite{aid2016optimal,glas2020intraday}.
\item The terminal penalty term
$g\!\left(Q_{T_{\mathrm{gc}}}-P_{\max}X_{T_{\mathrm{gc}}}\right)$
is an imbalance penalty at gate closure, comparing the inventory fixed at gate closure
$Q_{T_{\mathrm{gc}}}$ with instantaneous normalized production
$X_{T_{\mathrm{gc}}}$ scaled by maximum capacity $P_{\max}$.
A typical choice for $g$ is a quadratic function. For instance,
\cite{tan2018optimal,aid2016optimal} consider a symmetric quadratic penalty
$g(\xi)=\frac{\beta}{2}\xi^2$ with $\beta>0$, $\xi\in\mathbb{R}$,
penalizing over- and underdelivery equally. This choice is driven by
analytical tractability: the quadratic structure ensures that the value function
is quadratic in the state variables, yielding a linear optimal control via a
Riccati ansatz and hence a closed-form solution. However, the quadratic penalty
has no clear empirical basis in electricity markets, where TSOs levy imbalance
charges linearly per unit of energy deviation. By contrast,
\cite{glas2020intraday} construct a data-driven penalty calibrated from
limit-order-book data. In this work, we adopt the two-sided linear penalty
$g(\xi)=\beta|\xi|$, which penalizes both shortfall and surplus proportionally
to the absolute deviation. This provides a stylized representation of imbalance
settlement, capturing its economic effect through a symmetric linear
deviation penalty. Since the resulting HJB equation is solved numerically, no
restriction to quadratic penalties is required. In practice, $\beta$ varies
across delivery periods depending on overall grid balancing costs.
\end{itemize}

\begin{remark}[Initial Inventory]
	In practice, a pre-committed day-ahead position may serve as the initial inventory, that is, $Q_0 = q_0 \neq 0$. This setting can transform the problem into an intraday adjustment of a previously optimized day-ahead bidding position, as in \cite{helgren2024feature}.
\end{remark}

\paragraph{Proposed extended formulation.}
In reality, imbalance settlement depends on  energy, that is to say, the cumulative energy delivered on $[T-L,T]$. Therefore, we replace the pointwise penalty by an energy-based penalty and augment the state space of the optimal control problem as follows
\begin{align}
	\label{eq:J_extended}
	J(t,x,y,q,m;\psi)
	:=
	\mathbb{E}\Big[
	&-\!\int_{t}^{T_{gc}}\psi_s Y_s\,\mathrm{d}s
	+\frac{\gamma}{2}\!\int_{t}^{ T_{gc}}\psi_s^2\,\mathrm{d}s \notag\\
	&\hspace{1.8em}
	+\,g\!\big(Q_{T}-P_{\max} \underbrace{\int_{T-L}^{T}X_s\,\mathrm{d}s }_{:=M_T}\big)
	\ \Big|\ X_t=x,\ Y_t=y,\ Q_t=q,\ M_t=m\Big],
\end{align}
where $\psi_s\equiv 0$ for $s>T_{gc}$, hence $Q_T=Q_{T_{gc}}$, and the state space is augmented compared to the standard formulation defining the metered energy during delivery by 
$$
\mathrm{d}M_t =
\begin{cases}
	0, & t \in [0,\,T-L),\\[2pt]
	X_t\,\mathrm{d}t, & t \in [T-L,\,T],
\end{cases}
\qquad M_0=0.
$$

The proposed extended formulation can thus be summarized in three stages as follows:
\begin{equation}
\label{eq:three_stage_schematic}
(X_t,\, Y_t,\, Q_t)
\xrightarrow[\text{Stage III}]{[0,\, T_{\mathrm{gc}}]}
(X_t,\, Q_{T_{\mathrm{gc}}})
\xrightarrow[\text{Stage II}]{[T_{\mathrm{gc}},\, T_{\mathrm{gc}}+h]}
(X_t,\, M_t)
\xrightarrow[\text{Stage I}]{[T-L,\, T]}
g(Q_{T_{\mathrm{gc}}} - P_{\max} M_T).
\end{equation}
The equations defining each stage are derived in Section~\ref{sec:three_stage_definition}.

\paragraph{Value function.}
The value function is defined by \eqref{eq:value_function_def}
\begin{equation}
	\label{eq:value_function_def}
	v(t,x,y,q,m):=\inf_{\psi\in\mathcal{A}_{t, T}} J(t,x,y,q,m;\psi),
\end{equation}

\subsection{Three-Stage KBE-HJB P(I)DEs }
\label{sec:three_stage_definition}

By the dynamic programming principle, the value function on the state space \((x,y,q,m)\) satisfies a stage-wise system of sequential backward HJB/Kolmogorov PIDEs on the intervals \([T\!-\!L,T]\), \((T_{gc},T\!-\!L]\), and \([0,T_{gc}]\).  For the remainder of this paper, we define the following notation for the partial derivatives of a function
$$\quad v_x:=\frac{\partial v}{\partial x}, \quad v_{x x}:=\frac{\partial^2 v}{\partial x^2}, \quad v_{xy}:=\frac{\partial^2 v}{\partial x\,\partial y}.$$

\paragraph{Stage I (delivery of energy).}
On this interval the control is inactive ($\psi\equiv0$) hence the inventory is known $Q_t = Q_{T_{gc}} = q $ for all $t \in [T-L,T]$,  by \eqref{eq:price_sde}, the price process is constant. Consequently, the cost functional can be written as follows for  $t\in[T-L,T]$
\begin{align}
	\label{eq:J_stageI}
	J^I(t,x,m;\psi)
	:=
	\mathbb{E}\Big[
	\,g\!\big(q-P_{\max} M_T\big)
	\ \Big|\ \ X_t=x,\ M_t=m\Big],
\end{align}
Hence the value function depends only on $(x,m)$ and solves a Kolmogorov backward PDE, parametrized in $q$ via the terminal imbalance condition, which is given as follows
\begin{equation}\label{eq:stageI_PDE}
	\begin{aligned}
		&	v_t^{\mathrm{I}}
		+ \mu_X(t,x)\,v_x^{\mathrm{I}}
		+ \tfrac12\,\sigma_X^2(x)\,v_{xx}^{\mathrm{I}}
		+ x\,v_m^{\mathrm{I}}
		= 0, 
		\qquad t\in[T-L,T),\\[4pt]
		&	v^{\mathrm{I}}(T,x,m;q)=g\big(q-P_{\max}m\big).
	\end{aligned}
\end{equation}

\paragraph{Stage II (lead time).}
During the lead time period, the control is inactive ($\psi\equiv0$), the metered energy state is identically zero ($M_t = M_{T-L}\equiv0$), and by \eqref{eq:price_sde} the price process $Y_t$ is constant ($Y_t = Y_{T_{gc}} \equiv y$), and the inventory is fixed ($Q_s\equiv q$). Hence the cost functional can be written using the tower property as follows for $t\in(T_{gc},T_{gc} + h)$
\begin{align*}
	J^{\mathrm{II}}(t,x;\,q)
	&= \mathbb{E}\!\left[
	g\!\big(Q_{T_{gc}}-P_{\max}M_T\big)
	\,\middle|\,X_t=x
	\right] \\[4pt]
	&= \mathbb{E}\!\left[
	\mathbb{E}\!\left[
	g\!\big(Q_{T_{gc}}-P_{\max}M_T\big)
	\,\middle|\,\mathcal F_{T-L}
	\right]
	\,\middle|\,X_t=x
	\right] \\[4pt]
	&= \mathbb{E}\!\left[
	v^{\mathrm I}(T\!-\!L,\,X_{T-L},\,0;\,q)
	\,\middle|\,X_t=x
	\right].
\end{align*}

Consequently, the value function  depends only on the variable $x$, but remains parametrized in the variable $q$ through the terminal condition of the stage I. Hence,  $v^{\mathrm{II}}$ is the solution to the following Kolmogorov backward PDE  
\begin{equation}\label{eq:stageII_KBE_corrected}
	\begin{aligned}
		& v_t^{\mathrm{II}}
		+ \mu_X(t,x)\,v_x^{\mathrm{II}}
		+ \tfrac12\,\sigma_X^2(x)\,v_{xx}^{\mathrm{II}}
		= 0, \qquad t\in(T_{gc},T_{gc} + h),\\[4pt]
		& v^{\mathrm{II}}(T_{gc} + h,x;\,q)=v^{\mathrm{I}}(T_{gc} + h,x,0;\,q).
	\end{aligned}
\end{equation}

\paragraph{Stage III (trading window).}
During trading stage, the state vector is $(x,y,q)$, and using the tower property, the cost functional can be written as follows for  $t\in[0,T_{gc}]$
\begin{align*}
	J^{\mathrm{III}}(t,x,y,q;\psi)
	&:= \mathbb{E}\!\left[
	-\!\int_{t}^{T_{gc}}\!\psi_s Y_s\,\mathrm{d}s
	+ \frac{\gamma}{2}\!\int_{t}^{T_{gc}}\!\psi_s^2\,\mathrm{d}s
	+ v^{\mathrm{II}}(T_{gc}, X_{T_{gc}};\,Q_{T_{gc}})
	\,\middle|\, X_t=x,\ Y_t=y,\ Q_t=q
	\right]
\end{align*}

Hence, the corresponding HJB PIDE for $v^{\mathrm{III}}(t,x,y,q)$ is given by
\begin{equation}\label{eq:stageIII_HJB_corrected}
	\begin{aligned}
		& v_t^{\mathrm{III}}
		+ \mu_X(t,x)\,v_x^{\mathrm{III}}
		+ \mu_Y(t,y)\,v_y^{\mathrm{III}}
		+ \tfrac12\,\sigma_X^2(x)\,v_{xx}^{\mathrm{III}}
		+ \rho\,\sigma_X(x)\,\sigma_Y\,v_{xy}^{\mathrm{III}}
		+ \tfrac12\,\sigma_Y^2\,v_{yy}^{\mathrm{III}} \\
		& 	+ H_q\!\big(y, v_q^{\mathrm{III}}\big) +\, I_y\!\big[v^{\mathrm{III}}\big](t,x,y,q)
		\;=\; 0, \qquad t\in[0,T_{gc}), \\[4pt]
		& v^{\mathrm{III}}(T_{gc},x,y, q) \;=\; v^{\mathrm{II}}(T_{gc},x;q),
	\end{aligned}
\end{equation}
where the nonlocal jump operator acting on the price coordinate is defined as

\begin{equation}
\label{eq:jump_operator_def}
    I_y[v^{\mathrm{III}}](t,x,y,q):=\int_{\mathbb{R}} \big(\ v^{\mathrm{III}} (t,x,y+\delta,q)-v^{\mathrm{III}}(t,x,y,q)\big)\,\lambda\,f_Z(\delta)\,\mathrm{d}\delta.
\end{equation}
and where the control-dependent part of the Hamiltonian is given by:
\begin{equation}
	\label{eq:control_Hamiltonian}
	H_q(y,p):=\inf_{\psi\in[\psi_{\min},\psi_{\max}]}\Big\{-\psi\,y+\tfrac{\gamma}{2}\psi^2+\psi\,p\Big\}.
\end{equation}
The unconstrained minimizer of \eqref{eq:control_Hamiltonian}  is given by
\begin{equation}
	\label{eq:unconstrained_minimizer}
	\psi^{0}(y,p):=\frac{y-p}{\gamma},
\end{equation}
and the projection of \eqref{eq:unconstrained_minimizer} onto $[\psi_{\min},\psi_{\max}]$ can be written as
\[
\psi^{*}(y,p) := \Pi_{[\psi_{\min},\psi_{\max}]}\!\Big( 	\psi^{0}(y,p) \Big) 
=\min\!\big\{\psi_{\max},\max\{\psi_{\min},\,\psi^{0}(y,p)\}\big\}.
\]
Hence, the Hamiltonian in \eqref{eq:control_Hamiltonian} admits an explicit representation, and $H_q$ reads as:
\begin{equation}
	\label{eq:hamiltoniain_piecewise}
	H_q(y,p)=
	\begin{cases}
		-\dfrac{(y-p)^2}{2\gamma}, & \text{if }\ \psi_{\min}\le \dfrac{y-p}{\gamma}\le \psi_{\max},\\[8pt]
		\dfrac{\gamma}{2}\,\psi_{\min}^2+\psi_{\min}\,(p-y), & \text{if }\ \dfrac{y-p}{\gamma}<\psi_{\min},\\[8pt]
		\dfrac{\gamma}{2}\,\psi_{\max}^2+\psi_{\max}\,(p-y), & \text{if }\ \dfrac{y-p}{\gamma}>\psi_{\max}.
	\end{cases}
\end{equation}
The global value function is defined piecewise across the three stages of the trading problem by
$$
v(t,x,y,q,m)=
\begin{cases}
	v^{I}(t,x,m;q), & t\in[T_{\mathrm{gc}}+h,\, T],\\[4pt]
	v^{II}(t,x;q), & t\in[T_{\mathrm{gc}},\, T_{\mathrm{gc}}+h],\\[4pt]
	v^{III}(t,x,y,q), & t\in[0,\, T_{\mathrm{gc}}].
\end{cases}
$$


\subsection{Well-Posedness of the Regularized Stage~III HJB-PIDE}
\label{sec:well_posedness}

In this section we study the regularized Stage~III problem obtained by
replacing the Jacobi diffusion coefficient $\sigma_X$ with the
regularized coefficient $\sigma_X^\varepsilon$ defined below. The
motivation for this modification is that
$$
\sigma_X(x)=\sqrt{2\alpha\theta_0\,x(1-x)}
$$
is not globally Lipschitz on $[0,1]$. Consequently, the original Stage~III
problem does not fit directly into the viscosity solution framework of
\cite{pham1998optimal}. The regularization restores global Lipschitz
continuity of the diffusion coefficient and allows us to apply that
framework to the regularized problem. The regularized coefficient
$\sigma_X^\varepsilon$ is defined by
\begin{equation}
  \label{eq:regularized_diffusion}
  \sigma_X^\varepsilon(x)
  :=
  \begin{cases}
    \dfrac{\sigma_X(\varepsilon)}{\varepsilon}\,x,
      & 0\le x<\varepsilon,\\[6pt]
    \sigma_X(x),
      & \varepsilon\le x\le 1-\varepsilon,\\[6pt]
    \dfrac{\sigma_X(1-\varepsilon)}{\varepsilon}\,(1-x),
      & 1-\varepsilon<x\le 1,
  \end{cases}
\end{equation}
{for a fixed $\varepsilon\in(0,1/2)$.} By construction, $\sigma_X^\varepsilon$
is continuous on $[0,1]$, vanishes at $x\in\{0,1\}$, and is equal to
$\sigma_X$ on $[\varepsilon,1-\varepsilon]$. Moreover,
$\sigma_X^\varepsilon$ is globally Lipschitz on $[0,1]$. In the regularized setting, we denote by
$v^{\mathrm{I},\varepsilon}$ and $v^{\mathrm{II},\varepsilon}$ the
Stage I and Stage II value functions obtained by replacing
$\sigma_X$ with $\sigma_X^\varepsilon$ in the Stage I and Stage II
Kolmogorov backward equations \eqref{eq:stageI_PDE} and
\eqref{eq:stageII_KBE_corrected}, respectively, and by
$v^{\mathrm{III},\varepsilon}$ the value function of the corresponding
regularized Stage III control problem. In particular, the regularized
Stage III problem is posed with terminal datum
$v^{\mathrm{II},\varepsilon}(T_{\mathrm{gc}},x;\,q)$. The numerical solver is introduced in Section \ref{sec:numerical_methodology}.
{Illustrations of the regularization are provided in
Appendix~\ref{app:regularization}.}

\begin{assumption}[Lipschitz continuity of coefficients]
\label{ass:lipschitz}
There exists $K_1 > 0$ such that for all $t \in [0, T_{\mathrm{gc}}]$,
$x, x' \in [0,1]$, and $y, y' \in \mathbb{R}$:
\begin{enumerate}[label=\textup{(\roman*)}, ref=\ref*{ass:lipschitz}(\roman*)]
\item\label{ass:lipschitz_muX}
$|\mu_X(t,x) - \mu_X(t,x')| \le K_1\,|x - x'|$;
\item\label{ass:lipschitz_muY}
$|\mu_Y(t,y) - \mu_Y(t,y')| \le K_1\,|y - y'|$;
\item\label{ass:lipschitz_sigmaX}
$|\sigma_X^\varepsilon(x) - \sigma_X^\varepsilon(x')| \le K_1\,|x - x'|$.
\end{enumerate}
\end{assumption}

\begin{assumption}[Continuity of coefficients]
\label{ass:continuity}
The functions $\mu_X(t,x)$, $\mu_Y(t,y)$, $\sigma_X^\varepsilon(x)$
are continuous in their arguments.
\end{assumption}

\begin{assumption}[Jump amplitude and L\'evy measure]
\label{ass:levy}
Let $\boldsymbol{\xi} = (x, y, q) \in [0, 1] \times \mathbb{R}^2$  and $\psi \in \Psi = [\psi_{\min}, \psi_{\max}]$.
\begin{enumerate}[label=\textup{(\roman*)}, ref=\ref*{ass:levy}(\roman*)]
\item\label{ass:levy_amplitude}
The jump amplitude $\Gamma(z) := (0, z, 0)^\top$ satisfies
$|\Gamma(z)| = |z|$ and does not depend on $(t, \boldsymbol{\xi}, \psi)$.
\item\label{ass:levy_measure}
The L\'evy measure $\nu(\mathrm{d}z) = \lambda\,f_Z(z)\,\mathrm{d}z$
defined in~\eqref{eq:compensated_measure} has finite total mass and
finite second moment:
\begin{equation}\label{eq:levy_finiteness}
  \nu(\mathbb{R}) = \lambda < \infty,
  \qquad
  \int_{\mathbb{R}} |z|^2\,\nu(\mathrm{d}z)
  = \lambda\,\mathbb{E}[Z^2] < \infty.
\end{equation}
Moreover, with $\varrho(z) := |z|$,
\begin{equation}\label{eq:jump_amplitude_bound}
  |\Gamma(z)| \leq \varrho(z),
  \qquad
  \int_{\mathbb{R}} \varrho^2(z)\,\nu(\mathrm{d}z) < \infty.
\end{equation}
\end{enumerate}
\end{assumption}

\begin{assumption}[Lipschitz regularity of running and terminal costs]
\label{ass:cost_lip}
Let $\boldsymbol{\xi} = (x, y, q) \in [0, 1] \times \mathbb{R}^2$ and $\psi \in \Psi = [\psi_{\min}, \psi_{\max}]$.
There exists $K_2 > 0$ such that for all $t \in [0, T_{\mathrm{gc}}]$:
\begin{enumerate}[label=\textup{(\roman*)}, ref=\ref*{ass:cost_lip}(\roman*)]
\item\label{ass:cost_lip_running}
the running cost $\ell(y, \psi) := -\psi\,y + \tfrac{\gamma}{2}\psi^2$ satisfies
$$
|\ell(y, \psi) - \ell(y', \psi)| \le K_2\,|y - y'|;
$$
\item\label{ass:cost_lip_terminal}
the terminal cost $v^{\mathrm{II},\varepsilon}(T_{\mathrm{gc}}, x;\, q)$ satisfies
$$
|v^{\mathrm{II},\varepsilon}(T_{\mathrm{gc}}, x;\, q)
- v^{\mathrm{II},\varepsilon}(T_{\mathrm{gc}}, x';\, q')|
\le K_2\,(|x - x'| + |q - q'|).
$$
\end{enumerate}
\end{assumption}

\begin{lemma}[Lipschitz continuity of the regularized Stage~III terminal cost]
\label{lem:stageII_terminal_lipschitz}
{Let $\varepsilon\in(0,1/2)$ be fixed,} and assume that Assumptions~\ref{ass:lipschitz_muX} and~\ref{ass:lipschitz_sigmaX} hold.
Then there exists a constant $C>0$ such that
$$
|v^{\mathrm{II},\varepsilon}(T_{\mathrm{gc}},x;\,q)
-
v^{\mathrm{II},\varepsilon}(T_{\mathrm{gc}},x';\,q')|
\le
C\,(|x-x'|+|q-q'|)
$$
for all $x,x'\in[0,1]$ and all $q,q'\in\mathbb{R}$.
\end{lemma}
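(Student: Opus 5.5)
The plan is to use the probabilistic (Feynman--Kac) representation of the regularized Stage~II and Stage~I Kolmogorov equations, combined with a synchronous coupling of the production paths. Let $X^{t,x}$ denote the solution of \eqref{eq:wind_sde} with $\sigma_X$ replaced by $\sigma_X^\varepsilon$, started at $x$ at time $t$. Concatenating the two stages through the tower property, as in the derivation of $J^{\mathrm{II}}$, gives
\[
v^{\mathrm{II},\varepsilon}(T_{\mathrm{gc}},x;\,q)
=\mathbb{E}\Big[g\Big(q-P_{\max}\int_{T-L}^{T}X_s^{T_{\mathrm{gc}},x}\,\mathrm{d}s\Big)\Big],
\qquad g(\xi)=\beta|\xi|.
\]
Under Assumptions~\ref{ass:lipschitz_muX} and~\ref{ass:lipschitz_sigmaX}, this SDE has Lipschitz coefficients. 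Since $\mu_X$ is affine in $x$ with continuous bounded $\theta,p_X,\dot p_X$, it also has linear growth. Hence it admits a unique strong solution, and the representation is justified.

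\textbf{Reduction to a moment bound.} Since $g$ is $\beta$-Lipschitz, for any $x,x'$ and $q,q'$,
\[
|v^{\mathrm{II},\varepsilon}(T_{\mathrm{gc}},x;q)-v^{\mathrm{II},\varepsilon}(T_{\mathrm{gc}},x';q')|
\le \beta|q-q'| + \beta P_{\max}\int_{T-L}^{T}\mathbb{E}\big|X_s^{T_{\mathrm{gc}},x}-X_s^{T_{\mathrm{gc}},x'}\big|\,\mathrm{d}s .
\]
Here both processes are driven by the same Brownian motion $B^X$. The $q$-dependence is therefore immediate, and everything reduces to an $L^1$ (or $L^2$) stability estimate in the initial condition.

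\textbf{Stability estimate.} Set $\Delta_s=X_s^{T_{\mathrm{gc}},x}-X_s^{T_{\mathrm{gc}},x'}$. Itô's formula for $\Delta_s^2$, the Lipschitz bounds with constant $K_1$, and the fact that the stochastic integral is a true martingale (the integrands are bounded, since the diffusion stays in $[0,1]$ and $\sigma_X^\varepsilon$ is bounded) give
\[
\mathbb{E}[\Delta_s^2]\le |x-x'|^2+(2K_1+K_1^2)\int_{T_{\mathrm{gc}}}^s\mathbb{E}[\Delta_r^2]\,\mathrm{d}r .
\]
Grönwall's inequality then yields $\mathbb{E}[\Delta_s^2]\le e^{(2K_1+K_1^2)(s-T_{\mathrm{gc}})}|x-x'|^2$. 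By Jensen's inequality, $\mathbb{E}|\Delta_s|\le e^{(K_1+K_1^2/2)(T-T_{\mathrm{gc}})}|x-x'|$. Integrating over the window of length $L$ gives the claim with
\[
C=\beta\max\{1,\,P_{\max}L\,e^{(K_1+K_1^2/2)(h+L)}\}.
\]

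\textbf{Main obstacle.} The delicate point is identifying the PDE solutions $v^{\mathrm{I},\varepsilon}$ and $v^{\mathrm{II},\varepsilon}$ with these expectations. This needs the regularized process to remain in $[0,1]$, which follows from the drift condition \eqref{eq:mean_reversion_condition} together with $\sigma_X^\varepsilon(0)=\sigma_X^\varepsilon(1)=0$, and it needs uniqueness for the degenerate Kolmogorov equations at $x\in\{0,1\}$.

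I would proceed by defining $v^{\mathrm{I},\varepsilon}$ and $v^{\mathrm{II},\varepsilon}$ through their stochastic representations, which is consistent with how Stages I and II were derived via the tower property. Alternatively, I would invoke uniqueness of (viscosity) solutions with linear growth. In either case the Lipschitz bound follows from the coupling argument above.
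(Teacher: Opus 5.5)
Your proof is correct, and it has the same structure as the paper's: the tower-property representation $v^{\mathrm{II},\varepsilon}(T_{\mathrm{gc}},x;q)=\mathbb{E}\bigl[g\bigl(q-P_{\max}\int_{T-L}^{T}X_s^{\varepsilon,x}\,\mathrm{d}s\bigr)\bigr]$, synchronous coupling, and the $\beta$-Lipschitz property of $g$. Together these reduce everything to bounding $\mathbb{E}|\Delta X_s|$.

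The difference lies in that last step. You use It\^o's formula for $\Delta_s^2$, then Gr\"onwall and Jensen. The paper instead uses the one-dimensional comparison theorem to get $\Delta X_s\ge 0$ a.s.\ for $x\ge x'$. It then exploits the affine drift, $\mu_X(r,X_r^{\varepsilon,x})-\mu_X(r,X_r^{\varepsilon,x'})=-\theta(r)\Delta X_r$. Taking expectations gives the exact identity $\mathbb{E}|\Delta X_s|=\mathbb{E}[\Delta X_s]=e^{-\int_{T_{\mathrm{gc}}}^{s}\theta(r)\,\mathrm{d}r}|x-x'|\le|x-x'|$, and hence $C=\max\{\beta P_{\max}L,\beta\}$.

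Your route is more generic. It uses only the Lipschitz constants, needs neither a comparison principle nor the affine structure of the drift, and would extend to multidimensional or nonlinear dynamics. The price is the constant: it depends on the Lipschitz constant of $\sigma_X^\varepsilon$, which is $\sqrt{2\alpha\theta_0(1-\varepsilon)/\varepsilon}$, so your $C$ grows like $\exp(c/\varepsilon)$. For fixed $\varepsilon$, as the lemma states, this is harmless. The paper's $L^1$-contraction bound, however, is uniform in $\varepsilon$. Since comparison only needs a Yamada--Watanabe modulus, essentially the same argument also works for the unregularized Jacobi coefficient.

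Your remarks on invariance of $[0,1]$ and on identifying $v^{\mathrm{I},\varepsilon},v^{\mathrm{II},\varepsilon}$ with their stochastic representations go at least as far as the paper, which uses that representation directly.
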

\begin{proof}
The proof is given in Appendix~\ref{app:proof_stageII_terminal_lipschitz}.
\end{proof}

\begin{theorem}[Viscosity characterization of the regularized problem]
\label{thm:viscosity_reg}
{Let $\varepsilon\in(0,1/2)$ be fixed.} Under
Assumptions~\ref{ass:lipschitz}-\ref{ass:cost_lip}, the value function
$v^{\mathrm{III},\varepsilon}$
of the regularized Stage III control problem is the unique continuous
viscosity solution, in the class of continuous functions with at most
linear growth that are uniformly continuous in $(x,y,q)$ uniformly in
$t$, of the HJB-PIDE
\begin{equation}\label{eq:HJB_reg}
\begin{aligned}
  &\partial_t v^{\mathrm{III},\varepsilon}
  + \mu_X(t,x)\,v^{\mathrm{III},\varepsilon}_x
  + \mu_Y(t,y)\,v^{\mathrm{III},\varepsilon}_y
  + \tfrac{1}{2}\bigl(\sigma_X^\varepsilon(x)\bigr)^2\,
    v^{\mathrm{III},\varepsilon}_{xx}
  + \rho\,\sigma_X^\varepsilon(x)\,\sigma\,v^{\mathrm{III},\varepsilon}_{xy}
  + \tfrac{1}{2}\sigma^2\,v^{\mathrm{III},\varepsilon}_{yy}\\[4pt]
  &\quad
  + H_q\!\bigl(y,\,v^{\mathrm{III},\varepsilon}_q\bigr)
  +\mathcal{I}_y[v^{\mathrm{III},\varepsilon}](t,x,y,q)
  = 0,
  \qquad (t,x,y,q)\in[0,T_{\mathrm{gc}})\times[0,1]
    \times\mathbb{R}^2,
  \\[6pt]
  &v^{\mathrm{III},\varepsilon}(T_{\mathrm{gc}},x,y,q)
  =v^{\mathrm{II},\varepsilon}(T_{\mathrm{gc}},x;\,q),
\end{aligned}
\end{equation}
where $H_q$ is defined in~\eqref{eq:control_Hamiltonian} and
$\mathcal{I}_y[\cdot]$ is the nonlocal operator defined
in~\eqref{eq:jump_operator_def}.
\end{theorem}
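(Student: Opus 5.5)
The plan is to cast the regularized Stage~III problem as a controlled jump-diffusion with Lipschitz data and then invoke the viscosity-solution theory of \cite{pham1998optimal} for existence, the dynamic programming principle and comparison. The state is $\boldsymbol{\xi}_s=(X_s,Y_s,Q_s)$ with drift $b(t,\boldsymbol{\xi},\psi)=(\mu_X(t,x),\mu_Y(t,y),\psi)^\top$ and diffusion matrix built from $\sigma_X^\varepsilon(x)$, $\sigma$ and the correlation $\rho$ (via a Cholesky factor of the correlation matrix). The jump amplitude is $\Gamma(z)=(0,z,0)^\top$, the running cost is $\ell(y,\psi)$ and the terminal cost is $v^{\mathrm{II},\varepsilon}(T_{\mathrm{gc}},x;q)$. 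The control set $\Psi=[\psi_{\min},\psi_{\max}]$ is compact.

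\textbf{Step 1: the hypotheses of Pham's framework.} Assumptions~\ref{ass:lipschitz} and~\ref{ass:continuity} give Lipschitz continuity of $b$ and of the diffusion matrix in $\boldsymbol{\xi}$, uniformly in $(t,\psi)$. The $q$-component of the drift is the bounded control itself, and the diffusion Cholesky factor is linear in $\sigma_X^\varepsilon$ and $\sigma$. Linear growth follows because $x\in[0,1]$ and $\mu_Y$ is affine in $y$. Assumption~\ref{ass:levy} supplies the square-integrable jump bound $|\Gamma(z)|\le\varrho(z)$. For the costs, $\ell$ is Lipschitz in $y$ uniformly over the compact $\Psi$ with $K_2=\max(|\psi_{\min}|,|\psi_{\max}|)$, and Lemma~\ref{lem:stageII_terminal_lipschitz} gives Assumption~\ref{ass:cost_lip_terminal}. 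Both costs have linear growth.

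\textbf{Step 2: invariance of $[0,1]$.} I would also check that the regularized $X$ stays in $[0,1]$, so that working on $[0,1]\times\mathbb{R}^2$ is consistent. The reason is that $\sigma_X^\varepsilon$ vanishes at $0$ and $1$. The inward-drift condition $\mu_X(t,0)\ge0$ and $\mu_X(t,1)\le0$ follows from the definition of $\theta(t)$ in \eqref{eq:mean_reversion_condition}: $\dot p_X+\theta p_X\ge0$ because $\theta\ge|\dot p_X|/p_X$, and the condition at $1$ holds symmetrically. A standard Feller/comparison argument for one-dimensional SDEs then gives invariance.

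\textbf{Step 3: the value function is a viscosity solution.} With the SDE well posed (Lipschitz coefficients with finite L\'evy measure), standard moment estimates give $\mathbb{E}\sup_s|\boldsymbol{\xi}_s|^2\le C(1+|\boldsymbol{\xi}|^2)$ and $\mathbb{E}\sup_s|\boldsymbol{\xi}_s-\boldsymbol{\xi}'_s|^2\le C|\boldsymbol{\xi}-\boldsymbol{\xi}'|^2$. Combining these with the Lipschitz costs shows that $v^{\mathrm{III},\varepsilon}$ has linear growth and is Lipschitz in $(x,y,q)$ uniformly in $t$, hence uniformly continuous. H\"older-$1/2$ continuity in $t$ follows from the DPP. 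Pham's theorem then yields that $v^{\mathrm{III},\varepsilon}$ satisfies the DPP and is a viscosity solution of \eqref{eq:HJB_reg}. Here the sup/inf over $\psi$ of the generator reduces exactly to $H_q(y,v_q)$, because $\psi$ enters only through $-\psi y+\tfrac{\gamma}{2}\psi^2+\psi v_q$. The nonlocal term is $\mathcal{I}_y$, since the compensator $-\lambda\mathbb{E}[Z]v_y$ is absorbed into $\mu_Y$ and the L\'evy measure is finite.

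\textbf{Step 4: uniqueness, the main obstacle.} Uniqueness rests on the comparison principle for semicontinuous sub- and supersolutions in the stated growth class. The difficulties are that the domain in $x$ is the closed interval $[0,1]$, which carries no boundary condition, and that the operator is nonlocal. I would first handle the boundary through the invariance in Step 2: the diffusion degenerates and the drift points inward, so the equation holds up to $x\in\{0,1\}$ in the viscosity sense without boundary data. This matches the setting of Pham's comparison, which is stated on the whole space. If that fails, the fallback is to extend the coefficients Lipschitz-continuously to $x\in\mathbb{R}$ while keeping $[0,1]$ invariant and apply the whole-space result directly. The comparison itself uses the usual doubling of variables with a penalization $\varepsilon'|\boldsymbol{\xi}|^2$ for linear growth. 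The nonlocal term is controlled because $\nu$ is finite and $\int|z|^2\nu(\mathrm{d}z)<\infty$, and the Hamiltonian is controlled because $H_q$ is Lipschitz in $(y,p)$ by the compactness of $\Psi$. Comparison gives $v^{\mathrm{III},\varepsilon}$ as the unique solution in the class.
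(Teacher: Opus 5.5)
Your proposal follows the paper's proof: verify the Lipschitz, growth and L\'evy-measure hypotheses (using Lemma~\ref{lem:stageII_terminal_lipschitz} for the terminal cost), obtain invariance of $[0,1]$ from the inward drift and the vanishing of $\sigma_X^\varepsilon$ at $x\in\{0,1\}$, and invoke the DPP, viscosity characterization and comparison theorem of \cite{pham1998optimal}. One correction: your Step~4 ``fallback'' is the paper's main route and is not optional, since Pham's results (including well-posedness of the SDE behind your moment estimates in Step~3) are stated on $\mathbb{R}^d$. The paper therefore extends $\mu_X$ and $v^{\mathrm{II},\varepsilon}(T_{\mathrm{gc}},\cdot\,;q)$ constantly and $\sigma_X^\varepsilon$ by zero outside $[0,1]$, checks the Lipschitz constants across the boundary, applies the whole-space theory, and restricts to $[0,1]\times\mathbb{R}^2$ via invariance; your primary closed-domain argument would instead require a separate comparison principle on $[0,1]\times\mathbb{R}^2$ that \cite{pham1998optimal} does not supply.
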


\begin{proof}
{The verification that Assumptions~3.2--3.5 are satisfied is provided in
Appendix~B. The proof of the theorem then follows from
\cite{pham1998optimal}.}
\end{proof}

{
We next establish convergence of the regularized value functions to
their counterparts under the original production dynamics \eqref{eq:wind_sde}. For
$0\leq t\leq s\leq T$ and $x\in[0,1]$, let $X_s^{t,x}$ and
$X_s^{\varepsilon,t,x}$ denote the original and regularized production
processes, respectively, driven by the same Brownian motion.

\begin{lemma}[Convergence of the regularized production process]
\label{lem:production_process_convergence}
Under~\eqref{eq:mean_reversion_condition}, we have
$$
\sup_{\substack{0\leq t\leq s\leq T\\x\in[0,1]}}
\mathbb{E}\left[
\left|X_s^{\varepsilon,t,x}-X_s^{t,x}\right|
\right]
\longrightarrow0
\qquad\text{as }\varepsilon\downarrow0.
$$
\end{lemma}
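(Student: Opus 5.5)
Since both processes share the drift $\mu_X$, the initial point and the Brownian motion, I would work with the difference $\Delta_s := X_s^{\varepsilon,t,x}-X_s^{t,x}$ and use the Yamada--Watanabe device. The diffusion $\sigma_X$ is only $1/2$-Hölder, so a direct $L^2$ Gronwall argument breaks down. The $L^1$ estimate via Yamada--Watanabe functions is the natural tool here.

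\textbf{Step 1 (setup).} Both processes stay in $[0,1]$ a.s. For $X$ this follows from \eqref{eq:mean_reversion_condition} and \cite{caballero2021quantifying}. For $X^\varepsilon$ the same boundary argument applies, because $\sigma_X^\varepsilon$ also vanishes at $0$ and $1$ and the drift points inward there.

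Since $\mu_X$ is Lipschitz in $x$ uniformly in $t$ with constant $\sup\theta<\infty$ (Assumption~\ref{ass:lipschitz_muX}), we have
\[
\mathrm{d}\Delta_s=\bigl(\mu_X(s,X^\varepsilon_s)-\mu_X(s,X_s)\bigr)\mathrm{d}s+\bigl(\sigma^\varepsilon_X(X^\varepsilon_s)-\sigma_X(X_s)\bigr)\mathrm{d}B^X_s,\qquad \Delta_t=0.
\]
I split the diffusion difference as
\[
\bigl(\sigma^\varepsilon_X(X^\varepsilon_s)-\sigma_X(X^\varepsilon_s)\bigr)+\bigl(\sigma_X(X^\varepsilon_s)-\sigma_X(X_s)\bigr).
\]
The first bracket is bounded by $\delta_\varepsilon:=\sup_{[0,1]}|\sigma_X^\varepsilon-\sigma_X|$. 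This sup is attained near the boundary and is at most of order $\sqrt{\varepsilon}$, since both functions are $O(\sqrt{\varepsilon})$ on $[0,\varepsilon]\cup[1-\varepsilon,1]$. The second bracket is bounded by $C\sqrt{|\Delta_s|}$ because $\sqrt{x(1-x)}$ is $1/2$-Hölder.

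\textbf{Step 2 (Yamada--Watanabe).} I take the standard functions $\phi_n\in C^2$ with $\phi_n(z)\uparrow|z|$, $|\phi_n'|\le1$, and $0\le\phi_n''(z)\le \frac{2}{n|z|}\mathbf 1_{a_n<|z|<a_{n-1}}$, where $a_n=e^{-n(n+1)/2}$. Applying Itô's formula to $\phi_n(\Delta_s)$ and taking expectations (the stochastic integral is a true martingale because all coefficients are bounded) gives
\[
\mathbb{E}\phi_n(\Delta_s)\le K_1\int_t^s\mathbb{E}|\Delta_r|\,\mathrm{d}r+\mathbb{E}\int_t^s\phi_n''(\Delta_r)\bigl(\delta_\varepsilon^2+C|\Delta_r|\bigr)\mathrm{d}r .
\]
The $C|\Delta_r|$ term contributes at most $2C(s-t)/n$.

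The $\delta_\varepsilon^2$ term is the main obstacle, because $\phi_n''$ can be as large as $2/(na_n)$. I will couple the two parameters and choose $n=n(\varepsilon)$ so that $\delta_\varepsilon^2/(na_n)\to0$ while $n\to\infty$, for instance by picking $n$ with $a_n\ge\delta_\varepsilon$. Then
\[
\mathbb{E}\phi_n(\Delta_s)\le K_1\int_t^s\mathbb{E}|\Delta_r|\,\mathrm{d}r+\frac{2T\delta_\varepsilon}{n}+\frac{2CT}{n}.
\]
Using $|z|\le\phi_n(z)+a_{n-1}$ and Gronwall's lemma yields
\[
\mathbb{E}|\Delta_s|\le\Bigl(a_{n-1}+\frac{2T(\delta_\varepsilon+C)}{n}\Bigr)e^{K_1T}.
\]

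\textbf{Step 3 (uniformity and conclusion).} All constants depend only on $K_1$, $\alpha$, $\theta_0$ and $T$, not on $(t,s,x)$, so the bound holds uniformly over $0\le t\le s\le T$ and $x\in[0,1]$. As $\varepsilon\downarrow0$ we have $\delta_\varepsilon\to0$ and $n(\varepsilon)\to\infty$, so the right-hand side tends to $0$.

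The points needing care are the bound on $\delta_\varepsilon$ near the boundary and the coupled choice of $n(\varepsilon)$. An alternative route, if the coupling were awkward, would be to localize on $[\varepsilon,1-\varepsilon]$, where the coefficients coincide, and control the time spent near the boundary. I regard the Yamada--Watanabe coupling as cleaner.
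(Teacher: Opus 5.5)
Your proposal is correct and follows essentially the same route as the paper: a Yamada--Watanabe $L^1$ estimate in which the diffusion difference is split into the uniform regularization error $\|\sigma_X^\varepsilon-\sigma_X\|_{L^\infty(0,1)}\le\sqrt{2\alpha\theta_0\varepsilon}$ and the $\tfrac12$-H\"older term controlled by $2\alpha\theta_0|\Delta_r|$. The differences are minor. The paper uses a two-parameter family $\varphi_{\delta,R}$ and takes iterated limits ($\varepsilon\downarrow0$, then $R\uparrow\infty$, then $\delta\downarrow0$) where you couple $n=n(\varepsilon)$, and it discards the drift term via $-\theta(r)\varphi'(\Delta_r)\Delta_r\le0$ where you use Gronwall; also, the $C|\Delta_r|$ in your It\^o bound should read $C^2|\Delta_r|=2\alpha\theta_0|\Delta_r|$, which is harmless.
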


\begin{proof}
The proof is given in
Appendix~\ref{app:proof_production_process_convergence}.
\end{proof}
We next combine Lemma~\ref{lem:production_process_convergence} with the
conditional expectation representations for Stages I and~II and Stage~III to prove
Proposition~\ref{prop:value_function_regularization}, which establishes
uniform convergence of the regularized value functions to the original value functions in all three
stages.

\begin{proposition}[Convergence of the regularized value functions]
\label{prop:value_function_regularization}
The value functions of the three stages satisfy
$$
\sup_{\substack{t\in[T-L,T],\ x\in[0,1]\\
                  m\in[0,L],\ q\in\mathbb{R}}}
\left|v^{\mathrm{I},\varepsilon}(t,x,m;\,q)
-v^{\mathrm{I}}(t,x,m;\,q)\right|
\longrightarrow0,
$$
$$
\sup_{\substack{t\in[T_{\mathrm{gc}},T-L],\ x\in[0,1]\\
                  q\in\mathbb{R}}}
\left|v^{\mathrm{II},\varepsilon}(t,x;\,q)
-v^{\mathrm{II}}(t,x;\,q)\right|
\longrightarrow0,
$$
and
$$
\sup_{\substack{t\in[0,T_{\mathrm{gc}}],\ x\in[0,1]\\
                  (y,q)\in\mathbb{R}^2}}
\left|v^{\mathrm{III},\varepsilon}(t,x,y,q)
-v^{\mathrm{III}}(t,x,y,q)\right|
\longrightarrow0
$$
as $\varepsilon\downarrow0$.
\end{proposition}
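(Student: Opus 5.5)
The plan is to use the probabilistic representations of the three value functions and push the $L^1$ convergence of Lemma~\ref{lem:production_process_convergence} through them. The key structural fact is that $g(\xi)=\beta|\xi|$ is globally Lipschitz with constant $\beta$. This lets us bound differences of terminal payoffs by differences of production paths, uniformly in $q$. Throughout, the original and regularized processes are driven by the same Brownian motions (and the same Poisson measure in Stage~III), so all differences can be estimated pathwise.

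\textbf{Stage I.} For $t\in[T-L,T]$ we have
\[
v^{\mathrm I}(t,x,m;q)=\mathbb E\bigl[g\bigl(q-P_{\max}(m+\textstyle\int_t^T X_s^{t,x}\,\mathrm ds)\bigr)\bigr],
\]
and likewise for $v^{\mathrm I,\varepsilon}$ with $X^{\varepsilon,t,x}$. The Lipschitz property of $g$ and Fubini give
\[
|v^{\mathrm I,\varepsilon}-v^{\mathrm I}|\le \beta P_{\max}\int_t^T \mathbb E|X^{\varepsilon,t,x}_s-X^{t,x}_s|\,\mathrm ds\le \beta P_{\max} L\,\delta(\varepsilon),
\]
where $\delta(\varepsilon)$ denotes the supremum appearing in Lemma~\ref{lem:production_process_convergence}. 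The bound does not depend on $(t,x,m,q)$, so Stage~I converges uniformly.

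\textbf{Stage II.} Use the tower property: for $t\in[T_{\mathrm{gc}},T-L]$ the whole delivery-window integral is a functional of the single process $X^{t,x}$ on $[T-L,T]$. Hence
\[
v^{\mathrm{II}}(t,x;q)=\mathbb E\bigl[g\bigl(q-P_{\max}\textstyle\int_{T-L}^T X^{t,x}_s\,\mathrm ds\bigr)\bigr],
\]
and the same computation gives the bound $\beta P_{\max}L\,\delta(\varepsilon)$. This is cleaner than composing a Stage~I error with a Stage~II propagation step, because it avoids needing Lipschitz continuity of $v^{\mathrm I,\varepsilon}$ in $x$.

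\textbf{Stage III.} Here the infimum is handled by the elementary inequality
\[
|\inf_\psi J^\varepsilon-\inf_\psi J|\le\sup_\psi|J^\varepsilon-J|.
\]
The price process $Y$ and the inventory $Q$ do not depend on the production dynamics, so for a fixed admissible $\psi\in\mathcal A_{t,T}$ the running costs $-\psi Y+\frac\gamma2\psi^2$ coincide in the two problems and cancel exactly. Only the terminal term survives:
\[
|J^\varepsilon-J|\le \bigl|\mathbb E\bigl[v^{\mathrm{II},\varepsilon}(T_{\mathrm{gc}},X^{\varepsilon}_{T_{\mathrm{gc}}};Q_{T_{\mathrm{gc}}})-v^{\mathrm{II}}(T_{\mathrm{gc}},X_{T_{\mathrm{gc}}};Q_{T_{\mathrm{gc}}})\bigr]\bigr|.
\]
Again I would avoid splitting this into two terms; one of those terms would require Lipschitz continuity of $v^{\mathrm{II}}$ in $x$ without regularization, which is not available. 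Instead, apply the Markov property / tower property at $T_{\mathrm{gc}}$ once more to write both quantities as
\[
\mathbb E\bigl[g\bigl(Q_{T_{\mathrm{gc}}}-P_{\max}\textstyle\int_{T-L}^T X^{(\varepsilon),t,x}_s\,\mathrm ds\bigr)\bigr].
\]
The Lipschitz property of $g$, now with a random but common $Q_{T_{\mathrm{gc}}}$, then gives the bound $\beta P_{\max}L\,\delta(\varepsilon)$. This holds uniformly in $\psi$ and in $(t,x,y,q)$, which proves the third limit.

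\textbf{Main obstacle.} The delicate point is the justification of these representations. We need the regularized process to be the unique strong solution and Markov, so that conditioning at $T-L$ and $T_{\mathrm{gc}}$ is legitimate; Lipschitz $\sigma_X^\varepsilon$ gives this. For the original Jacobi SDE with the non-Lipschitz square-root coefficient, pathwise uniqueness follows from Yamada--Watanabe, since $\sigma_X$ is $\tfrac12$-Hölder. Beyond that, the conditional-expectation representations must coincide with the value functions defined by the dynamic programming / Kolmogorov equations of Section~\ref{sec:three_stage_definition}. I would take these as the definitions, as the paper does. The remaining analytic content is then entirely contained in Lemma~\ref{lem:production_process_convergence}.
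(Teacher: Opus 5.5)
Your proposal is correct and follows essentially the same route as the paper. In all three stages it uses the Lipschitz continuity of $g$ together with the representation of the payoff as a functional of the single production path on $[T-L,T]$, so that no Lipschitz continuity of $v^{\mathrm{I}}$ or $v^{\mathrm{II}}$ in $x$ is needed; in Stage~III it cancels the running costs exactly for a fixed common control and uses $|\inf_\psi J^\varepsilon-\inf_\psi J|\le\sup_\psi|J^\varepsilon-J|$, then obtains the uniform bound $\beta P_{\max}L\,\sup\mathbb{E}|X^{\varepsilon,t,x}_s-X^{t,x}_s|$ and concludes via Lemma~\ref{lem:production_process_convergence}.
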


\begin{proof}
The proof is given in
Appendix~\ref{app:proof_value_function_regularization}.
\end{proof}
}

\begin{remark}[Scope of the well-posedness result]
\label{rem:scope}
Theorem~\ref{thm:viscosity_reg} establishes well-posedness  for
the regularized problem. The original Jacobi
coefficient~$\sigma_X$ does not satisfy
Assumption \ref{ass:lipschitz_sigmaX}. The
underlying SDE is nevertheless well posed
under \eqref{eq:mean_reversion_condition} (see \cite{caballero2021quantifying}), and the regularization
modifies the dynamics only in the boundary layers
$[0,\varepsilon)\cup(1{-}\varepsilon,1]$.
{Proposition~\ref{prop:value_function_regularization}
establishes uniform convergence of the regularized value functions to
their original counterparts as $\varepsilon\downarrow0$.}
Moreover, the value function corresponding
to the Stage III HJB-PIDE for the non-regularized problem is visualized
in Section \ref{sec:value_function_structure}.
Consequently, the numerical scheme developed in
Section~\ref{sec:numerical_methodology} and the experiments reported
in Section~\ref{sec:num_exp_results} consider the non-regularized
Stage~III HJB-PIDE. Theorem~\ref{thm:viscosity_reg} therefore serves
as a well-posedness result for the regularized approximate problem.
{Uniqueness of $v^{\mathrm{III}}$ as a viscosity solution
of the original degenerate HJB-PIDE is left for future work.}
\end{remark}

\section{Numerical Scheme and Analysis}

\label{sec:numerical_methodology}
The SOC problem is formulated as a sequence of two linear Kolmogorov PDEs in Stages~I and II and a fully nonlinear four-dimensional HJB PIDE in Stage~III, all of which are solved numerically. Since the numerical methodology required for the linear PDEs is contained within that of the Stage~III HJB PIDE, we focus on the Stage~III finite difference discretization, the schemes for Stages~I and II are obtained by a direct simplification. {For the Stage~III time discretization, we use the reversed time variable $\tau=T_{\mathrm{gc}}-t$, so the terminal condition at $t=T_{\mathrm{gc}}$ corresponds to $\tau=0$ and the numerical scheme advances forward in $\tau$.}

The numerical scheme is guided by the monotone, stable, and consistent approximation principle of \cite{barles1991convergence} for fully nonlinear viscosity problems. For the nonlocal jump operator, the discretization builds on the finite difference framework for jump-diffusion PIDEs of \cite{cont2005finite,wang2008maximal}, specialized to the double-exponential jump structure via the differential formulation of \cite{carr2007numerical}. At the continuous level, the Stage~III PIDE is interpreted within the viscosity-solution framework for controlled jump-diffusions of \cite{pham1998optimal}, which underlies the well-posedness analysis of Stage III HJB-PIDE.

Section~\ref{sec:numerical_methodology} is organized as follows. Section~\ref{sec:domain_truncation} specifies the truncated computational domains and the spatial grid discretization, while Section~\ref{sec:boundary_conditions} describes the numerical boundary conditions. Section~\ref{sec:monotone_scheme} introduces the monotone finite difference discretization, with Section \ref{sec:semi_implicit_Hamiltonian} describing the semi-implicit scheme for the nonlinear part of the Hamiltonian and Section~\ref{sec:xy_implicit_discretization} detailing the implicit scheme for drift and diffusion terms in price and wind power state variables. Section~\ref{sec:jump_discretization} is devoted to the explicit discretization of the nonlocal jump operator.

\subsection{{Domain Truncation}}
\label{sec:domain_truncation}
In this section, we first specify the truncation bounds defining the computational domains on which the three-stage equations are solved. 
Then, we introduce the spatial grid discretization.

\paragraph{Probabilistic truncation of the price domain}

We derive probabilistic bounds on the forecast error
between the quoted price process $Y_t$ and its forecast trajectory $p_Y(\cdot)$.

\begin{proposition}[Deviation bound for $Y_t$]\label{prop:price_bound}
Let $D_t := Y_t - p_Y(t)$ denote the deviation of the price process
from its forecast, then $D_t$ satisfies  for $t\in[0,T_{\mathrm{gc}}]$ 
\[
\mathrm{d}D_t = -\kappa\,D_t\,\mathrm{d}t
+ \sigma\,\mathrm{d}B_t^Y + \mathrm{d}J_t,
\qquad D_0 = 0,
\]
and admits the decomposition $D_t = U_t + \widetilde{J}_t$, where
\[
U_t := \sigma\!\int_0^t e^{-\kappa(t-s)}\,\mathrm{d}B_s^Y,
\qquad
\widetilde{J}_t := \int_0^t e^{-\kappa(t-s)}\,\mathrm{d}J_s.
\]

For any tolerance $\varepsilon \in (0,1)$ and parameters
$\alpha_+ \in (0, \eta_+)$, $\alpha_- \in (0, \eta_-)$, let
$$
K_U(\varepsilon)
:= m_U + \sigma_{\sup}\,\sqrt{2\ln\!\bigl(\tfrac{4}{\varepsilon}\bigr)},
$$
where
$$
m_U := \mathbb{E}\!\left[\sup_{0 \le t \le T_{\mathrm{gc}}} U_t\right],
\qquad
\sigma_{\sup}^2
:= \sup_{0 \le t \le T_{\mathrm{gc}}} \mathrm{Var}[U_t]
= \frac{\sigma^2}{2\kappa}\bigl(1 - e^{-2\kappa T_{\mathrm{gc}}}\bigr),
$$
and
$$
K_J(\varepsilon)
:= 2\max\!\left\{
\frac{\Lambda_c(\alpha_+)\,T_{\mathrm{gc}}
      - \ln(\varepsilon/4)}{\alpha_+},\;
\frac{\Lambda_c(-\alpha_-)\,T_{\mathrm{gc}}
      - \ln(\varepsilon/4)}{\alpha_-}
\right\},
$$
with the compensated cumulant generating function given by
$$
\Lambda_c(\alpha)
:= \lambda\bigl(\mathbb{E}[e^{\alpha Z}] - 1
    - \alpha\,\mathbb{E}[Z]\bigr),
\qquad \alpha \in (-\eta_-, \eta_+).
$$

Then, with $K(\varepsilon) := K_U(\varepsilon) + K_J(\varepsilon)$,
$$
\mathbb{P}\!\left(
\sup_{0 \le t \le T_{\mathrm{gc}}} |D_t| > K(\varepsilon)
\right)
\le \varepsilon.
$$
\end{proposition}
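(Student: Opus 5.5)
First derive the deviation SDE. Subtract $\mathrm{d}p_Y(t)=\dot p_Y(t)\,\mathrm{d}t$ from \eqref{eq:price_sde}, noting that $J$ in \eqref{eq:jump_process} is already compensated. This gives $\mathrm{d}D_t=-\kappa D_t\,\mathrm{d}t+\sigma\,\mathrm{d}B^Y_t+\mathrm{d}J_t$ with $D_0=0$. Apply Itô's product rule to $e^{\kappa t}D_t$, which is legitimate because $J$ has finite variation, and obtain the variation-of-constants formula $D_t=U_t+\widetilde J_t$. Then split by a union bound,
\[
\mathbb{P}\Bigl(\sup_t|D_t|>K_U+K_J\Bigr)\le \mathbb{P}\Bigl(\sup_t|U_t|>K_U\Bigr)+\mathbb{P}\Bigl(\sup_t|\widetilde J_t|>K_J\Bigr),
\]
and show that each term is at most $\varepsilon/2$. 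For each term, bound the two one-sided tails by $\varepsilon/4$ each; this is where the constants $4/\varepsilon$ and $\varepsilon/4$ come from.

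\textbf{Gaussian part.} $U$ is a centred continuous Gaussian process on the compact interval $[0,T_{\mathrm{gc}}]$. Its variance is $\mathrm{Var}[U_t]=\frac{\sigma^2}{2\kappa}(1-e^{-2\kappa t})$, which is increasing in $t$, so $\sigma_{\sup}^2$ is the stated value. Apply the Borell--TIS inequality to get $\mathbb{P}(\sup_t U_t>m_U+u)\le e^{-u^2/(2\sigma_{\sup}^2)}$. With $u=\sigma_{\sup}\sqrt{2\ln(4/\varepsilon)}$ this equals $\varepsilon/4$. By symmetry ($-U\stackrel{d}{=}U$), $\mathbb{E}[\sup_t(-U_t)]=m_U$, so the lower tail obeys the same bound. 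Together these give $\mathbb{P}(\sup_t|U_t|>K_U)\le\varepsilon/2$.

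\textbf{Jump part.} This is the main obstacle, because $\widetilde J_t$ is not a martingale in $t$: the kernel $e^{-\kappa(t-s)}$ depends on $t$. I would write $\widetilde J_t=e^{-\kappa t}\mathcal M_t$ with $\mathcal M_t:=\int_0^t e^{\kappa s}\,\mathrm{d}J_s$, but the supremum of $e^{-\kappa t}\mathcal M_t$ is still awkward.

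The route I would try first avoids the kernel entirely. Integrate by parts to get $\widetilde J_t=J_t-\kappa\int_0^t e^{-\kappa(t-s)}J_s\,\mathrm{d}s$. Since $\kappa\int_0^t e^{-\kappa(t-s)}\,\mathrm{d}s\le 1$, this yields $\sup_t|\widetilde J_t|\le 2\sup_t|J_t|$, which explains the factor $2$ in $K_J$. It then suffices to show $\mathbb{P}(\sup_t J_t>a_+)\le\varepsilon/4$ and $\mathbb{P}(\sup_t(-J_t)>a_-)\le\varepsilon/4$, where $a_\pm$ are the two terms inside the max, because $K_J=2\max\{a_+,a_-\}$. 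For $\alpha_+<\eta_+$ the process $e^{\alpha_+J_t-t\Lambda_c(\alpha_+)}$ is a positive martingale. This uses $\mathbb{E}[e^{\alpha J_t}]=e^{t\Lambda_c(\alpha)}$, valid for the compensated compound Poisson process with the double-exponential law, whose MGF is finite on $(-\eta_-,\eta_+)$. Since $\Lambda_c\ge0$ by Jensen's inequality, $e^{\alpha_+J_t}\le e^{\alpha_+J_t-t\Lambda_c}e^{T_{\mathrm{gc}}\Lambda_c}$. Doob's maximal inequality for this martingale then gives
\[
\mathbb{P}\Bigl(\sup_{t\le T_{\mathrm{gc}}}J_t>a\Bigr)\le \exp\bigl(\Lambda_c(\alpha_+)T_{\mathrm{gc}}-\alpha_+a\bigr),
\]
and choosing $a=a_+$ makes the right-hand side $\varepsilon/4$. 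The lower tail follows in the same way with $-\alpha_-$, using $\Lambda_c(-\alpha_-)$.

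Combining the four one-sided bounds gives total probability at most $\varepsilon$. The checks I would make carefully are the integration-by-parts bound, which needs path-by-path validity for càdlàg finite-variation paths, and the nonnegativity of $\Lambda_c$ on $(-\eta_-,\eta_+)$.
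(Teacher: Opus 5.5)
Your proposal is correct and follows the paper's proof essentially step for step. The shared steps are the decomposition $D_t=U_t+\widetilde J_t$ and the union bound with each one-sided tail controlled at level $\varepsilon/4$. For the Gaussian part, both use Borell--TIS together with the symmetry $-U\stackrel{d}{=}U$. For the jump part, both use integration by parts to get $\sup_t|\widetilde J_t|\le 2\sup_t|J_t|$, then the exponential martingale $\exp(\alpha J_t-t\Lambda_c(\alpha))$ with $\Lambda_c\ge 0$ and Doob's maximal inequality. The only differences are cosmetic: you justify $\Lambda_c\ge0$ via Jensen rather than $e^x\ge 1+x$, and you derive the variation-of-constants formula explicitly through the product rule for $e^{\kappa t}D_t$, which the paper simply asserts.
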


\begin{proof}
{The proof is given in Appendix~\ref{sec:proof_prob_bound_Y}.}
\end{proof}

Consequently, for fixed $ 0 <\varepsilon \ll  1$, we define the upper and lower bounds for the price variable as follows
$$
y_{\min}:=\min_{t\in[0,T_{\mathrm{gc}}]} p_Y(t)-K(\varepsilon), 
\qquad
y_{\max}:=\max_{t\in[0,T_{\mathrm{gc}}]} p_Y(t)+K(\varepsilon).
$$

\begin{remark}
In the implementation, $m_U$ is estimated via Monte Carlo simulation
of the process $U_t$ on $[0,T_{\mathrm{gc}}]$.
\end{remark}

\paragraph{The domain of $Q_t$.}
Since no physical constraints are imposed on the inventory variable $Q_t$, we
define the computational interval $[q_{\min},q_{\max}]$ for $Q_t$ by means of
the method of characteristics applied to an auxiliary reduced first-order
equation in the inventory variable $q$.

This yields, for any $\epsilon>0$,
\begin{equation}
\label{eq:q_bounds}
q_{\min} := T_{\mathrm{gc}} \min\left\{\frac{y_{\min}-\beta}{\gamma},0\right\} -\epsilon, \qquad q_{\max} := T_{\mathrm{gc}}
\max\left\{\frac{y_{\max}+\beta}{\gamma},0\right\}
+\epsilon.
\end{equation}

The detailed steps to derive \eqref{eq:q_bounds} are provided in
Appendix~\ref{sec:characteristics}.

\paragraph{The domain of $\psi_t$.}
{Since there are no exogenous restrictions on the trading rate, we set}
\begin{equation}
\label{eq:psi_bounds}
\psi_{\min}:=\frac{q_{\min}}{T_{\mathrm{gc}}},
\qquad
\psi_{\max}:=\frac{q_{\max}}{T_{\mathrm{gc}}}.
\end{equation}
{where $[q_{\min},q_{\max}]$ is determined by the method of
characteristics. This is a modeling choice, and the admissible trading rate
bounds may instead be chosen independently of the computational inventory
domain.}

\paragraph{Grid discretization.}
We introduce a uniform  grid over the time and state variables.
{The three stages use separate uniform time grids. For Stage~III, the reversed interval $\tau\in[0,T_{\mathrm{gc}}]$ is discretized as
$$
\Delta\tau=\frac{T_{\mathrm{gc}}}{N_t},
\qquad
\tau_n=n\Delta\tau,
\qquad n=0,\ldots,N_t.
$$
For Stage~II on $[T_{\mathrm{gc}},T-L]=[T_{\mathrm{gc}},T_{\mathrm{gc}}+h]$,
$$
\Delta t_{\mathrm{gc}}=\frac{h}{N_{\mathrm{gc}}},
\qquad
t_n^{\mathrm{gc}}=T_{\mathrm{gc}}+n\Delta t_{\mathrm{gc}},
\qquad n=0,\ldots,N_{\mathrm{gc}},
$$
and for Stage~I on $[T-L,T]$,
$$
\Delta t_{\mathrm d}=\frac{L}{N_{\mathrm d}},
\qquad
t_n^{\mathrm d}=T-L+n\Delta t_{\mathrm d},
\qquad n=0,\ldots,N_{\mathrm d}.
$$}
Each spatial variable
$\xi \in \{x, y, q, m\}$ is discretized uniformly on its respective
domain $[\xi_{\min}, \xi_{\max}]$ with $N_\xi + 1$ nodes, step size
$\Delta\xi = (\xi_{\max} - \xi_{\min})/N_\xi$, and grid points
$\xi_\ell = \xi_{\min} + \ell\,\Delta\xi$ for
$\ell = 0,\ldots,N_\xi$. For the production variable, $x \in [0,1]$, and
\begin{equation}
\label{eq:production_grid}
{x_i=i\Delta x,
\qquad
\Delta x=\frac{1}{N_x},
\qquad
i=0,\ldots,N_x.}
\end{equation}
{We denote the Stage~III numerical approximation by
$$
V^n_{i,j,k}\approx v^{III}(T_{\mathrm{gc}}-\tau_n,x_i,y_j,q_k),
\qquad n=0,\ldots,N_t.
$$
Thus $V^0$ contains the terminal data at $t=T_{\mathrm{gc}}$, while $V^{N_t}$ approximates $v^{III}(0,\cdot)$.}

\subsection{{Numerical Boundary Conditions}}
\label{sec:boundary_conditions}

{
We specify the numerical boundary conditions used at the boundary of the
computational domain.

\emph{Production variable.}
At the boundary of $x\in\{0,1\}$, the Jacobi coefficient satisfies
\begin{equation}
\label{eq:production_diffusion_boundary}
\sigma_X(0)=\sigma_X(1)=0.
\end{equation}
Hence the diffusion of $x$ and the mixed-derivative term
$\rho\sigma_X(x)\sigma v_{xy}$ vanish at $x=0$ and $x=1$. The boundary
rows in the $x$ direction are assembled from the remaining terms in the
$(x,y)$ equation, using one-sided upwind differences.
\cite{caballero2021quantifying} shows that the drift of $x$ points inward at the boundary i.e.,
\begin{equation} 
\label{eq:production_drift_boundary}
{\mu_X(t,0)\ge0},
\qquad
{\mu_X(t,1)\le0},
\qquad
{t\in[0,T_{\mathrm{gc}}]}.
\end{equation}
Condition \eqref{eq:production_drift_boundary} determines
the one-sided upwind direction used at $i=0$ and $i=N_x$
{on the production grid defined in \eqref{eq:production_grid}}. No exterior values
in the $x$ variable are needed.

\emph{Price variable.}
The grid endpoints $y_{\min}$ and $y_{\max}$ are chosen as in
Proposition~\ref{prop:price_bound}. We impose at these endpoints the
affine condition
\begin{equation}
\label{eq:price_farfield_condition}
{\partial_y v^{III}(T_{\mathrm{gc}}-\tau,x,y_{\max},q)} = a_+(\tau),
\qquad
{\partial_y v^{III}(T_{\mathrm{gc}}-\tau,x,y_{\min},q)} = a_-(\tau),
\end{equation}
where
\begin{equation}
\label{eq:price_farfield_coefficients}
a_+(\tau) = -\psi_{\max}\,\frac{1-e^{-\kappa\tau}}{\kappa},
\qquad
a_-(\tau) = -\psi_{\min}\,\frac{1-e^{-\kappa\tau}}{\kappa},
\end{equation}
with the convention $(1-e^{-\kappa\tau})/\kappa = \tau$ for $\kappa = 0$.
The expressions \eqref{eq:price_farfield_coefficients} are derived in
Appendix~\ref{app:farfield}.
\emph{Inventory variable.}
The bounds $[q_{\min},q_{\max}]$ in \eqref{eq:q_bounds} are chosen from
the characteristic bounds of the controlled inventory equation. The rows
at $q=q_{\min}$ and $q=q_{\max}$ are assembled with the upwind convention
used in the $q$-scheme, so no exterior values in the $q$ direction are
needed.

\emph{Metered energy variable.}
The variable $m$ appears only in the Stage I equation and satisfies
$m\in[0,L]$. The rows at $m=0$ and $m=L$ are assembled with the one-sided
upwind differences associated with the transport in $m$.}

\subsection{Monotone IMEX finite difference Scheme with Operator Splitting}
\label{sec:monotone_scheme}
To discretize the PIDEs, we adopt an implicit-explicit operator splitting strategy:  

\begin{itemize}
	\item The diffusion and drift terms in the $x,y,m$ variables are treated implicitly.  
	\item The $q$ variable, which introduces a fully nonlinear term in the Stage~III HJB-PIDE, is handled semi-implicitly by linearization of the Hamiltonian (see Section~\ref{sec:semi_implicit_Hamiltonian}).  
    \item {The mixed derivative term in $(x,y)$ is treated via an implicit positive-type generalized finite difference scheme inspired by \cite{bonnans2003consistency} (see Section \ref{sec:xy_implicit_discretization}).}
	\item The nonlocal jump operator is evaluated via a differential formulation that localizes the double-exponential integrals to the computational grid, avoiding both extrapolation beyond the grid boundaries and the formation of dense matrices (see Section~\ref{sec:jump_discretization}).
	\item In Stage~III, operator splitting is applied to the HJB-PIDE so that the update in the $q$-direction is performed first, followed by the implicit step in the $(x,y)$ variables at each time level.
\end{itemize}

The aforementioned choices yield a monotone discretization that remains stable and robust across a wide range of model parameters, in particular the imbalance penalty $\beta$ and the market impact parameter $\gamma$ (see Section \ref{sec:numerical_experiments}). {Treating the differential part of the PIDE fully explicitly would impose severe CFL restrictions on the admissible reversed-time step $\Delta\tau$, which are avoided by the proposed implicit and semi-implicit treatment.} In fact, as $\gamma \rightarrow 0$, the optimal control becomes unbounded, causing $\partial_p H_q(p, y)$ to become very large in magnitude even for finite but sufficiently small values of $\gamma$. In addition, large values of $\beta$ steepen the gradient values, $\partial_q v$, and hence magnify CFL-type restrictions in explicit schemes, which our semi-implicit treatment maneuvers effectively. {The only remaining time-step restriction is due to the explicit jump component and requires $\Delta\tau \leq 1/\lambda$; see Appendix \ref{app:monotonicity_jump}.}

To decouple the nonlinear and nonlocal components of the HJB PIDE, we employ a first-order Lie operator splitting combined with an IMEX splitting in reversed time. {Under $\tau=T_{\mathrm{gc}}-t$,
$$
\partial_\tau v^{III}(T_{\mathrm{gc}}-\tau,x,y,q)
=
\mathcal H_{xy}[v^{III}]
+
\mathcal H_q[v^{III}]
+
\mathcal I_y[v^{III}],
$$
where the operators on the right-hand side are evaluated at $(T_{\mathrm{gc}}-\tau,x,y,q)$. Starting from the terminal data $V^0$, the numerical scheme advances forward in reversed time for $n=0,\ldots,N_t-1$ according to
$$
V^n \xrightarrow[\text{Step 1}]{\mathcal H_q} U^{n+1}
\xrightarrow[\text{Step 2}]{\mathcal H_{xy}} W^{n+1}
\xrightarrow[\text{Step 3}]{\mathcal I_y} V^{n+1}.
$$}

\medskip
\noindent
\textbf{Step 1: $q$-substep (semi-implicit)}
$$
{U^{n+1}-\Delta\tau\,\mathcal H_q[U^{n+1}]=V^n.}
$$

\noindent
\textbf{Step 2: $(x,y)$-substep (implicit)}
$$
{W^{n+1}-\Delta\tau\,\mathcal H_{xy}[W^{n+1}]=U^{n+1}.}
$$

\noindent
\textbf{Step 3: jump substep in $y$ (explicit)}
$$
{V^{n+1}=W^{n+1}+\Delta\tau\,\mathcal I_y[W^{n+1}].}
$$

{The intermediate values $U^{n+1}$ and $W^{n+1}$ are associated with the update from $\tau_n$ to $\tau_{n+1}$.}

\medskip
Each subproblem involves only one spatial operator and can therefore be solved efficiently using dedicated linear solvers. The resulting Lie-splitting IMEX method is globally first order in time. {The $q$-substep is solved by linearized tridiagonal systems with Picard updates, while the $(x,y)$-substep is implicit; the only CFL restriction comes from the explicit jump step.}

In what follows, we specify the finite difference approximations employed in the numerical scheme.  
To unify notation across spatial directions, let $\xi\in\{x,y,q,m\}$ denote a generic spatial variable discretized on a uniform grid 
$\{\xi_\ell\}_{\ell=0}^{N_\xi}$ with step size $\Delta\xi>0$.  
The discrete value $V_\ell$ represents the numerical approximation of the continuous function $v(\xi_\ell)$ at grid point $\xi_\ell$.

\paragraph{First-order derivatives}
Let
$b:=b(\xi_\ell)$ be a transport coefficient, which may depend on different state variables  (e.g., $(t,x,y,q,m)$).  
We denote the backward, forward finite differences, and the centered finite difference, by
\[
\Delta_\xi^- V_\ell := \frac{V_\ell - V_{\ell-1}}{\Delta\xi},
\qquad
\Delta_\xi^+ V_\ell := \frac{V_{\ell+1} - V_\ell}{\Delta\xi},
\qquad
\Delta_\xi^{0} V_\ell := \frac{V_{\ell+1}-V_{\ell-1}}{2\Delta\xi},
\qquad 1 \le \ell \le N_\xi-1.
\]
By splitting $b_\ell$ into its positive and negative parts as follows:
\[
b_\ell^+ := \max(b_\ell,0),
\qquad
b_\ell^- := \min(b_\ell,0),
\]
the monotone upwind approximation of the advective term  
$b(\xi)\,\partial_\xi v(\xi)$ at grid point $\xi_\ell$  reads as
\begin{equation}\label{eq:upwind_first_order}
\big(b\,\partial_\xi v\big)(\xi_\ell)
\;\approx\; \Delta_{\xi,b}^{\mathrm{up}}V_{\ell}
:=
b_\ell^{+}\,\Delta_\xi^- V_\ell
+
b_\ell^{-}\,\Delta_\xi^+ V_\ell.
\end{equation}

This construction applies uniformly for each spatial direction 
$\xi\in\{x,y,q,m\}$ by substituting the corresponding drift function $b$ and grid index set.
Boundary nodes are treated using the extrapolation conditions described in 
Section~\ref{sec:boundary_conditions}.

\paragraph{Second-order total derivatives.}
We approximate second-order spatial derivatives using centered finite differences, 
following the recommendation in \cite{wang2008maximal}:
\begin{equation}\label{eq:centered_second_order}
\partial_{\xi\xi}v(\xi_\ell)
\;\approx\;
\Delta_{\xi\xi}V_\ell
:=
\frac{V_{\ell+1}-2V_\ell+V_{\ell-1}}{(\Delta\xi)^2},
\qquad 1\le \ell\le N_\xi-1.
\end{equation}
This symmetric stencil is second-order accurate.

In what follows, we explain how each of the subproblems is solved sequentially.

\subsection{Semi-Implicit Linearization of the Nonlinear Hamiltonian}
\label{sec:semi_implicit_Hamiltonian}

This section describes the numerical scheme for the nonlinear $q$-Hamiltonian in the Stage~III HJB-PIDE.
After Lie splitting, for each fixed $(x_i,y_j)$ the $q$-substep is a one-dimensional PDE in $q$.

{For $n=0,\ldots,N_t-1$, let $V^n_{i,j,k}$ denote the input at reversed time $\tau_n$.} An implicit Euler discretization in $\tau$ of the $q$-substep yields
\begin{equation}\label{eq:q_substep_fully_implicit}
    {U^{n+1}_{i,j,k}}
    =
    {V^n_{i,j,k}}
    +{\Delta\tau}\,\mathcal{H}_q\!\big(y_j,{p^{n+1}_{i,j,k}}\big),
    \qquad k=1,\dots,N_q-1,
\end{equation}
{where $p^{n+1}_{i,j,k}$ is the discrete $q$-slope associated with $U^{n+1}_{i,j,k}$.}
Solving \eqref{eq:q_substep_fully_implicit} directly would require a nonlinear solve for every $(i,j)$ at every time step. {We therefore use a one-stage Rosenbrock linearization \cite{lang2021rosenbrock}, so that each policy evaluation reduces to a tridiagonal linear system, and then use damped Picard iterations to recover self-consistency.}

Starting from the fully implicit subproblem in \eqref{eq:q_substep_fully_implicit}, we first linearize the mapping $p\mapsto \mathcal{H}_q(y_j,p)$ around the known slope $p^n$ at the previous reversed-time level,
\begin{equation}
    \label{eq:hamiltonian_expansion_qsub}
    \mathcal{H}_q\!\big(y_j,{p^{n+1}_{i,j,k}}\big)
    \approx
    \mathcal{H}_q\!\big(y_j,{p^{n}_{i,j,k}}\big)
    +\partial_p \mathcal{H}_q\!\big(y_j,{p^{n}_{i,j,k}}\big)\,
    \Big({p^{n+1}_{i,j,k}-p^{n}_{i,j,k}}\Big).
\end{equation}
We define the frozen characteristic speed by
\begin{equation}
    \label{eq:def_speed_a_qsub}
    {a^{n}_{i,j,k}}
    := \partial_p \mathcal{H}_q\!\big(y_j,\,{p^{n}_{i,j,k}}\big),
    \qquad k=1,\dots,N_q-1,
\end{equation}
{where the known slope at the previous reversed-time level is
$$
p^n_{i,j,k}:=\big(\Delta_q^0 V^n\big)_{i,j,k},
$$
which approximates $\partial_q v^{III}(T_{\mathrm{gc}}-\tau_n,x_i,y_j,q_k)$.}

Then the corresponding linear system is, for $k=1,\dots,N_q-1$,
\begin{equation}\label{eq:q_rosenbrock_eval_frozen_qsub}
    {U^{n+1}_{i,j,k}}
    -{\Delta\tau}\,{a^{n}_{i,j,k}}\,
    {\big(\Delta_{q,a^{n}}^{\mathrm{up}}U^{n+1}\big)_{i,j,k}}
    =
    {V^{n}_{i,j,k}}
    +{\Delta\tau}\Big(
    \mathcal{H}_q\!\big(y_j,{p^{n}_{i,j,k}}\big)
    -{a^{n}_{i,j,k}}\,{p^{n}_{i,j,k}}
    \Big).
\end{equation}

To reduce the error introduced by the frozen linearization, we perform damped Picard iterations that update the characteristic speed toward self-consistency at reversed time $\tau_{n+1}$. Let $r=0,1,\dots,R_{\max}-1$ denote the Picard iteration counter. Starting from
\begin{equation}
    \label{eq:init_picard_qsub}
    {U^{n+1,(0)}_{i,j,k}}:={V^{n}_{i,j,k}},\quad
    {p^{n+1,(0)}_{i,j,k}}:={p^n_{i,j,k}},\quad
    {a^{n+1,(0)}_{i,j,k}}:={a^n_{i,j,k}},\quad
    k=1,\dots,N_q-1,
\end{equation}
the Picard evaluation step solves, for $k=1,\dots,N_q-1$,
\begin{equation}\label{eq:q_rosenbrock_eval_picard_qsub}
    {U^{n+1,(r+1)}_{i,j,k}}
    -{\Delta\tau}\,{a^{n+1,(r)}_{i,j,k}}\,
    \big(\Delta_{q,a^{n+1,(r)}}^{\mathrm{up}}{U^{n+1,(r+1)}}\big)_{i,j,k}
    =
    {V^{n}_{i,j,k}}
    +{\Delta\tau}\Big(
    \mathcal{H}_q\!\big(y_j,{p^{n+1,(r)}_{i,j,k}}\big)
    -{a^{n+1,(r)}_{i,j,k}}\,{p^{n+1,(r)}_{i,j,k}}
    \Big).
\end{equation}
After solving \eqref{eq:q_rosenbrock_eval_picard_qsub}, we set
\begin{equation}
    \label{eq:def_p_picard_next_qsub}
    {p^{n+1,(r+1)}_{i,j,k}}
    :=\big(\Delta_{q,a^{n+1,(r)}}^{\mathrm{up}}{U^{n+1,(r+1)}}\big)_{i,j,k},
\end{equation}
and update the characteristic speed by
\begin{equation}
    \label{eq:def_a_hat_picard_qsub}
    {\widehat a^{n+1,(r+1)}_{i,j,k}}
    :=\Pi_{[\psi_{\min},\psi_{\max}]}\!\Big(\dfrac{y_j-{p^{n+1,(r+1)}_{i,j,k}}}{\gamma}\Big),
\end{equation}
followed by the damped update
\begin{equation}
    \label{eq:damping_picard_qsub}
    {a^{n+1,(r+1)}_{i,j,k}}
    :=(1-\omega)\,{a^{n+1,(r)}_{i,j,k}}+\omega\,{\widehat a^{n+1,(r+1)}_{i,j,k}},
    \qquad \omega\in(0,1].
\end{equation}
The iteration is terminated once
$\max_{1\le k\le N_q-1}|a^{n+1,(r+1)}_{i,j,k}-a^{n+1,(r)}_{i,j,k}|\le\varepsilon$.
After convergence, the output of Step~1 is $U^{n+1}$.
\FloatBarrier

\begin{remark}[Connection to policy iteration]
	Since the characteristic speed $a(y,p): =\partial_pH_q(y,p)=\psi^\ast(y,p)$ in the present model, freezing $a^{n+1,(r)}$
	is equivalent to freezing the optimal control. Hence, \eqref{eq:q_rosenbrock_eval_picard_qsub} coincides with a policy evaluation step under a frozen control,
	and the update step in Equation \eqref{eq:damping_picard_qsub} corresponds to policy improvement.
\end{remark}

{\FloatBarrier
\subsection{Implicit Discretization in Production and Price Variables}
\label{sec:xy_implicit_discretization}

After the semi-implicit $q$-update in Step~1, we perform the implicit
drift-diffusion substep in the production and price variables. At this stage,
${U^{n+1}_{i,j,k}}$ is known and serves as the input for the second
subproblem. For each fixed inventory grid point $q_k$, we solve
\begin{equation}
\label{eq:xy_implicit_subproblem}
{W^{n+1}_{i,j,k}}
-{\Delta\tau}\,\mathcal H_{xy}^{\Delta}\!\big[{W^{n+1}}\big]_{i,j,k}
={U^{n+1}_{i,j,k}}.
\end{equation}

The continuous second-order part of the $(x,y)$ operator can be written as
\begin{equation}
\label{eq:xy_diffusion_matrix}
\mathcal L^{(2)}_{xy}v
=\tfrac{1}{2}\operatorname{Tr}\!\big(\boldsymbol A(x)\nabla^2_{xy}v\big),
\qquad
\boldsymbol A(x)=
\begin{pmatrix}
\sigma_X(x)^2 & \rho\,\sigma_X(x)\sigma\\
\rho\,\sigma_X(x)\sigma & \sigma^2
\end{pmatrix},
\end{equation}
where $\nabla^2_{xy}v$ denotes the Hessian with respect to $(x,y)$. For the numerical experiments in Section \ref{sec:num_exp_results}, $\rho<0$. The case $\rho>0$
can be obtained by reflecting the sign of the oblique direction, and the case
$\rho=0$ requires no mixed-derivative. For $\rho<0$, the coefficient of the mixed derivative
$\rho\,\sigma_X(x)\sigma\,v_{xy}$ is negative. The standard centered
finite difference approximation of $v_{xy}$ then produces discrete
off-diagonal coefficients with both positive and negative signs, and
therefore does not, in general, yield a monotone discretization \cite{bonnans2003consistency,bonnans2004fast,ma2017unconditionally,chen2018monotone}. We therefore employ a
 specialization of the positive-type generalized
finite difference framework of \cite{bonnans2003consistency}, adapted to
the bounded  domains $x\in[0,1]$ and $ y \in [y_{\min},y_{\max}]$.  {The contribution of the present work is to specialize the construction of the wide stencil to a single oblique direction, derive the corresponding conditions for coefficient nonnegativity, and obtain an explicit condition, specific to the Jacobi diffusion, that ensures the oblique stencil remains inside the computational domain.}

\paragraph{Wide stencil construction.}
For an integer direction
$\boldsymbol r=(r_1,r_2)\in\mathbb Z^2\setminus\{\boldsymbol 0\}$, define
the symmetric directional second difference by
\begin{equation}
\label{eq:bz_directional_difference}
D_{\boldsymbol r}^2 W_{i,j,k}
:=W_{i+r_1,j+r_2,k}+W_{i-r_1,j-r_2,k}-2W_{i,j,k}.
\end{equation}
A positive-type approximation of $\mathcal L^{(2)}_{xy}$ takes the form
\begin{equation}
\label{eq:bz_positive_type_operator}
\mathcal D_i^{(2)}W_{i,j,k}
:=\!\!\sum_{\boldsymbol r\in \mathcal S_i}\!\!
w_{i,\boldsymbol r}\,D_{\boldsymbol r}^2W_{i,j,k},
\qquad w_{i,\boldsymbol r}\ge0,
\end{equation}
where $\mathcal S_i\subset\mathbb Z^2\setminus\{\boldsymbol 0\}$ is a
finite set of integer directions. The cone condition of
\cite{bonnans2003consistency} requires that the scaled diffusion matrix
at $x_i$ be representable as a nonnegative combination of the rank-one
matrices $\boldsymbol r\boldsymbol r^\top$,
$\boldsymbol r\in\mathcal S_i$. The identity \eqref{eq:bz_positive_type_operator} ensures the consistency of the
finite difference approximation \cite{bonnans2003consistency}, while the nonnegativity of the weights ensures the
monotonicity of the stencil (see Proposition \ref{prop:xy_mmatrix}).

At a fixed production grid point $x_i$, the scaled matrix to be represented
in the positive cone is
\begin{equation}
\label{eq:xy_scaled_matrix}
\boldsymbol{\tilde A}_i
:=
\begin{pmatrix}
\dfrac{\sigma_X(x_i)^2}{2(\Delta x)^2}
& \dfrac{\rho\,\sigma_X(x_i)\sigma}{2\Delta x\,\Delta y}
\\[8pt]
\dfrac{\rho\,\sigma_X(x_i)\sigma}{2\Delta x\,\Delta y}
& \dfrac{\sigma^2}{2(\Delta y)^2}
\end{pmatrix}.
\end{equation}
We use the three integer directions
\begin{equation}
\label{eq:xy_bz_directions}
\boldsymbol e_x=(1,0),
\qquad
\boldsymbol e_y=(0,1),
\qquad
\boldsymbol d_i=(-s_i,1),
\qquad
s_i\in\mathbb N,
\end{equation}
where the integer stencil width $s_i$ controls  the width of the
oblique stencil in the production direction.  The rank-one matrix associated with
$\boldsymbol d_i$ is
\begin{equation}
\label{eq:oblique_direction_rank_one}
\boldsymbol d_i\boldsymbol d_i^\top
=\begin{pmatrix} s_i^2 & -s_i\\ -s_i & 1\end{pmatrix},
\end{equation}
whose  off-diagonal entry can reproduce the negative off-diagonal
of $\boldsymbol{\tilde A}_i$ by setting
\begin{equation}
\label{eq:bz_positive_type_identity}
w_i^x\,\boldsymbol e_x\boldsymbol e_x^{\!\top}
+w_i^y\,\boldsymbol e_y\boldsymbol e_y^{\!\top}
+w_i^{xy}\,\boldsymbol d_i\boldsymbol d_i^{\!\top}
=\boldsymbol{\tilde A}_i
\end{equation}
and solving the resulting scalar system \eqref{eq:bz_positive_type_identity} yields
\begin{equation}
\label{eq:xy_bz_weights}
w_i^x=\frac{\chi_i^x}{(\Delta x)^2},
\qquad
w_i^y=\frac{\chi_i^y}{(\Delta y)^2},
\qquad
w_i^{xy}=\frac{\chi_i^{xy}}{(s_i\Delta x)^2+(\Delta y)^2},
\end{equation}
with
\begin{equation}
\label{eq:xy_bz_coefficients}
\begin{aligned}
\chi_i^x &= \tfrac12\sigma_X(x_i)^2
+\tfrac12\rho\,\sigma_X(x_i)\sigma\,\frac{s_i\Delta x}{\Delta y},\\[2pt]
\chi_i^y &= \tfrac12\sigma^2
+\tfrac12\rho\,\sigma_X(x_i)\sigma\,\frac{\Delta y}{s_i\Delta x},\\[2pt]
\chi_i^{xy} &= -\tfrac12\rho\,\sigma_X(x_i)\sigma\,
\frac{(s_i\Delta x)^2+(\Delta y)^2}{s_i\Delta x\,\Delta y}.
\end{aligned}
\end{equation}
The normalized oblique second difference associated with
$\boldsymbol d_i=(-s_i,1)$ is given by
\begin{equation}
\label{eq:xy_oblique_second_difference}
\Delta_{xy}^{s_i}W_{i,j,k}
:=\frac{W_{i+s_i,j-1,k}+W_{i-s_i,j+1,k}-2W_{i,j,k}}
{(s_i\Delta x)^2+(\Delta y)^2}.
\end{equation}
The second-order operator is thus approximated by
\begin{equation}
\label{eq:xy_second_order_discretization}
{\mathcal L^{(2)}_{xy}v^{III}(T_{\mathrm{gc}}-\tau_{n+1},x_i,y_j,q_k)}
\approx
\chi_i^x\,\Delta_{xx}{W^{n+1}_{i,j,k}}
+\chi_i^y\,\Delta_{yy}{W^{n+1}_{i,j,k}}
+\chi_i^{xy}\,\Delta_{xy}^{s_i}{W^{n+1}_{i,j,k}},
\end{equation}

\paragraph{Coefficient nonnegativity.}
We first consider the algebraic condition under which the representation
\eqref{eq:bz_positive_type_identity} has nonnegative coefficients. This
condition is stated for a fixed $i\in\{1,\dots,N_x-1\}$ and a fixed
integer stencil width $s_i$.

\begin{lemma}[Coefficient nonnegativity for a fixed stencil width]
\label{lem:coefficient_nonnegativity}
Let $\rho<0$, let $i\in\{1,\dots,N_x-1\}$, and let
$s_i\in\mathbb N$. Define
\begin{equation}
\label{eq:Ri_definition}
R_i
:=
\frac{\sigma_X(x_i)}{\sigma}\frac{\Delta y}{\Delta x}.
\end{equation}
For the directions
$\boldsymbol e_x=(1,0)$, $\boldsymbol e_y=(0,1)$, and
$\boldsymbol d_i=(-s_i,1)$, the representation
\begin{equation}
\label{eq:positive_type_representation}
\boldsymbol{\tilde A}_i
=
w_i^x\,\boldsymbol e_x\boldsymbol e_x^{\!\top}
+
w_i^y\,\boldsymbol e_y\boldsymbol e_y^{\!\top}
+
w_i^{xy}\,\boldsymbol d_i\boldsymbol d_i^{\!\top}
\end{equation}
has nonnegative coefficients if and only if
\begin{equation}
\label{eq:coefficient_nonnegativity_condition}
|\rho|R_i
\le
s_i
\le
\frac{R_i}{|\rho|}.
\end{equation}
\end{lemma}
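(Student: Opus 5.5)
The plan is to solve the $2\times 2$ matrix identity \eqref{eq:positive_type_representation} entrywise. This gives the coefficients in closed form, and the lemma then reduces to the sign of each coefficient. Expanding the left-hand side with \eqref{eq:oblique_direction_rank_one}, the three independent entries give
\[
w_i^x + s_i^2 w_i^{xy} = \frac{\sigma_X(x_i)^2}{2(\Delta x)^2},
\qquad
w_i^y + w_i^{xy} = \frac{\sigma^2}{2(\Delta y)^2},
\qquad
-s_i\,w_i^{xy} = \frac{\rho\,\sigma_X(x_i)\sigma}{2\Delta x\,\Delta y}.
\]
Because $s_i\ge 1$, this triangular system has a unique solution. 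The last equation gives
\[
w_i^{xy} = \frac{|\rho|\,\sigma_X(x_i)\sigma}{2 s_i\Delta x\,\Delta y}\ge 0,
\]
since $\rho<0$ and $\sigma_X\ge0$. Substituting into the first two equations yields
\[
w_i^x = \frac{\sigma_X(x_i)}{2(\Delta x)^2}\Bigl(\sigma_X(x_i) - |\rho|\sigma\, \frac{s_i\Delta x}{\Delta y}\Bigr),
\qquad
w_i^y = \frac{\sigma}{2(\Delta y)^2}\Bigl(\sigma - |\rho|\sigma_X(x_i)\,\frac{\Delta y}{s_i\Delta x}\Bigr).
\]
I will note that these agree with \eqref{eq:xy_bz_weights}--\eqref{eq:xy_bz_coefficients}. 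Uniqueness also means the ``if and only if'' concerns exactly these coefficients and no other representation.

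Next I rewrite each sign condition in terms of $R_i$ from \eqref{eq:Ri_definition}. Assuming $\sigma_X(x_i)>0$, divide the bracket in $w_i^x$ by $\sigma>0$ and multiply by $\Delta y/(\sigma\Delta x)\cdot\sigma/\sigma_X(x_i)$. This shows $w_i^x\ge0$ is equivalent to $s_i\le R_i/|\rho|$. Similarly, $w_i^y\ge 0$ is equivalent to $s_i \ge |\rho|\,\sigma_X(x_i)\Delta y/(\sigma\Delta x)=|\rho|R_i$. Since $w_i^{xy}\ge0$ always holds, all three coefficients are nonnegative exactly when \eqref{eq:coefficient_nonnegativity_condition} holds, which proves both directions.

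The main obstacle is bookkeeping rather than ideas: getting the scaling factors right when converting the bracket conditions into statements about $R_i$. There are also two degenerate cases to handle. If $\sigma_X(x_i)=0$ (only at the boundary, excluded for $1\le i\le N_x-1$, but worth a remark), then $R_i=0$ and the condition would require $s_i\le 0$ even though all weights are nonnegative. This confirms the restriction to interior $i$, where $\sigma_X(x_i)>0$ because $0<x_i<1$. If $\rho=0$, the bound $R_i/|\rho|$ is undefined, which is why the lemma assumes $\rho<0$.
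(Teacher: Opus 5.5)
Your proposal is correct and matches the paper's proof. You solve the off-diagonal entry for $w_i^{xy}$, substitute it into the two diagonal entries, and, using $\sigma_X(x_i)>0$ at interior nodes, turn the signs of $w_i^x$ and $w_i^y$ into the bounds $s_i\le R_i/|\rho|$ and $s_i\ge|\rho|R_i$; your remarks on uniqueness of the coefficients (since $s_i\ge1$) and on the need for $\sigma_X(x_i)>0$ make explicit what the paper uses implicitly.
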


\begin{proof}
{The proof is given in
Appendix~\ref{app:proof_coefficient_nonnegativity}.}
\end{proof}

Lemma~\ref{lem:coefficient_nonnegativity} is a fixed-width statement.
Since $s_i$ must be an integer, the coefficient nonnegativity condition
can be satisfied at $x_i$ if and only if
\begin{equation}
\label{eq:integer_stencil_width_condition}
\left\lceil |\rho|R_i\right\rceil
\le
\frac{R_i}{|\rho|}.
\end{equation}
When \eqref{eq:integer_stencil_width_condition} holds, we use the minimal
integer width
\begin{equation}
\label{eq:si_choice}
s_i:=\left\lceil |\rho|R_i\right\rceil,
\qquad
i=1,\dots,N_x-1.
\end{equation}
Then the lower bound in
\eqref{eq:coefficient_nonnegativity_condition} holds by construction, and
the remaining condition for coefficient nonnegativity is
$s_i\le R_i/|\rho|$.

For the Jacobi diffusion, the next lemma gives a 
sufficient condition which guarantees
\eqref{eq:integer_stencil_width_condition}.

\begin{lemma}[Sufficient condition for the Jacobi diffusion]
\label{lem:jacobi_coefficient_nonnegativity}
Assume $\rho<0$ and
$0<|\rho|\le 1/\sqrt2$. If
\begin{equation}
\label{eq:jacobi_lower_mesh_condition}
\frac{\Delta y}{\sqrt{\Delta x}}
\ge
\frac{|\rho|\sigma}{\sqrt{2\alpha\theta_0(1-\Delta x)}},
\end{equation}
then the coefficients in
\eqref{eq:positive_type_representation} are nonnegative for all
$i=1,\dots,N_x-1$ with $s_i$ given by \eqref{eq:si_choice}.
\end{lemma}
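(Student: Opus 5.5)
The plan is to reduce the statement to the integer feasibility condition \eqref{eq:integer_stencil_width_condition}. By Lemma~\ref{lem:coefficient_nonnegativity}, with $s_i=\lceil|\rho|R_i\rceil$ from \eqref{eq:si_choice}, the lower bound $|\rho|R_i\le s_i$ holds automatically. So it suffices to show $\lceil |\rho|R_i\rceil\le R_i/|\rho|$ for every $i=1,\dots,N_x-1$.

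\textbf{Step 1: a uniform lower bound on $R_i$.} For interior nodes, $x_i=i\Delta x\in(0,1)$, so $\sigma_X(x_i)>0$ and $R_i>0$; in particular $s_i\ge1$ is a genuine element of $\mathbb N$. We have
$$
R_i=\frac{\sqrt{2\alpha\theta_0\,x_i(1-x_i)}}{\sigma}\,\frac{\Delta y}{\Delta x}.
$$
The map $x\mapsto x(1-x)$ is symmetric about $1/2$ and increasing on $[0,1/2]$. Over the nodes $\{\Delta x,\dots,1-\Delta x\}$ it is therefore minimized at $x_1=\Delta x$ (equivalently at $x_{N_x-1}$), where it equals $\Delta x(1-\Delta x)$. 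Hence
$$
R_i\ \ge\ \frac{\sqrt{2\alpha\theta_0(1-\Delta x)}}{\sigma}\,\frac{\Delta y}{\sqrt{\Delta x}}\ \ge\ |\rho|,
$$
where the last inequality is exactly the mesh condition \eqref{eq:jacobi_lower_mesh_condition}.

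\textbf{Step 2: case split on $|\rho|R_i$.} If $|\rho|R_i\le 1$, then $s_i=1$ because $R_i>0$. The required inequality $1\le R_i/|\rho|$ is then $R_i\ge|\rho|$, which is Step 1. If $|\rho|R_i>1$, we use $\lceil a\rceil<a+1<2a$ for $a>1$, which gives $s_i<2|\rho|R_i$. The hypothesis $|\rho|\le1/\sqrt2$ gives $2\rho^2\le1$, that is $2|\rho|\le 1/|\rho|$, so $s_i<R_i/|\rho|$. In both cases \eqref{eq:coefficient_nonnegativity_condition} holds, and Lemma~\ref{lem:coefficient_nonnegativity} yields nonnegativity of all three weights.

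\textbf{Main obstacle.} The only delicate point is the small-$R_i$ regime near the boundary nodes, where $\sigma_X$ degenerates. There the ceiling jumps to $1$ regardless of how small $|\rho|R_i$ is, so a lower bound on $R_i$ is indispensable. This is why the mesh condition involves $\sqrt{\Delta x}$ and the factor $(1-\Delta x)$ coming from the Jacobi structure at $x_1$. The large-$R_i$ regime is handled purely by the restriction $|\rho|\le 1/\sqrt2$, which absorbs the rounding loss of at most one unit.
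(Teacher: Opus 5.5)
Your proof is correct and follows essentially the same route as the paper: you reduce to $\lceil|\rho|R_i\rceil\le R_i/|\rho|$ via Lemma~\ref{lem:coefficient_nonnegativity}, observe that the mesh condition is exactly $R_1\ge|\rho|$ with $R_i\ge R_1$, and split on $|\rho|R_i\le1$ versus $|\rho|R_i>1$. The differences are cosmetic: the paper gets $R_i\ge R_1$ from the explicit form $R_i\propto\sqrt{i(N_x-i)}$ and closes the case $|\rho|R_i>1$ through $R_i(1/|\rho|-|\rho|)>1/\rho^2-1\ge1$, whereas you use the symmetry of $x(1-x)$ and the chain $s_i<2|\rho|R_i\le R_i/|\rho|$.
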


\begin{proof}
{The proof is given in
Appendix~\ref{app:proof_jacobi_coefficient_nonnegativity}.}
\end{proof}

\paragraph{{Stencil containment.}}
The preceding conditions concern the signs of the coefficients in the
positive type representation. A separate condition is needed because the
oblique second difference in \eqref{eq:xy_oblique_second_difference} uses
the production indices $i-s_i$ and $i+s_i$. Thus the oblique uses only points from the computational domain
 of $x$ and $y$ if and only if
\begin{equation}
\label{eq:oblique_stencil_condition}
0\le i-s_i,
\qquad
i+s_i\le N_x,
\qquad\text{equivalently}\qquad
s_i
\le
\min\{i,N_x-i\}.
\end{equation}
With the choice \eqref{eq:si_choice}, this condition is equivalent to
$\left\lceil |\rho|R_i\right\rceil\le\min\{i,N_x-i\}$.

\begin{lemma}[Wide stencil condition for the Jacobi diffusion]
\label{lem:jacobi_interior_oblique_stencil}
Assume $\rho<0$. Let
$s_i$ be given by \eqref{eq:si_choice}. Then
\eqref{eq:oblique_stencil_condition} holds for all
$i=1,\dots,N_x-1$ if and only if
\begin{equation}
\label{eq:jacobi_upper_mesh_condition}
\frac{\Delta y}{\sqrt{\Delta x}}
\le
\frac{\sigma}{|\rho|\sqrt{2\alpha\theta_0(1-\Delta x)}}.
\end{equation}
\end{lemma}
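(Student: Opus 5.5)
The plan is to reduce the stencil containment condition \eqref{eq:oblique_stencil_condition}, with $s_i$ chosen by \eqref{eq:si_choice}, to one inequality per index $i$. I will then show that the binding inequality is the one at the index nearest the boundary, $i=1$ (equivalently $i=N_x-1$).

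\emph{Step 1: explicit form of $R_i$.} On the grid \eqref{eq:production_grid} we have $x_i=i\Delta x$ and $1-x_i=(N_x-i)\Delta x$. Hence $x_i(1-x_i)=(\Delta x)^2\,i(N_x-i)$, and by \eqref{eq:Ri_definition}
\[
R_i=\frac{\sqrt{2\alpha\theta_0}\,\Delta x\sqrt{i(N_x-i)}}{\sigma}\,\frac{\Delta y}{\Delta x}
=\frac{\sqrt{2\alpha\theta_0}\,\Delta y}{\sigma}\sqrt{i(N_x-i)} .
\]
Write $c:=|\rho|\sqrt{2\alpha\theta_0}\,\Delta y/\sigma$, so that $|\rho|R_i=c\sqrt{i(N_x-i)}$.

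\emph{Step 2: remove the ceiling.} The quantity $m_i:=\min\{i,N_x-i\}$ is an integer. For any real $a$ and integer $m$, $\lceil a\rceil\le m$ holds if and only if $a\le m$. So the condition $\lceil|\rho|R_i\rceil\le m_i$ is equivalent to $c\sqrt{i(N_x-i)}\le m_i$.

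\emph{Step 3: symmetry and reduction to $i=1$.} The expression $i(N_x-i)$ and $m_i$ are both invariant under $i\mapsto N_x-i$, so it suffices to take $1\le i\le N_x/2$, where $m_i=i$. Dividing by $\sqrt{i(N_x-i)}>0$, the condition becomes
\[
c\le\sqrt{\frac{i}{N_x-i}} .
\]
The right-hand side is strictly increasing in $i$. Therefore the condition holds for all $i$ if and only if it holds at $i=1$, that is, $c\le(N_x-1)^{-1/2}$. Necessity comes from taking $i=1$; sufficiency comes from the monotonicity.

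\emph{Step 4: translate back to mesh sizes.} Since $N_x-1=(1-\Delta x)/\Delta x$, we have $(N_x-1)^{-1/2}=\sqrt{\Delta x/(1-\Delta x)}$. Substituting the definition of $c$ and rearranging gives exactly \eqref{eq:jacobi_upper_mesh_condition}.

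The only real subtlety is the exact equivalence with the ceiling in Step 2, together with checking that the worst index is $i=1$ rather than an interior one. For example, near $i=N_x/2$ the ratio $i/(N_x-i)$ tends to $1$, which gives a weaker requirement. The degenerate case $N_x=1$, where there are no interior indices, is vacuous.
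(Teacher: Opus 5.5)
Your proof is correct and follows essentially the same route as the paper: the explicit identity $|\rho|R_i=\frac{|\rho|\sqrt{2\alpha\theta_0}}{\sigma}\,\Delta y\sqrt{i(N_x-i)}$, removal of the ceiling using the integrality of $\min\{i,N_x-i\}$, the symmetry $i\mapsto N_x-i$, and monotonicity in $i$ showing that $i=1$ is the binding index, so that the condition reduces to $|\rho|R_1\le 1$. The only difference is presentational: the paper argues necessity (via $s_1\le 1$) and sufficiency separately, whereas you obtain both at once through a chain of equivalences.
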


\begin{proof}
{The proof is given in
Appendix~\ref{app:proof_jacobi_interior_oblique_stencil}.}
\end{proof}

Combining Lemmas~\ref{lem:jacobi_coefficient_nonnegativity} and
\ref{lem:jacobi_interior_oblique_stencil} yields the sufficient  condition
\begin{equation}
\label{eq:mesh_sandwich}
\frac{|\rho|\sigma}{\sqrt{2\alpha\theta_0(1-\Delta x)}}
\le
\frac{\Delta y}{\sqrt{\Delta x}}
\le
\frac{\sigma}{|\rho|\sqrt{2\alpha\theta_0(1-\Delta x)}}.
\end{equation}
Under \eqref{eq:mesh_sandwich}, both
\eqref{eq:coefficient_nonnegativity_condition} and
\eqref{eq:oblique_stencil_condition} hold for all $i=1,\dots,N_x-1$ with
$s_i$ given by \eqref{eq:si_choice}.

In terms of $N_y$, the condition \eqref{eq:mesh_sandwich} is given by
\begin{equation}
\label{eq:admissible_Ny}
\left\lceil
\frac{|\rho| (y_{\max}-y_{\min})\sqrt{2\alpha\theta_0(1-\Delta x)}}{\sigma}\,
\sqrt{N_x}
\right\rceil
\;\le\; N_y\;\le\;
\left\lfloor
\frac{(y_{\max}-y_{\min})\sqrt{2\alpha\theta_0(1-\Delta x)}}{|\rho|\sigma}\,
\sqrt{N_x}
\right\rfloor.
\end{equation}

\paragraph{First-order terms.}
We now record the first-order terms entering the implicit drift and diffusion
step in the production and price variables. {For $\tau\in[0,T_{\mathrm{gc}}]$, define the reversed-time convection coefficients
$$
b_X(\tau,x):=-\mu_X(T_{\mathrm{gc}}-\tau,x),
\qquad
b_Y(\tau,y):=-\mu_Y(T_{\mathrm{gc}}-\tau,y).
$$
At the new reversed-time level $\tau_{n+1}$, the corresponding grid values are
\begin{equation}
\label{eq:xy_backward_velocities}
b_{X,i}^{n+1}
:=b_X(\tau_{n+1},x_i)
=-\mu_X(T_{\mathrm{gc}}-\tau_{n+1},x_i),
\qquad
b_{Y,j}^{n+1}
:=b_Y(\tau_{n+1},y_j)
=-\mu_Y(T_{\mathrm{gc}}-\tau_{n+1},y_j).
\end{equation}}
 We use the upwind
finite differences defined in \eqref{eq:upwind_first_order}.

For compactness, we introduce the nonnegative transport and diffusion coefficients respectively
\begin{equation}
\label{eq:xy_transport_rates}
a_i^{x,W}
:=
\frac{({b_{X,i}^{n+1}})^+}{\Delta x},
\qquad
a_i^{x,E}
:=
\frac{-({b_{X,i}^{n+1}})^-}{\Delta x},
\qquad
a_j^{y,S}
:=
\frac{({b_{Y,j}^{n+1}})^+}{\Delta y},
\qquad
a_j^{y,N}
:=
\frac{-({b_{Y,j}^{n+1}})^-}{\Delta y},
\end{equation}
\begin{equation}
\label{eq:xy_diffusion_rates}
\nu_i^x
:=
\frac{\chi_i^x}{(\Delta x)^2},
\qquad
\nu_i^y
:=
\frac{\chi_i^y}{(\Delta y)^2},
\qquad
\nu_i^{xy}
:=
\frac{\chi_i^{xy}}{(s_i\Delta x)^2+(\Delta y)^2}.
\end{equation}
The terms in \eqref{eq:xy_transport_rates} are nonnegative by
construction and the terms in \eqref{eq:xy_diffusion_rates} are
nonnegative if the   condition
\eqref{eq:coefficient_nonnegativity_condition} holds.

\paragraph{Matrix assembly at interior nodes.}
For $i\in\{1,\dots,N_x-1\}$ and $j\in\{1,\dots,N_y-1\}$, assuming the
interior oblique stencil condition \eqref{eq:oblique_stencil_condition},
the implicit $(x,y)$ substep is
\begin{equation}
\label{eq:xy_discrete_residual}
\begin{aligned}
&
\frac{{W^{n+1}_{i,j,k}}-{U^{n+1}_{i,j,k}}}{{\Delta\tau}}
+
\Delta_{x,{b_X}}^{\mathrm{up}}{W^{n+1}_{i,j,k}}
+
\Delta_{y,{b_Y}}^{\mathrm{up}}{W^{n+1}_{i,j,k}}
\\
&\qquad
-
\chi_i^x\Delta_{xx}{W^{n+1}_{i,j,k}}
-
\chi_i^y\Delta_{yy}{W^{n+1}_{i,j,k}}
-
\chi_i^{xy}\Delta_{xy}^{s_i}{W^{n+1}_{i,j,k}}
=
0.
\end{aligned}
\end{equation}
Equivalently, after multiplying by ${\Delta\tau}$, the row is
\begin{equation}
\label{eq:xy_matrix_row}
\begin{aligned}
&
C^C_{i,j}{W^{n+1}_{i,j,k}}
+
C^W_{i,j}{W^{n+1}_{i-1,j,k}}
+
C^E_{i,j}{W^{n+1}_{i+1,j,k}}
+
C^S_{i,j}{W^{n+1}_{i,j-1,k}}
+
C^N_{i,j}{W^{n+1}_{i,j+1,k}}
\\
&\qquad
+
C^{SE}_{i,j}{W^{n+1}_{i+s_i,j-1,k}}
+
C^{NW}_{i,j}{W^{n+1}_{i-s_i,j+1,k}}
=
{U^{n+1}_{i,j,k}},
\end{aligned}
\end{equation}
where

\begin{equation}
\label{eq:xy_matrix_coefficients}
\begin{aligned}
C^W_{i,j}
&=
-{\Delta\tau}
\left(
\frac{({b_{X,i}^{n+1}})^+}{\Delta x}
+
\frac{\chi_i^x}{(\Delta x)^2}
\right),
\quad
C^E_{i,j}
=
-{\Delta\tau}
\left(
\frac{-({b_{X,i}^{n+1}})^-}{\Delta x}
+
\frac{\chi_i^x}{(\Delta x)^2}
\right),
\\
C^S_{i,j}
&=
-{\Delta\tau}
\left(
\frac{({b_{Y,j}^{n+1}})^+}{\Delta y}
+
\frac{\chi_i^y}{(\Delta y)^2}
\right),
\quad
C^N_{i,j}
=
-{\Delta\tau}
\left(
\frac{-({b_{Y,j}^{n+1}})^-}{\Delta y}
+
\frac{\chi_i^y}{(\Delta y)^2}
\right),
\\
C^{SE}_{i,j}
&=
-{\Delta\tau}
\frac{\chi_i^{xy}}{(s_i\Delta x)^2+(\Delta y)^2},
\quad
C^{NW}_{i,j}
=
-{\Delta\tau}
\frac{\chi_i^{xy}}{(s_i\Delta x)^2+(\Delta y)^2},
\\
C^C_{i,j}
&=
1+{\Delta\tau}
\left(
\frac{({b_{X,i}^{n+1}})^+-({b_{X,i}^{n+1}})^-}{\Delta x}
+
\frac{({b_{Y,j}^{n+1}})^+-({b_{Y,j}^{n+1}})^-}{\Delta y}
+
\frac{2\chi_i^x}{(\Delta x)^2}
+
\frac{2\chi_i^y}{(\Delta y)^2}
+
\frac{2\chi_i^{xy}}{(s_i\Delta x)^2+(\Delta y)^2}
\right).
\end{aligned}
\end{equation}
In
\eqref{eq:xy_matrix_coefficients}, the off-diagonal matrix coefficients,
namely
$C^W_{i,j}$, $C^E_{i,j}$, $C^S_{i,j}$, $C^N_{i,j}$,
$C^{SE}_{i,j}$, and $C^{NW}_{i,j}$, are nonpositive whenever
\eqref{eq:coefficient_nonnegativity_condition} holds. If further
\eqref{eq:oblique_stencil_condition} holds, then the oblique production
indices $i-s_i$ and $i+s_i$ appearing in \eqref{eq:xy_matrix_row} belong
to $\{0,\dots,N_x\}$.

\paragraph{Boundary contributions }
We use the boundary properties stated in
\eqref{eq:production_diffusion_boundary} and
\eqref{eq:production_drift_boundary}. Since $\sigma_X$ vanishes at
$x=0$ and $x=1$, the rows at $i=0$ and $i=N_x$ contain neither the
diffusion of $x$ nor the mixed-derivative contribution
$\rho\sigma_X(x)\sigma v_{xy}$. They are assembled from the remaining
terms in the $(x,y)$ equation.

{The drift condition \eqref{eq:production_drift_boundary} implies
$b_{X,0}^{n+1}\le0$ and $b_{X,N_x}^{n+1}\ge0$. Hence, for $j=1,\ldots,N_y-1$, the finite difference discretization of
the implicit $(x,y)$ substep is given by
\begin{equation}
\label{eq:production_boundary_rows}
\frac{W^{n+1}_{i,j,k}-U^{n+1}_{i,j,k}}{\Delta\tau}
+
\Delta_{x,b_X}^{\mathrm{up}}W^{n+1}_{i,j,k}
+
\Delta_{y,b_Y}^{\mathrm{up}}W^{n+1}_{i,j,k}
-
\frac{\sigma^2}{2}\Delta_{yy}W^{n+1}_{i,j,k}
=0,
\qquad i\in\{0,N_x\},
\end{equation}
where
\begin{equation}
\Delta_{x,b_X}^{\mathrm{up}}W^{n+1}_{i,j,k}
=
\begin{cases}
\displaystyle
b_{X,0}^{n+1}
\frac{W^{n+1}_{1,j,k}-W^{n+1}_{0,j,k}}{\Delta x},
& i=0,\\[10pt]
\displaystyle
b_{X,N_x}^{n+1}
\frac{W^{n+1}_{N_x,j,k}-W^{n+1}_{N_x-1,j,k}}{\Delta x},
& i=N_x.
\end{cases}
\end{equation}}

At the boundary of $y$, we discretize
\eqref{eq:price_farfield_condition} by a backward difference at
$y=y_{\max}$ and a forward difference at $y=y_{\min}$. This gives
\begin{equation}
\label{eq:affine_boundary_rows}
\begin{aligned}
{W^{n+1}_{i,N_y,k}-W^{n+1}_{i,N_y-1,k}}
&=
{a_+(\tau_{n+1})}\Delta y,
\\
{W^{n+1}_{i,0,k}-W^{n+1}_{i,1,k}}
&=
-{a_-(\tau_{n+1})}\Delta y,
\end{aligned}
\end{equation}
for all $i\in\{0,\ldots,N_x\}$ and $k\in\{0,\ldots,N_q\}$. These rows
replace the implicit $(x,y)$ equation at $j=0$ and $j=N_y$.
 {{After specifying the numerical boundary conditions, we prove the
monotonicity of the implicit $(x,y)$ step in
Proposition~\ref{prop:xy_mmatrix}.}

\begin{proposition}[$M$-matrix property of the implicit $(x,y)$ step]
\label{prop:xy_mmatrix}
Assume $\rho<0$. Assume that the drift of $x$ satisfies
\eqref{eq:production_drift_boundary} and that, for every
$i=1,\dots,N_x-1$, the integer width $s_i$ satisfies
\eqref{eq:coefficient_nonnegativity_condition} and
\eqref{eq:oblique_stencil_condition}. Then, for every ${\Delta\tau}>0$
and each fixed inventory node $q_k$, the coefficient matrix $A_{xy}^{n+1}$ defined by
\eqref{eq:xy_matrix_row}--\eqref{eq:xy_matrix_coefficients}, boundary of $x$ in \eqref{eq:production_boundary_rows}, and boundary of $y$ in \eqref{eq:affine_boundary_rows} is a
nonsingular $M$-matrix. Consequently, the implicit map
${U^{n+1}_{\cdot,\cdot,k}\mapsto W^{n+1}_{\cdot,\cdot,k}}$ is monotone.
\end{proposition}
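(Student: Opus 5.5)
Our approach is the standard route to the $M$-matrix property: we show that $A_{xy}^{n+1}$ is a $Z$-matrix (nonpositive off-diagonal entries, positive diagonal) that is weakly diagonally dominant by rows, and that every row is connected through a chain of nonzero entries to a row with strict dominance. Such a matrix is weakly chained diagonally dominant. A weakly chained diagonally dominant $Z$-matrix is a nonsingular $M$-matrix, so $(A_{xy}^{n+1})^{-1}\ge0$ entrywise. Monotonicity of $U^{n+1}_{\cdot,\cdot,k}\mapsto W^{n+1}_{\cdot,\cdot,k}$ then follows immediately. The boundary data $a_\pm(\tau_{n+1})\Delta y$ enter only as an additive constant on the right-hand side, so the map is affine with a nonnegative linear part.

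\emph{Step 1: interior rows.} For $1\le i\le N_x-1$ and $1\le j\le N_y-1$, the hypothesis \eqref{eq:coefficient_nonnegativity_condition} together with Lemma~\ref{lem:coefficient_nonnegativity} gives $\chi_i^x,\chi_i^y,\chi_i^{xy}\ge0$. By \eqref{eq:xy_matrix_coefficients}, all six off-diagonal entries are then $\le0$, and \eqref{eq:oblique_stencil_condition} guarantees that the columns $i\pm s_i$ exist in the grid. A direct summation shows
\[
C^C_{i,j}+C^W_{i,j}+C^E_{i,j}+C^S_{i,j}+C^N_{i,j}+C^{SE}_{i,j}+C^{NW}_{i,j}=1.
\]
This holds because $b^+-b^-=|b|$ and each diffusion weight appears with $+2$ on the diagonal and $-1$ twice off the diagonal. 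Hence every interior row is strictly diagonally dominant, with diagonal entry $\ge1$.

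\emph{Step 2: boundary rows in $x$.} For rows $i\in\{0,N_x\}$ with $1\le j\le N_y-1$, we use \eqref{eq:production_drift_boundary}, which gives $b_{X,0}^{n+1}\le0$ and $b_{X,N_x}^{n+1}\ge0$. With this, the one-sided differences in \eqref{eq:production_boundary_rows} contribute $+\Delta\tau|b_X|/\Delta x$ to the diagonal and $-\Delta\tau|b_X|/\Delta x$ to the single neighbour. The $y$-upwind and $\frac{\sigma^2}{2}\Delta_{yy}$ terms contribute as in Step 1. After multiplying by $\Delta\tau$, the row sum is again exactly $1$, so these rows are also strictly dominant $Z$-rows.

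\emph{Step 3: boundary rows in $y$.} The rows \eqref{eq:affine_boundary_rows} have diagonal entry $1$ and one off-diagonal entry $-1$. They are $Z$-rows with row sum $0$, i.e.\ only weakly dominant. This is the one delicate point. Row $(i,N_y)$ points to $(i,N_y-1)$, and row $(i,0)$ points to $(i,1)$; both targets are rows treated in Steps 1–2 and are strictly dominant. The corner rows $j\in\{0,N_y\}$ are also of type \eqref{eq:affine_boundary_rows} and are covered by the same argument. Every weakly dominant row is therefore chained in one step to a strictly dominant row, so the matrix is weakly chained diagonally dominant. If one prefers to avoid the chain argument, the same conclusion can be reached by exhibiting a positive vector $e$ with $A_{xy}^{n+1}e>0$. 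We would take $e_{i,j}=1$ in the interior and $e_{i,j}=2$ on the $y$-boundary rows, and check that the interior rows stay positive for small perturbations. The chain argument is cleaner, so we would use it.

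The main obstacle is the bookkeeping that the $y$-boundary rows are only weakly dominant, and making sure that no row references an index outside the grid. The latter is exactly where the stencil-containment condition \eqref{eq:oblique_stencil_condition} and the inward drift condition \eqref{eq:production_drift_boundary} are needed.
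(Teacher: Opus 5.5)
Your proof is correct and follows the same route as the paper: nonpositive off-diagonal entries and exact unit row sums for the interior and $x$-boundary rows, then diagonal dominance of the assembled matrix to obtain $(A_{xy}^{n+1})^{-1}\ge 0$. Your weak-chain step $(i,0)\to(i,1)$, $(i,N_y)\to(i,N_y-1)$ also makes explicit how the zero-row-sum affine $y$-rows are handled, which the paper's appendix only lists without argument. The one slip is in the alternative you set aside: with weight $2$ on the $y$-boundary nodes, the interior rows at $j=N_y-1$ evaluate to $1+C^N_{i,N_y-1}+C^{NW}_{i,N_y-1}$, which is negative once $\Delta\tau$ is large, so that weight must instead be $1+\delta$ with $\delta$ small enough (depending on $\Delta\tau$) to keep the rows at $j\in\{1,N_y-1\}$ positive.
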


\begin{proof}
{The proof is given in Appendix~\ref{app:proof_xy_mmatrix}.}
\end{proof}}
}

{Figure~\ref{fig:wide_stencil_width_rho_m08} illustrates the choice of
the integer stencil widths satisfying the conditions of
Proposition~\ref{prop:xy_mmatrix}. For $\rho=-0.8$, the values
$s_i=\left\lceil |\rho|R_i\right\rceil$ lie between the bounds in
\eqref{eq:coefficient_nonnegativity_condition} at every interior production
node $x_i$ and satisfy the containment condition
\eqref{eq:oblique_stencil_condition}.}

\begin{figure}[htbp]
  \centering
  \includegraphics[width=0.60\textwidth]
  {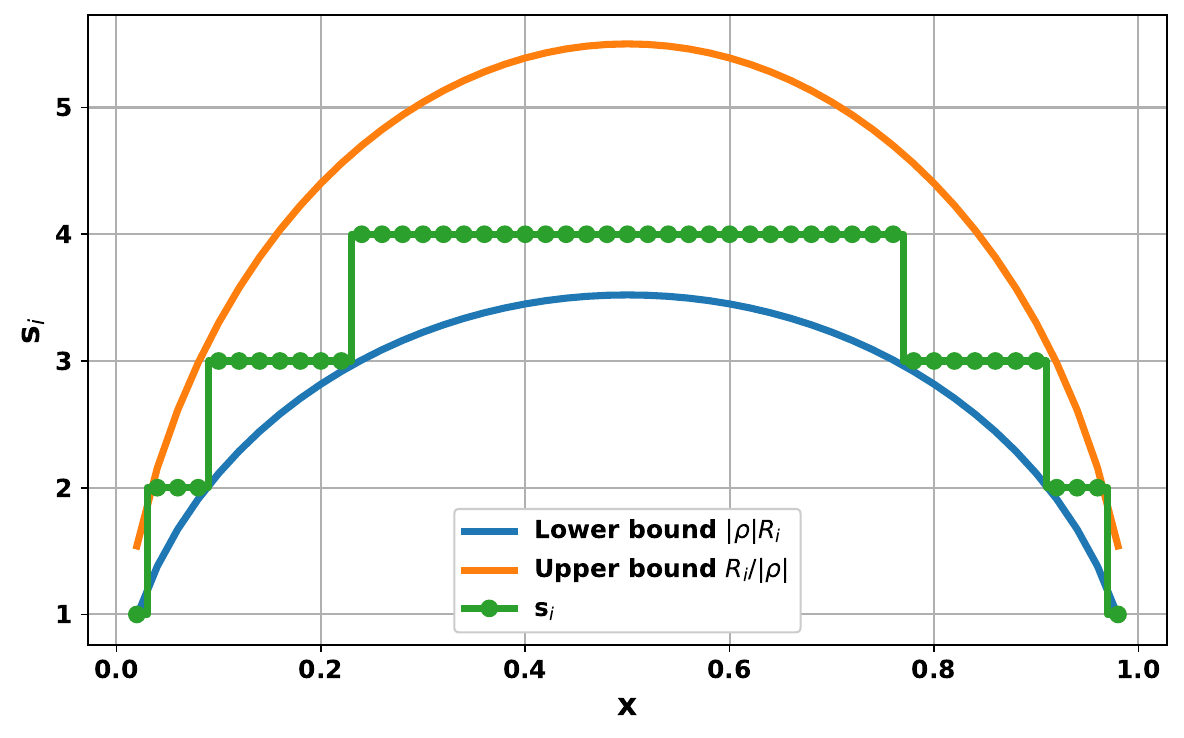}
  \caption{{The bounds $|\rho|R_i$ and $R_i/|\rho|$ in
  \eqref{eq:coefficient_nonnegativity_condition}, together with the selected
  integer stencil widths
  $s_i=\lceil|\rho|R_i\rceil$ from
  \eqref{eq:si_choice}, for $\rho=-0.8$ and $N_x=50$.}}
  \label{fig:wide_stencil_width_rho_m08}
\end{figure}
\FloatBarrier


\subsection{Explicit Discretization of the Jump Operator}
\label{sec:jump_discretization}

After completing the implicit $(x,y)$-update in Step~2, the final step of
the Lie-IMEX scheme treats the nonlocal jump term explicitly. Given
${W^{n+1}_{i,j,k}}$ from the previous substep, we advance to
${V^{n+1}_{i,j,k}}$ via
\begin{equation}\label{eq:jump_update}
{V^{n+1}_{i,j,k}}
=
{W^{n+1}_{i,j,k}}
+
{\Delta\tau}\;\mathcal{I}_y[{W^{n+1}}]_{i,j,k}.
\end{equation}

The standard approach to discretizing the nonlocal jump operator
\cite{cont2005finite} evaluates $\mathcal{I}_y[{W^{n+1}}]_{i,j,k}$ by numerical
quadrature, approximating
$$
\mathcal{I}_y[{W^{n+1}}]_{i,j,k}
\approx
\lambda\sum_{m=1}^{M} w_m
\Bigl(
{W^{n+1}_I}\bigl(x_i,\,y_j+z_m,\,q_k\bigr)
-
{W^{n+1}_{i,j,k}}
\Bigr),
$$
where $\{(z_m,w_m)\}_{m=1}^M$ are quadrature nodes and weights and
${W^{n+1}_I}$ denotes the piecewise linear interpolant of ${W^{n+1}}$ in~$y$. Since the
shifted points $\{y_j+z_m\}$ generally do not coincide with the grid
$\{y_j\}_{j=0}^{N_y}$, the quadrature requires evaluating ${W^{n+1}}$ at
{off grid} locations, which raises two issues. First, when a shifted point
$y_j+z_m$ falls outside $[y_{\min},y_{\max}]$, the interpolant must be
replaced by extrapolation. Extrapolation beyond the grid boundary may
introduce negative coefficients in the discrete jump stencil, violating the
monotonicity property of the numerical scheme. Second, even for interior shifted points,
piecewise linear interpolation introduces an interpolation error that must
be controlled separately.

{For jumps leaving the truncated price domain, we use the same
{asymptotic boundary condition} as in the implicit $(x,y)$ substep. In particular,
\eqref{eq:price_farfield_condition} prescribes the slopes
$a_-(\tau)$ and $a_+(\tau)$ at $y_{\min}$ and $y_{\max}$, respectively. For jumps whose shifted price remains
inside the truncated price domain, we exploit the double exponential
jump size density and use the differential formulation inspired by
\cite{carr2007numerical}. This replaces the semi-infinite jump integrals
by first order ODEs in~$y$, evaluated efficiently via  recurrence relations on the
existing grid, and therefore avoids interpolation at interior off-grid
points.}


By substituting the asymmetric {double exponential} density into the nonlocal
operator \eqref{eq:jump_operator_def}, the jump term can be written as
\begin{equation}\label{eq:jump_decomp}
\mathcal{I}_y[v](t,x,y,q)
=
\lambda\Bigl(
p_+\,\mathcal{J}_+[v](t,x,y,q)
+
p_-\,\mathcal{J}_-[v](t,x,y,q)
-
v(t,x,y,q)
\Bigr),
\end{equation}
where the {semi-infinite} integral operators are defined by
\begin{align}
\mathcal{J}_+[v](t,x,y,q)
&:=
\eta_+\int_0^\infty v(t,x,y+z,q)\,e^{-\eta_+ z}\,\mathrm{d}z,
\label{eq:Jplus_def}\\[4pt]
\mathcal{J}_-[v](t,x,y,q)
&:=
\eta_-\int_{-\infty}^0 v(t,x,y+z,q)\,e^{\eta_- z}\,\mathrm{d}z.
\label{eq:Jminus_def}
\end{align}

{The parts of these integrals for
which the shifted price remains inside the truncated domain are}
\begin{align}
\mathcal{J}_{+,\mathrm{loc}}[v](t,x,y,q)
&:=
\eta_+\int_0^{y_{\max}-y}
v(t,x,y+z,q)\,e^{-\eta_+ z}\,\mathrm{d}z,
\label{eq:Jplus_loc}\\[4pt]
\mathcal{J}_{-,\mathrm{loc}}[v](t,x,y,q)
&:=
\eta_-\int_{y_{\min}-y}^{0}
v(t,x,y+z,q)\,e^{\eta_- z}\,\mathrm{d}z.
\label{eq:Jminus_loc}
\end{align}

{To approximate the remaining parts,
we use the linear continuation implied by the
{asymptotic boundary condition}
\eqref{eq:price_farfield_condition}. Recall that
$\tau=T_{\mathrm{gc}}-t$. For an exterior price
$\widetilde y>y_{\max}$, we use}
$$
{
v(t,x,\widetilde y,q)
\approx
v(t,x,y_{\max},q)
+
a_+(\tau)(\widetilde y-y_{\max}),
}
$$
{whereas for $\widetilde y<y_{\min}$ we use}
$$
{
v(t,x,\widetilde y,q)
\approx
v(t,x,y_{\min},q)
+
a_-(\tau)(\widetilde y-y_{\min}).
}
$$
{Let}
$
{
d_+:=y_{\max}-y,
\qquad
d_-:=y-y_{\min}
}
$. {The corresponding contributions outside the truncated price domain
are then}
$$
{
\eta_+\int_{d_+}^{\infty}
\left[
v(t,x,y_{\max},q)
+
a_+(\tau)(z-d_+)
\right]
e^{-\eta_+z}\,\mathrm{d}z
=
e^{-\eta_+d_+}
\left(
v(t,x,y_{\max},q)
+
\frac{a_+(\tau)}{\eta_+}
\right),
}
$$
{and}
$$
{
\eta_-\int_{-\infty}^{-d_-}
\left[
v(t,x,y_{\min},q)
+
a_-(\tau)(z+d_-)
\right]
e^{\eta_-z}\,\mathrm{d}z
=
e^{-\eta_-d_-}
\left(
v(t,x,y_{\min},q)
-
\frac{a_-(\tau)}{\eta_-}
\right).
}
$$

{Accordingly, on the truncated price domain we approximate the two
semi-infinite integral operators by}
\begin{align}
{\widehat{\mathcal{J}}_+[v](t,x,y,q)}
&{:=
\mathcal{J}_{+,\mathrm{loc}}[v](t,x,y,q)
+
e^{-\eta_+d_+}
\left(
v(t,x,y_{\max},q)
+
\frac{a_+(\tau)}{\eta_+}
\right),}
\label{eq:Jplus_tr}\\[4pt]
{\widehat{\mathcal{J}}_-[v](t,x,y,q)}
&{:=
\mathcal{J}_{-,\mathrm{loc}}[v](t,x,y,q)
+
e^{-\eta_-d_-}
\left(
v(t,x,y_{\min},q)
-
\frac{a_-(\tau)}{\eta_-}
\right).}
\label{eq:Jminus_tr}
\end{align}
{The corresponding approximation of the jump operator is}
\begin{equation}\label{eq:jump_localized}
{
\widehat{\mathcal{I}}_y[v](t,x,y,q)
:=
\lambda\Bigl(
p_+\,\widehat{\mathcal{J}}_+[v](t,x,y,q)
+
p_-\,\widehat{\mathcal{J}}_-[v](t,x,y,q)
-
v(t,x,y,q)
\Bigr).
}
\end{equation}

\begin{proposition}[{Truncation error}]
\label{prop:loc_error}
{Fix $(t,x,q)$ and define, for $u\geq0$,}
$$
{
R_+(u)
:=
v(t,x,y_{\max}+u,q)
-
v(t,x,y_{\max},q)
-
a_+(\tau)u,
}
$$
{and}
$$
{
R_-(u)
:=
v(t,x,y_{\min}-u,q)
-
v(t,x,y_{\min},q)
+
a_-(\tau)u.
}
$$
{Assume that
$R_+(u)e^{-\eta_+u}$ and $R_-(u)e^{-\eta_-u}$ are integrable on
$[0,\infty)$. Then, for every $y\in[y_{\min},y_{\max}]$,}
\begin{align}
{
\mathcal{I}_y[v](t,x,y,q)
-
\widehat{\mathcal{I}}_y[v](t,x,y,q)}
&{
=
\lambda p_+e^{-\eta_+d_+}\eta_+
\int_0^\infty
R_+(u)e^{-\eta_+u}\,\mathrm{d}u}
\nonumber\\
&{\quad
+
\lambda p_-e^{-\eta_-d_-}\eta_-
\int_0^\infty
R_-(u)e^{-\eta_-u}\,\mathrm{d}u.}
\label{eq:loc_error_bound}
\end{align}
{Consequently,}
\begin{align*}
{
\bigl|
\mathcal{I}_y[v](t,x,y,q)
-
\widehat{\mathcal{I}}_y[v](t,x,y,q)
\bigr|}
&{
\le
\lambda p_+e^{-\eta_+d_+}\eta_+
\int_0^\infty
|R_+(u)|e^{-\eta_+u}\,\mathrm{d}u}
\\
&{\quad
+
\lambda p_-e^{-\eta_-d_-}\eta_-
\int_0^\infty
|R_-(u)|e^{-\eta_-u}\,\mathrm{d}u.}
\end{align*}
\end{proposition}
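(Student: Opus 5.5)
The identity \eqref{eq:loc_error_bound} follows from a direct term-by-term comparison of $\mathcal{I}_y$ in \eqref{eq:jump_decomp} with $\widehat{\mathcal{I}}_y$ in \eqref{eq:jump_localized}. The term $-\lambda v(t,x,y,q)$ is common to both operators and cancels. We are therefore left with
\[
\mathcal{I}_y[v]-\widehat{\mathcal{I}}_y[v]
=\lambda p_+\bigl(\mathcal{J}_+[v]-\widehat{\mathcal{J}}_+[v]\bigr)
+\lambda p_-\bigl(\mathcal{J}_-[v]-\widehat{\mathcal{J}}_-[v]\bigr),
\]
and we treat the two brackets separately.

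\emph{Upward bracket.} Split the integral in \eqref{eq:Jplus_def} at $z=d_+=y_{\max}-y\ge0$. The piece over $[0,d_+]$ is exactly $\mathcal{J}_{+,\mathrm{loc}}[v]$ from \eqref{eq:Jplus_loc}, so it cancels against the same term in \eqref{eq:Jplus_tr}. What remains is
\[
\eta_+\int_{d_+}^\infty v(t,x,y+z,q)e^{-\eta_+z}\,\mathrm{d}z
\]
minus the exterior contribution computed just before \eqref{eq:Jplus_tr}. That exterior contribution equals
\[
\eta_+\int_{d_+}^\infty\bigl[v(t,x,y_{\max},q)+a_+(\tau)(z-d_+)\bigr]e^{-\eta_+z}\,\mathrm{d}z .
\]
Substitute $u=z-d_+$, so that $y+z=y_{\max}+u$ and $e^{-\eta_+z}=e^{-\eta_+d_+}e^{-\eta_+u}$. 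The difference of integrands is then exactly $R_+(u)$, and the bracket equals $e^{-\eta_+d_+}\eta_+\int_0^\infty R_+(u)e^{-\eta_+u}\,\mathrm{d}u$.

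\emph{Downward bracket.} Split \eqref{eq:Jminus_def} at $z=-d_-$ and substitute $u=-z-d_-$. Then $y+z=y_{\min}-u$, $e^{\eta_-z}=e^{-\eta_-d_-}e^{-\eta_-u}$, and the linear continuation becomes $v(t,x,y_{\min},q)-a_-(\tau)u$. The integrand difference is therefore $R_-(u)$, which gives the second term of \eqref{eq:loc_error_bound}.

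The estimate then follows from the triangle inequality $|\int R|\le\int|R|$, since all prefactors $\lambda,p_\pm,\eta_\pm,e^{-\eta_\pm d_\pm}$ are nonnegative.

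The only delicate point is justifying the splitting of each semi-infinite integral into a finite and a tail part. This requires each piece to be finite separately. The local integrals are finite by continuity of $v$ on compact intervals. For the tails, the integrability hypothesis on $R_\pm e^{-\eta_\pm u}$, combined with the trivial integrability of the affine continuation against $e^{-\eta_\pm u}$, gives integrability of $v(\cdot,y_{\max}+u,\cdot)e^{-\eta_+u}$, and likewise for the lower tail. Hence the splitting and the subtraction are legitimate. Beyond this, the argument is a bookkeeping of changes of variables, and I expect no substantive obstacle.
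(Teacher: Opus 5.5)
Your proposal is correct and follows essentially the same route as the paper's proof. Both cancel the common $-\lambda v$ term, cancel the local integrals so that only the exterior tails minus the linear continuation remain, and then apply the substitutions $z=d_++u$ and $z=-d_--u$ to produce $R_\pm$ before using the triangle inequality. Your extra remark on the legitimacy of splitting the integrals is a welcome addition that the paper leaves implicit. Note, however, that continuity of $v$ in $y$ is not a hypothesis of this proposition; finiteness of the local integrals is presupposed by the definition of $\widehat{\mathcal{I}}_y$ itself.
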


\begin{proof}
The proof is given in Appendix~\ref{app:proof_loc_error}.
\end{proof}

\subsubsection{Differential Formulation}
\label{sec:diff_jump}

{The approximations
$\widehat{\mathcal{J}}_+$ and $\widehat{\mathcal{J}}_-$ of the
semi-infinite integral operators admit first order ODE characterizations
in the price variable $y$, with $(t,x,q)$ treated as parameters.} 

\begin{proposition}[ODE characterization]\label{prop:ode_jump}
Fix $(t,x,q)$ and assume that the map
$y \mapsto v(t,x,y,q)$ is continuous on $[y_{\min},y_{\max}]$. Then the maps
{$y \mapsto \widehat{\mathcal{J}}_+[v](t,x,y,q)$ and
$y \mapsto \widehat{\mathcal{J}}_-[v](t,x,y,q)$}
belong to $C^1([y_{\min},y_{\max}])$ and satisfy
\begin{align}
{\partial_y \widehat{\mathcal{J}}_+[v](t,x,y,q)}
&=
\eta_+\Bigl(
{\widehat{\mathcal{J}}_+[v](t,x,y,q)}
-
v(t,x,y,q)
\Bigr),
\nonumber\\[-2pt]
{\widehat{\mathcal{J}}_+[v](t,x,y_{\max},q)}
&=
{
v(t,x,y_{\max},q)
+
\frac{a_+(\tau)}{\eta_+},
}
\label{eq:ode_Jplus_loc}\\[4pt]
{\partial_y \widehat{\mathcal{J}}_-[v](t,x,y,q)}
&=
\eta_-\Bigl(
v(t,x,y,q)
-
{\widehat{\mathcal{J}}_-[v](t,x,y,q)}
\Bigr),
\nonumber\\[-2pt]
{\widehat{\mathcal{J}}_-[v](t,x,y_{\min},q)}
&=
{
v(t,x,y_{\min},q)
-
\frac{a_-(\tau)}{\eta_-}.
}
\label{eq:ode_Jminus_loc}
\end{align}
\end{proposition}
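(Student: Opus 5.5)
The approach is to rewrite $\widehat{\mathcal{J}}_\pm$ as integrals over the price variable with the evaluation point $y$ appearing only in an exponential weight and an integration limit, and then differentiate. Throughout, $(t,x,q)$ are fixed, so we write $v(s):=v(t,x,s,q)$ and treat $a_\pm(\tau)$ as constants.

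\emph{The operator $\widehat{\mathcal{J}}_+$.} Substituting $s=y+z$ in \eqref{eq:Jplus_loc} gives
\[
\mathcal{J}_{+,\mathrm{loc}}[v](y)=\eta_+e^{\eta_+ y}\int_y^{y_{\max}} v(s)e^{-\eta_+ s}\,\mathrm{d}s .
\]
The exterior term can be written as $e^{-\eta_+ (y_{\max}-y)}C_+ = e^{\eta_+ y}\,e^{-\eta_+ y_{\max}}C_+$, where $C_+:=v(y_{\max})+a_+(\tau)/\eta_+$ does not depend on $y$. Hence
\[
\widehat{\mathcal{J}}_+[v](y)=e^{\eta_+ y}F_+(y),\qquad F_+(y):=\eta_+\int_y^{y_{\max}}v(s)e^{-\eta_+ s}\,\mathrm{d}s+e^{-\eta_+y_{\max}}C_+ .
\]
Since $v$ is continuous on $[y_{\min},y_{\max}]$, the fundamental theorem of calculus shows that $F_+\in C^1$ with $F_+'(y)=-\eta_+ v(y)e^{-\eta_+ y}$; at the endpoints this holds in the sense of one-sided derivatives. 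The product rule then gives
\[
\partial_y\widehat{\mathcal{J}}_+=\eta_+\widehat{\mathcal{J}}_+-\eta_+ v,
\]
and the right-hand side is continuous, which yields the $C^1$ claim. At $y=y_{\max}$ the integral vanishes and $d_+=0$, so $\widehat{\mathcal{J}}_+[v](y_{\max})=C_+$. This is exactly the boundary condition in \eqref{eq:ode_Jplus_loc}.

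\emph{The operator $\widehat{\mathcal{J}}_-$.} The same steps apply. Writing
\[
\widehat{\mathcal{J}}_-[v](y)=e^{-\eta_- y}F_-(y),\qquad F_-(y):=\eta_-\int_{y_{\min}}^{y}v(s)e^{\eta_- s}\,\mathrm{d}s+e^{\eta_- y_{\min}}C_-,\quad C_-:=v(y_{\min})-a_-(\tau)/\eta_-,
\]
we get $F_-'(y)=\eta_- v(y)e^{\eta_- y}$. The product rule then gives $\partial_y\widehat{\mathcal{J}}_-=-\eta_-\widehat{\mathcal{J}}_-+\eta_- v$, and evaluating at $y=y_{\min}$ gives $\widehat{\mathcal{J}}_-[v](y_{\min})=C_-$.

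\emph{Main point of care.} The argument is elementary. The only thing to check is that the exterior contribution depends on $y$ only through the factor $e^{-\eta_\pm d_\pm}$, while $v(y_{\max})$, $v(y_{\min})$ and $a_\pm(\tau)$ stay fixed. This is what lets that term combine with the local integral into the single factorized form above. The $C^1$ regularity up to the boundary relies only on continuity of $v$ on the closed interval, which is exactly the hypothesis of the proposition.
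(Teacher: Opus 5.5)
Your proof is correct and follows essentially the same argument as the paper: substitute $u=y+z$, factor out the exponential, apply the fundamental theorem of calculus and the product rule, and evaluate at the endpoint. The only cosmetic difference is that you fold the exterior term into a single factorized form $e^{\pm\eta_\pm y}F_\pm(y)$, whereas the paper keeps it as a separate term $B_\pm(y)$ and uses $\partial_y B_\pm=\pm\eta_\pm B_\pm$.
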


\begin{proof}
{The proof is given in Appendix~N.}
\end{proof}

We discretize \eqref{eq:ode_Jplus_loc}-\eqref{eq:ode_Jminus_loc} on the
uniform price grid $\{y_j\}_{j=0}^{N_y}$ with step size $\Delta y$. For the
positive part, {we approximate   $W^{n+1}_{i,\cdot,k}$ by the constant
value $W^{n+1}_{i,j,k}$ on each cell $[y_j,y_{j+1}]$.} The ODE
$$
{\partial_y \widehat{\mathcal{J}}_+}
=
\eta_+\bigl(
{\widehat{\mathcal{J}}_+}
-
{W^{n+1}_{i,j,k}}
\bigr),
$$
then admits the exact relation
$$
{\widehat{\mathcal{J}}_{+,j}}
=
e^{-\eta_+\Delta y}\,
{\widehat{\mathcal{J}}_{+,j+1}}
+
\bigl(1-e^{-\eta_+\Delta y}\bigr)\,
{W^{n+1}_{i,j,k}}.
$$
For the negative part, {we approximate $W^{n+1}_{i,\cdot,k}$ by the
constant value $W^{n+1}_{i,j,k}$ on each cell $[y_{j-1},y_j]$.} The ODE
$$
{\partial_y \widehat{\mathcal{J}}_-}
=
\eta_-\bigl(
{W^{n+1}_{i,j,k}}
-
{\widehat{\mathcal{J}}_-}
\bigr),
$$
then has the exact solution
$$
{\widehat{\mathcal{J}}_{-,j}}
=
e^{-\eta_-\Delta y}\,
{\widehat{\mathcal{J}}_{-,j-1}}
+
\bigl(1-e^{-\eta_-\Delta y}\bigr)\,
{W^{n+1}_{i,j,k}}.
$$

Defining $r_+ := e^{-\eta_+\Delta y}\in(0,1)$ and
$r_- := e^{-\eta_-\Delta y}\in(0,1)$, we obtain the recurrence relations
\begin{align}
{\widehat{\mathcal{J}}_{+,j}}
&=
r_+\,{\widehat{\mathcal{J}}_{+,j+1}}
+
(1-r_+)\,{W^{n+1}_{i,j,k}},
\quad j=N_y-1,\ldots,0,
\nonumber\\[-2pt]
{\widehat{\mathcal{J}}_{+,N_y}}
&=
{
W^{n+1}_{i,N_y,k}
+
\frac{a_+(\tau_{n+1})}{\eta_+},
}
\label{eq:rec_plus}\\[4pt]
{\widehat{\mathcal{J}}_{-,j}}
&=
r_-\,{\widehat{\mathcal{J}}_{-,j-1}}
+
(1-r_-)\,{W^{n+1}_{i,j,k}},
\quad j=1,\ldots,N_y,
\nonumber\\[-2pt]
{\widehat{\mathcal{J}}_{-,0}}
&=
{
W^{n+1}_{i,0,k}
-
\frac{a_-(\tau_{n+1})}{\eta_-}.
}
\label{eq:rec_minus}
\end{align}

After computing both positive and negative parts, the {discrete jump
operator on the truncated price domain} is assembled for each interior price node
$j=1,\ldots,N_y-1$ as
\begin{equation}\label{eq:discrete_jump}
{\widehat{\mathcal{I}}_y[{W^{n+1}}]_{i,j,k}}
=
\lambda\Bigl(
p_+\,{\widehat{\mathcal{J}}_{+,j}}
+
p_-\,{\widehat{\mathcal{J}}_{-,j}}
-
{W^{n+1}_{i,j,k}}
\Bigr),
\end{equation}
and the explicit jump substep updates the interior values via
\begin{equation}\label{eq:step3}
{V^{n+1}_{i,j,k}}
=
{W^{n+1}_{i,j,k}}
+
{\Delta\tau}\,
{\widehat{\mathcal{I}}_y[{W^{n+1}}]_{i,j,k}},
\qquad j=1,\ldots,N_y-1.
\end{equation}

{After the interior jump update, we impose the same discrete
{asymptotic boundary condition} as in \eqref{eq:affine_boundary_rows}:}
\begin{equation}
\label{eq:post_jump_affine_refill}
\begin{aligned}
{V^{n+1}_{i,N_y,k}-V^{n+1}_{i,N_y-1,k}}
&=
{a_+(\tau_{n+1})\Delta y},
\\
{V^{n+1}_{i,0,k}-V^{n+1}_{i,1,k}}
&=
-{a_-(\tau_{n+1})\Delta y},
\end{aligned}
\end{equation}
{for all $i\in\{0,\ldots,N_x\}$ and
$k\in\{0,\ldots,N_q\}$.}

\begin{proposition}[Monotonicity of the explicit jump step]
\label{prop:cfl}
For every fixed pair of indices $(i,k)$ and every interior price variable node
$j=1,\dots,N_y-1$, {with the numerical boundary conditions
$a_-(\tau_{n+1})$ and $a_+(\tau_{n+1})$ prescribed by
\eqref{eq:price_farfield_coefficients},} the explicit jump update
$$
{W^{n+1}_{i,j,k}}
+
{\Delta\tau}\,
{\widehat{\mathcal{I}}_y[{W^{n+1}}]_{i,j,k}}
$$
is {an affine function} of the  values
${\{W^{n+1}_{i,\ell,k}\}_{\ell=0}^{N_y}}$ with nonnegative coefficients,
provided that
\begin{equation}\label{eq:cfl}
{\Delta\tau} \le \frac{1}{\lambda}.
\end{equation}
Consequently, the explicit jump step is monotone.
\end{proposition}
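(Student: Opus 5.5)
Unrolling both recurrences \eqref{eq:rec_plus} and \eqref{eq:rec_minus} will express $\widehat{\mathcal J}_{\pm,j}$ explicitly as affine combinations of the grid values $\{W^{n+1}_{i,\ell,k}\}_\ell$. For the positive part, iterating from $j$ up to $N_y$ gives
\[
\widehat{\mathcal J}_{+,j}=\sum_{\ell=j}^{N_y-1}(1-r_+)\,r_+^{\ell-j}\,W^{n+1}_{i,\ell,k}+r_+^{N_y-j}\,W^{n+1}_{i,N_y,k}+r_+^{N_y-j}\frac{a_+(\tau_{n+1})}{\eta_+}.
\]
Similarly, iterating from $j$ down to $0$ gives
\[
\widehat{\mathcal J}_{-,j}=\sum_{\ell=1}^{j}(1-r_-)\,r_-^{j-\ell}\,W^{n+1}_{i,\ell,k}+r_-^{j}\,W^{n+1}_{i,0,k}-r_-^{j}\frac{a_-(\tau_{n+1})}{\eta_-}.
\]
Since $r_\pm\in(0,1)$, every coefficient multiplying a grid value is nonnegative. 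The telescoping identity $\sum_{\ell=j}^{N_y-1}(1-r_+)r_+^{\ell-j}+r_+^{N_y-j}=1$, and its analogue for the minus part, shows that each $\widehat{\mathcal J}_{\pm,j}$ is a convex combination of grid values plus a constant. That constant depends only on $a_\pm(\tau_{n+1})$, which \eqref{eq:price_farfield_coefficients} prescribes independently of $W^{n+1}$. It therefore enters as the affine shift and does not affect monotonicity.

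Substituting into \eqref{eq:discrete_jump}, the update becomes
\[
W^{n+1}_{i,j,k}+\Delta\tau\,\widehat{\mathcal I}_y[W^{n+1}]_{i,j,k}=(1-\lambda\Delta\tau)W^{n+1}_{i,j,k}+\lambda\Delta\tau\bigl(p_+\widehat{\mathcal J}_{+,j}+p_-\widehat{\mathcal J}_{-,j}\bigr).
\]
The coefficients on $W^{n+1}_{i,\ell,k}$ for $\ell\neq j$ are $\lambda\Delta\tau p_\pm$ times nonnegative weights, so they are nonnegative unconditionally. The only delicate coefficient is the diagonal one at $\ell=j$. It equals $1-\lambda\Delta\tau$ plus the nonnegative contributions $\lambda\Delta\tau p_+(1-r_+)$ and $\lambda\Delta\tau p_-(1-r_-)$; the second appears because the minus sum includes $\ell=j$ for $j\ge1$, and the first because the plus sum includes $\ell=j$ for $j\le N_y-1$. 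Under \eqref{eq:cfl} we have $1-\lambda\Delta\tau\ge0$, which gives nonnegativity of the diagonal coefficient. The argument does not rely on the extra $(1-r_\pm)$ terms, and they only relax the bound.

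Finally, I would conclude monotonicity: if $W\le\widetilde W$ componentwise, then the affine map with nonnegative coefficients and a common constant preserves the ordering at all interior nodes. The boundary refill \eqref{eq:post_jump_affine_refill} sets $V^{n+1}_{i,N_y,k}$ and $V^{n+1}_{i,0,k}$ as interior values plus fixed constants, so the ordering also extends to the boundary. The main point requiring care is bookkeeping at the endpoints $j=1$ and $j=N_y-1$, so that indices in the unrolled sums and boundary weights are counted correctly. This is also where one confirms that the boundary terms $a_\pm/\eta_\pm$ are data rather than functions of $W$.
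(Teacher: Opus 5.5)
Your proposal is correct and follows the paper's proof essentially step by step. It uses the same unrolled expressions for $\widehat{\mathcal J}_{\pm,j}$ and arrives at the same diagonal coefficient $1-\lambda\Delta\tau(p_+r_++p_-r_-)\ge 1-\lambda\Delta\tau\ge 0$ under \eqref{eq:cfl}, treats the slope terms $a_\pm(\tau_{n+1})/\eta_\pm$ as a common additive shift, and handles the boundary refill \eqref{eq:post_jump_affine_refill} as order-preserving, as the paper does (your observation that the weights in each $\widehat{\mathcal J}_{\pm,j}$ sum to one is not needed for monotonicity, but it also gives an $\ell^\infty$ bound on the jump step up to an $O(\Delta\tau)$ shift).
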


\begin{proof}
The proof is given in Appendix~\ref{app:monotonicity_jump}.
\end{proof}


\section{Numerical Experiments and Results}\label{sec:num_exp_results}
\label{sec:numerical_experiments}
This section presents numerical experiments that validate the proposed three-stage stochastic control framework and quantify the  economic impact of key model features on trading performance. We evaluate the optimal trading (OT) strategy obtained from solving the optimal control problem against benchmark strategies on German market data. The OT strategy is computed by solving the three-stage sequence of P(I)DEs using the monotone finite-difference scheme described in Section~\ref{sec:numerical_methodology}, with parameter values summarized in Table~\ref{tab:parameters} unless otherwise specified. The underlying wind production and price data are described in Section~\ref{sec:data_description_parameters}. The experiments are organized as follows. Section~\ref{sec:oos_evaluation} compares the OT strategy against a time-weighted average price (TWAP) benchmark and a perfect-foresight (PF) upper bound over {30 randomly selected trading days}, reporting realized inventory trajectories, trading rates, and P\&L distributions. Section~\ref{sec:jump_sensitivity} examines the sensitivity of the trading performance to the jump intensity in the price process. Section~\ref{sec:delivery_window_sensitivity} investigates the effect of the  length of the delivery window on the P\&L distribution. Section~\ref{sec:trading_horizon_sensitivity} studies the impact of the trading horizon on the P\&L of the optimal  trading strategy. Section~\ref{sec:sensitivity_gamma_beta} analyzes the influence of the liquidity cost parameter and the imbalance charges on the optimal trading policy. Finally, Section~\ref{sec:value_function_structure} provides a visualization of the computed value function, illustrating its dependence on the state variables.

{Unless otherwise specified, all numerical results in this section are computed
on the spatial grid
$
(N_x,N_y,N_q)=(50,50,200)
$,
with $N_t=N_{\mathrm{gc}}=N_{\mathrm d}=300$, denoting the numbers of Stage~III, Stage~II, and Stage~I time steps, respectively.
A maximum of $15$ Picard iterations with stopping tolerance
$10^{-6}$ and a fixed damping
parameter $\omega_c=0.5$ is used during the update step of the Picard iteration in \eqref{eq:damping_picard_qsub}.}

\subsection{Data Description}
	\label{sec:data_description_parameters}
The numerical experiments are based on publicly available data from the German Transmission System Operator (TSO) Amprion, retrieved from the SMARD.de platform\footnote{ \href{https://www.smard.de/home/downloadcenter/download-marktdaten}{https://www.smard.de}}. The dataset covers a two-year period from June 2022 to June 2024. For wind power production, it includes (i) the day-ahead forecasted production $p_X(t)$, (ii) the realized production $X_t$, and (iii) the maximum installed capacity over the Amprion control zone $P_{\max}$, all provided in {15-minute resolution. The} realized production is used both for calibration of the parameters $\alpha$ and $\theta_0$ in Equation~\eqref{eq:wind_sde}, using approximate maximum likelihood method described in \cite{caballero2021quantifying}, and in Section~\ref{sec:oos_evaluation} for the comparison of different trading strategies. {All model parameters are calibrated once from the full dataset (June 2022 to June 2024), which includes the 30 days used in the strategy comparison, and are then kept fixed across the numerical experiments.} The truncation parameter in Equation \eqref{eq:X_forecast_truncation} is set to $\varepsilon_{\mathrm{tr}}  = 0.01$, this value has also been used during  the calibration. The use of aggregated regional data, rather than plant-specific data is a limitation due to the absence of public availability of more local production units. 

As for the intraday price, publicly available data do not provide continuous intraday prices or intraday price forecasts. Consequently, we use the publicly available day-ahead hourly prices as a proxy. This choice is justified by the empirical evidence that intraday continuous prices are strongly coupled to day-ahead auction prices as documented in \cite{han2022complexity}. The day-ahead series is interpolated linearly to a 15-minute resolution, which is treated as the realized quoted price $Y_t$. {The deterministic forecast values are then constructed on the
15-minute grid as a rolling moving average of the interpolated series,
$$
p_Y(t_j)
=
\frac{1}{n_w}
\sum_{k=1}^{n_w}
Y(t_j-k\Delta),
$$
where $\Delta=15$ minutes and $n_w$ denotes the smoothing window length.
Unless specified otherwise, we set $n_w=4$. These values are subsequently
extended to $[0,T_{\mathrm{gc}}]$ using the shape-preserving cubic Hermite
interpolation described in Section~2.2. This construction yields a
continuously differentiable deterministic trajectory with a lag induced
by the backward-looking averaging window.} The development of intraday price forecasting methods is beyond the scope of this work and constitutes a separate research direction with its own extensive literature. The present framework treats the forecast trajectories $p_X(t)$ and $p_Y(t)$ as inputs to the stochastic optimal control problem formulated in Section~\ref{sec:SOC_formulation}. Any forecasting methodology that produces such trajectories can be used in place of the proxy adopted here. We highlight, however, that when the stochastic optimal control problem is solved offline over the full trading horizon, this forecast trajectory is not available before the start of trading. {A fully out-of-sample evaluation would require the model parameters to be estimated from a preceding calibration sample and $p_Y$ to be constructed using only information available before trading begins. Such an evaluation requires a separate forecasting procedure, which is outside the scope of the present work and is left for future research.} Although this approximation cannot capture some stylized intraday price features, like for instance, the increase in volatility near delivery, also known as the Samuelson effect \cite{benth2008stochastic}, the day-ahead price provides a reasonable approximation of the intraday price dynamics. The methodology remains completely reproducible on more problem-specific data when available.

    

	\begin{remark}[Preprocessing]
		Prior to use, days with missing observations are removed, and the production series is normalized by the installed capacity.
	\end{remark}

\paragraph{Evaluation Methodology}
\label{sec:evaluation_methodology}

The
procedure consists of three stages: an offline PDE solve that computes
the value function on a discrete grid, the construction of a feedback
control via interpolation and finite differencing, and a forward
evaluation of the resulting policy on realized or simulated market
data, as specified for each of the numerical experiments. The same forward evaluation loop is applied to the TWAP and PF benchmarks on identical price and production paths, ensuring a fair
comparison.

{The realized numerical P\&L is defined by
\[
\mathrm{P\&L}
=
\sum_{n=0}^{N_t-1}
\left(
\psi_nY_{t_n}
-
\frac{\gamma}{2}\psi_n^2
\right)\Delta t
-
g\!\left(
Q_{N_t}-P_{\max}M_T
\right),
\qquad
M_T
=
\sum_{n=0}^{N_{\mathrm d}-1}
X_{t_n^{\mathrm d}}\,\Delta t_{\mathrm d}.
\]}

Table~\ref{tab:parameters} summarizes the model, economic, and numerical parameters used in the simulations.  
\FloatBarrier
\begin{table}[h!]
	\centering
	\small
	\renewcommand{\arraystretch}{1.2}
	\begin{tabular}{@{}llll@{}}
		\toprule
		\textbf{Parameter} & \textbf{Unit} & \textbf{Description} & \textbf{Value} \\
		\midrule
		\multicolumn{4}{@{}l}{\textit{Wind process}}\\
		$\alpha$        & dimensionless                & Diffusion scale of wind production        & 0.012 \\
		$\theta_0$      & h$^{-1}$                     & Mean-reversion speed of wind process      & 0.0933 \\[2pt]
		
		\multicolumn{4}{@{}l}{\textit{Price process}}\\
		$\sigma$        & EUR/(MWh$\sqrt{\text{h}}$)   & Diffusive volatility of price             & 4.70 \\
		$\kappa$        & h$^{-1}$                     & Mean-reversion speed of price             & 0.2083 \\
		$\rho$          & dimensionless                & Wind-price correlation                    & $-0.3$ \\[2pt]
		
		\multicolumn{4}{@{}l}{\textit{Jump parameters}}\\
		$\lambda$       & h$^{-1}$                     & Jump intensity                            & 0.4167 \\
		$p_+$           & dimensionless                & Probability of upward jump                & 0.65 \\
		$1/\eta_+$      & EUR/MWh                      & Mean size of positive jumps               & 15 \\
		$1/\eta_-$      & EUR/MWh                      & Mean size of negative jumps               & 30 \\[2pt]
		
		\multicolumn{4}{@{}l}{\textit{Economic parameters}}\\
		$\gamma$        & {EUR$\cdot$h/MWh$^{2}$} & Liquidity-cost coefficient                & 0.02 \\
		$\beta$         & EUR/MWh                      & Terminal imbalance penalty                & Day-specific \\[2pt]
		
		\multicolumn{4}{@{}l}{\textit{Time horizons}}\\
		$T_{\text{gc}}$        & h                            & Trading horizon                           & 22.9167 \\
		$L$             & h                            & Delivery-window length                    & 1.0 \\
		$h$             & h                            & Lead time (5 minutes)                     & 0.0833 \\

		\bottomrule
	\end{tabular}
	\caption{Model and numerical parameters. Time units are expressed in hours.}
	\label{tab:parameters}
\end{table}
\FloatBarrier
From Table~\ref{tab:parameters}, a liquidity coefficient of $\gamma = 0.02$
implies that trading $1\,\mathrm{MWh}$ over one hour incurs an execution cost of
approximately $0.01\,\mathrm{EUR}$, while trading $100\,\mathrm{MWh}$ over the
same horizon results in a cost of about $100\,\mathrm{EUR}$. The choice of
$\gamma$ is guided by the empirical calibration in \cite{glas2020intraday}, who
estimate intraday execution costs from EPEX order book data and report
time-of-day-dependent liquidity slopes on the order of
$10^{-2}\,\mathrm{EUR/(MWh)}^{2}$. We select $\gamma$ within
this range to model temporary execution frictions. While
\cite{glas2020intraday} account for intraday liquidity variation, we
adopt a constant baseline coefficient and interpret it as a representative
average over the trading window.

In the existing literature, the imbalance penalty parameter $\beta$ is
typically fixed at an ad-hoc value that is fixed independently of market
conditions \cite{aid2016optimal, tan2018optimal, glas2020intraday}. However, a fixed $\beta$
can lead to economically inconsistent trading behavior: if $\beta$ is
too small relative to the prevailing price level, the optimal policy
effectively disregards the imbalance penalty and trades aggressively to
maximize short-term revenue, which is not consistent with realistic
market dynamics where producers face material balancing charges. 
Conversely, an excessively large $\beta$ induces overly conservative
strategies that forgo profitable trading opportunities. To address
this, we calibrate $\beta$ in a data-driven manner by setting it equal
to the maximum of the forecasted price over the delivery day,
$\beta = \max_{t \in [0,T_{\text{gc}}]} |p_Y(t)|$. This ensures that the marginal
cost of imbalance is of the same order as the marginal revenue from
trading, so that the producer has a clear incentive to trade on the
intraday market while avoiding excessive imbalance exposure. In this
sense, $\beta$ plays the role of a shadow price of electricity at the
balancing stage: it reflects the opportunity cost of failing to deliver
committed energy, scaled to the actual price level of the trading day.
The sensitivity of the optimal policy to the choice of $\beta$ is
illustrated in Section~\ref{sec:sensitivity_gamma_beta}.

\FloatBarrier
\subsection{{Benchmark Strategies and Performance Comparison}}
\label{sec:oos_evaluation}

We compare the optimal trading (OT) strategy, obtained by solving the
full three-stage HJB-KBE P(I)DEs incorporating gate closure, lead
time, and delivery constraints, against two representative benchmark
policies:
\begin{itemize}
\item \textbf{Time-Weighted Average Price (TWAP).}
The TWAP benchmark trades at a constant rate over the trading
interval $[0, T_{\mathrm{gc}}]$ in order to match the forecasted
delivered energy over the delivery window [T-L,T]. We define the forecasted
delivered energy as
\begin{equation}
\label{eq:delivered_energy}
E^{\mathrm{TWAP}}
:=
P_{\max}\int_{T-L}^{T} p_X(s)\,\mathrm{d}s,
\end{equation}
{which we approximate numerically using a simple composite rectangle quadrature rule on the
uniform 15-minute partition,}
\begin{equation}
\label{eq:delivered_energy_discrete}
E^{\mathrm{TWAP}}
\approx
P_{\max}\sum_{\ell=1}^{n_L} p_X(t_\ell)\,\Delta,
\qquad
\Delta = 15 \text{ min},
\end{equation}
{where $t_1,\ldots,t_{n_L}$ are the quadrature points, one for each
15-minute subinterval of $[T-L,T]$.}
For the standard delivery window $L = 1$~h, this corresponds to
$n_L = 4$ quadrature points, at which the forecasts are provided by the TSO as explained in Section \ref{sec:data_description_parameters}. The TWAP target inventory is set to
$Q_{T_{\mathrm{gc}}}^{\mathrm{TWAP}} = E^{\mathrm{TWAP}}$, and the
corresponding constant trading rate is then set as
$$
\psi_t^{\mathrm{TWAP}}
=
\frac{Q_{T_{\mathrm{gc}}}^{\mathrm{TWAP}} - Q_0}{T_{\mathrm{gc}}},
\qquad t \in [0, T_{\mathrm{gc}}],
$$
with $\psi_t^{\mathrm{TWAP}} = 0$ for $t \in (T_{\mathrm{gc}}, T]$.
TWAP is purely based on the energy delivery forecast and allocates the required inventory uniformly over the  trading horizon to serve as a naive benchmark.

\item \textbf{Perfect Foresight (PF).}
The PF benchmark assumes full knowledge of the realized price
path $Y(t)$ over $[0, T_{\mathrm{gc}}]$ and the realized delivered
energy
\begin{equation}
\label{eq:delivered_energy_PF}
E^{\mathrm{PF}}
:=
P_{\max}\int_{T-L}^{T} X_s\,\mathrm{d}s,
\end{equation}
which is approximated on the uniform 15-minute grid by
\begin{equation}
\label{eq:delivered_energy_PF_discrete}
E^{\mathrm{PF}}
\approx
P_{\max}\sum_{\ell=1}^{n_L} X_{t_\ell}\,\Delta,
\qquad
\Delta = 15 \text{ min},
\end{equation}
where $t_1, \ldots, t_{n_L}$ are the grid points in $[T-L, T]$.
Under perfect foresight, the only remaining state variable is the
inventory $q$, and the value function
$v^{\mathrm{PF}}(t,q)$ solves the one-dimensional deterministic HJB equation
{\begin{equation}
\label{eq:PF_HJB}
-\partial_t v^{\mathrm{PF}}(t,q)
=\inf_{\psi\in[\psi_{\min},\psi_{\max}]}
\bigl\{-\psi\,Y(t)+\tfrac{\gamma}{2}\psi^2
+\psi\,\partial_q v^{\mathrm{PF}}(t,q)\bigr\},
\quad t\in[0, T_{\mathrm{gc}}],
\end{equation}}
with terminal condition
$v^{\mathrm{PF}}(T_{\mathrm{gc}},q) = g(q - E^{\mathrm{PF}})$.
PF serves as an upper bound on achievable profit.
\end{itemize}
The central objective of the proposed stochastic optimal control
framework is to generate trading strategies that outperform naive
benchmarks on unseen market data. To rigorously assess this, we
conduct an out-of-sample (OOS) evaluation over 30 randomly selected
trading days from the dataset%
. For each trading day, the
three-stage P(I)DEs sequence of equations is solved offline and {the resulting optimal
policy is evaluated on the price approximation and realized production paths.} {For the comparison with OT, the PF problem~\eqref{eq:PF_HJB} is also solved numerically, using the same time and inventory grids, admissible trading rate bounds, and numerical treatment of the inventory boundaries.} The PF
strategy, which assumes full knowledge of future price and production
trajectories, serves as an upper bound on achievable
performance. {The realized wind production is not used in the offline solve, nor are the realized price fluctuations and jumps in experiments based on simulated data. However, the price input $p_Y$ is constructed from realized prices for the same day and therefore contains ex-post information. Consequently, the comparison is conducted ex-post and does not constitute a fully out-of-sample backtest.} Since the framework accepts any deterministic $p_Y(t)$ as input, replacing the moving-average proxy by a more sophisticated ex-ante forecast does not modify the adopted methodology. The design of such forecasts is beyond the scope of this work and belongs to the extensive intraday electricity price forecasting literature, we refer to \cite{narajewski2020econometric} and references therein.

To illustrate the behavior of the three strategies on a single trading
day, Figures~\ref{fig:oos_trajectory_nocontrol}
and~\ref{fig:oos_trajectory_withcontrol} display the {realized
trajectories} for 2023-04-14. Figure~\ref{fig:oos_trajectory_nocontrol}
shows the inventory paths $Q_t$ under OT, TWAP, and PF alongside the
realized wind power $X_t$ and price $Y_t$. The TWAP inventory grows
linearly by construction, while the OT inventory adapts dynamically
to the evolving market state, closely tracking the PF trajectory.
Figure~\ref{fig:oos_trajectory_withcontrol} additionally displays the
trading rates $\psi_t$. The OT trading rate exhibits clear
state-dependent adjustments: it increases when prices are favorable
and reduces exposure when the inventory is sufficiently aligned with
expected production. By contrast, the TWAP rate remains constant
throughout. The PF trading rate, computed with full knowledge of future
paths, represents the best achievable response at each instant. We note that, under the price forecast trajectory defined in Section \ref{sec:data_description_parameters}, the optimal trading rate $\psi_t^*$ tracks price movements with a lag determined by the smoothing window $n_w$. This is revealed in Figure \ref{fig:oos_trajectory_withcontrol}, where the OT trading rate responds to price changes with a delay. This lagged response is a direct consequence of the forecast construction.

Figure~\ref{fig:oos_boxplot} presents the aggregated P\&L
distributions of the three strategies as box plots. The OT
distribution achieves a substantially higher median P\&L than TWAP,
reflecting the economic value of adaptive, forecast-driven trading.
The wider spread of the OT distribution relative to TWAP is a natural
consequence of actively adjusting positions in response to stochastic
price and production signals, whereas TWAP trades at a constant rate
regardless of market conditions and therefore exhibits minimal
variability by construction. The PF benchmark achieves the highest
median P\&L, but the gap between OT and PF is considerably smaller
than the gap between OT and TWAP, confirming that the proposed
framework captures a substantial fraction of the theoretically
available profitability from using forecast information.

A distinctive feature of {the results} is the proximity of
the OT performance to the perfect-foresight upper bound. The optimal policy is computed offline
by solving the HJB-PIDE using the exogenous forecasts, yet it is evaluated on realized paths
that may deviate from these forecasts. Using the adopted data (see Section \ref{sec:data_description_parameters}), the principal source of forecast error is the production forecast $p_X(t)$, which is the ex-ante day-ahead TSO forecast and therefore deviates from realized production $X_t$, whereas the price forecast $p_Y(t)$ is constructed as a rolling moving average of realized prices and is by construction close to $Y_t$. The fact that the
resulting policy remains competitive with the perfect-foresight bound under such deviations supports the role of the state-feedback
structure of the optimal control, which continuously corrects the
trading rate through $\psi^*(y, \partial_q v)$ in response to the
observed state $(X_t, Y_t, Q_t)$, in absorbing production forecast
errors without requiring recomputation of the value function during the trading session.

{Figure~\ref{fig:oos_perday_violin} shows the empirical distribution
of realized P\&L for OT and TWAP across the 30 trading days, based on one
realized P\&L value for each day and strategy.}

Finally, Table~\ref{tab:oos_benchmarking} summarizes {the results}. The 30 evaluation days are drawn uniformly and randomly from the provided data set (see Section \ref{sec:data_description_parameters}). For each trading day $d = 1, \ldots, 30$ and for the comparison of strategy $A$ against strategy $B$, we compute the daywise absolute and relative gains
\begin{equation}
\label{eq:gain_definitions}
G_a^d(A,B) := \mathrm{P\&L}^d_A - \mathrm{P\&L}^d_B,
\qquad
G_r^d(A,B) := \frac{G_a^d(A,B)}{|\mathrm{P\&L}^d_B|}\times 100\%.
\end{equation}

The reported statistics are then
$\mathrm{mean}(G_a^d(A,B))$, $\mathrm{median}(G_a^d(A,B))$,
$\max_d G_a^d(A,B)$, $\min_d G_a^d(A,B)$ for the absolute gains, and
$\mathrm{mean}(G_r^d(A,B))$, $\mathrm{median}(G_r^d(A,B))$,
$\max_d G_r^d(A,B)$, $\min_d G_r^d(A,B)$ for the relative gains,
each taken over the 30 days. Table \ref{tab:oos_benchmarking} shows that OT outperforms TWAP on 28 out of 30
days, with a mean absolute gain of approximately $+737\,000$~EUR. On
the two remaining days, OT underperforms TWAP by about $15\%$. The
PF benchmark yields a higher {P\&L} than OT on all 30
days, with a mean daywise relative gain of $+23\%$ (median $+19\%$).
The PF vs OT gap remains modest relative to the OT vs TWAP gain,
confirming that OT captures most of the out-of-sample improvement
over TWAP achieved by the PF benchmark.  The PF vs OT gap is also informative from a forecast-valuation perspective: the mean daywise gap of $+111000 \sim$ EUR provides an indicative upper bound on the incremental profit attainable by improving the forecasts used as inputs to the control problem.

\begin{figure}[!ht]
	\centering
	\subfloat[Inventory paths]{%
		\includegraphics[width=0.5\textwidth]{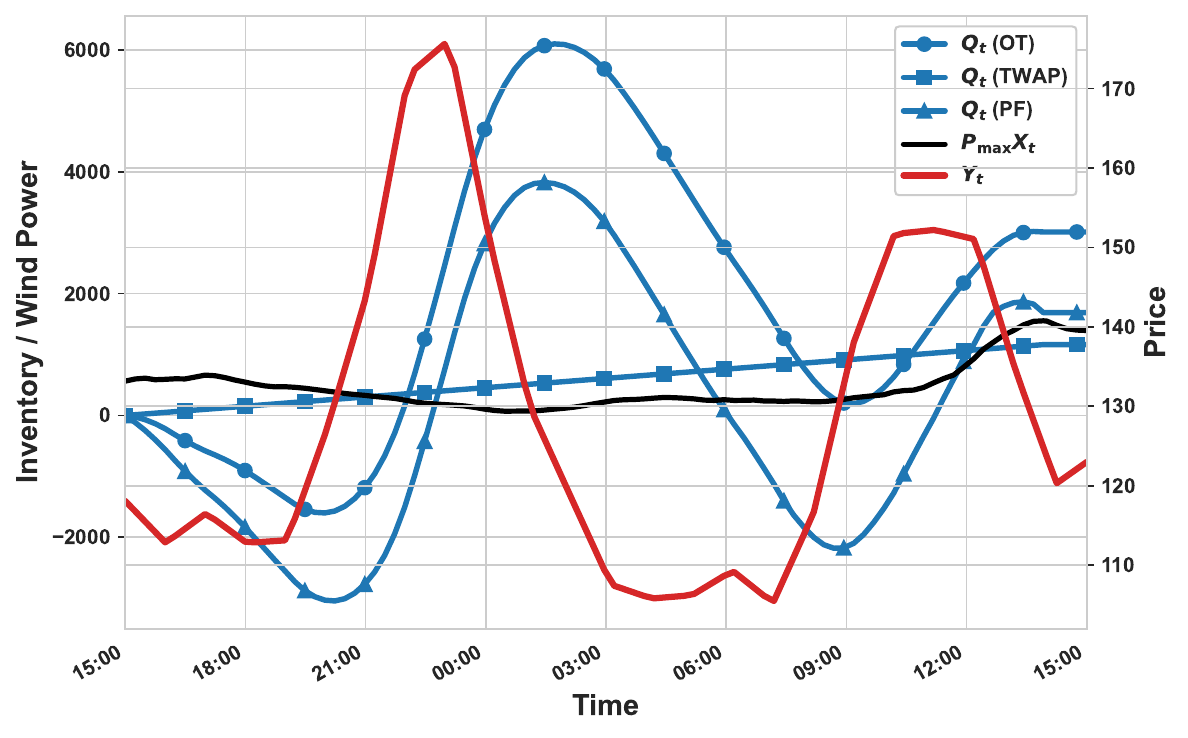}%
		\label{fig:oos_trajectory_nocontrol}}
	\hfill
	\subfloat[Inventory paths and trading rates]{%
		\includegraphics[width=0.5\textwidth]{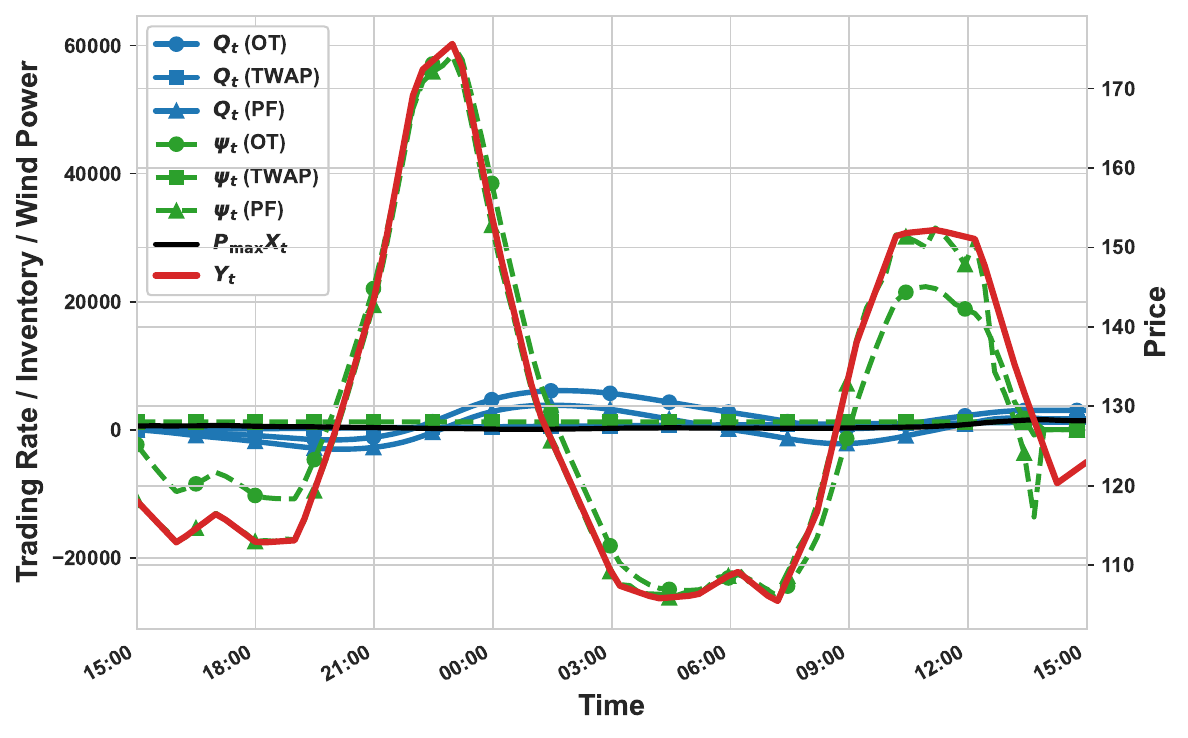}%
		\label{fig:oos_trajectory_withcontrol}}
	\caption{{Trajectory comparison for 2023-04-14.}}
	\label{fig:oos_trajectories}
\end{figure}

\begin{figure}[!ht]
	\centering
	\includegraphics[width=0.55\linewidth]{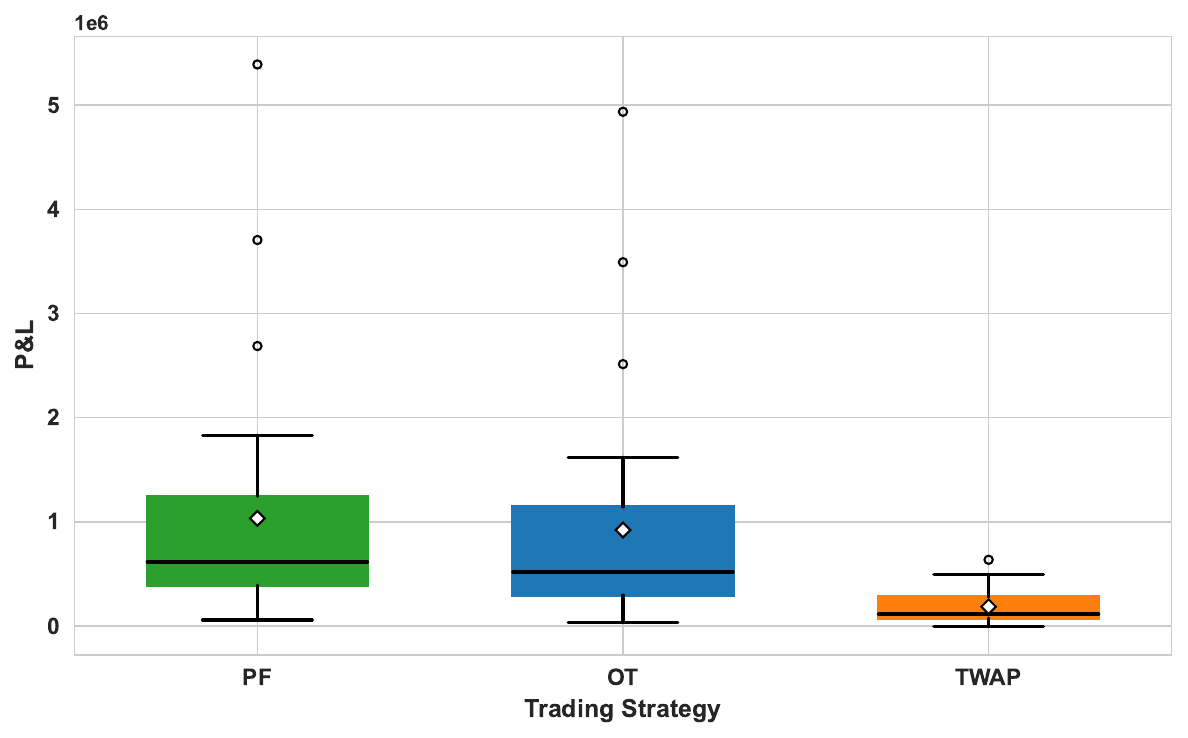}
	\caption{{Aggregated P\&L box plots over 30 randomly selected
		trading days for PF, OT, and TWAP.}}
	\label{fig:oos_boxplot}
\end{figure}

\begin{figure}[H]
	\centering
	\includegraphics[width=0.65\linewidth]{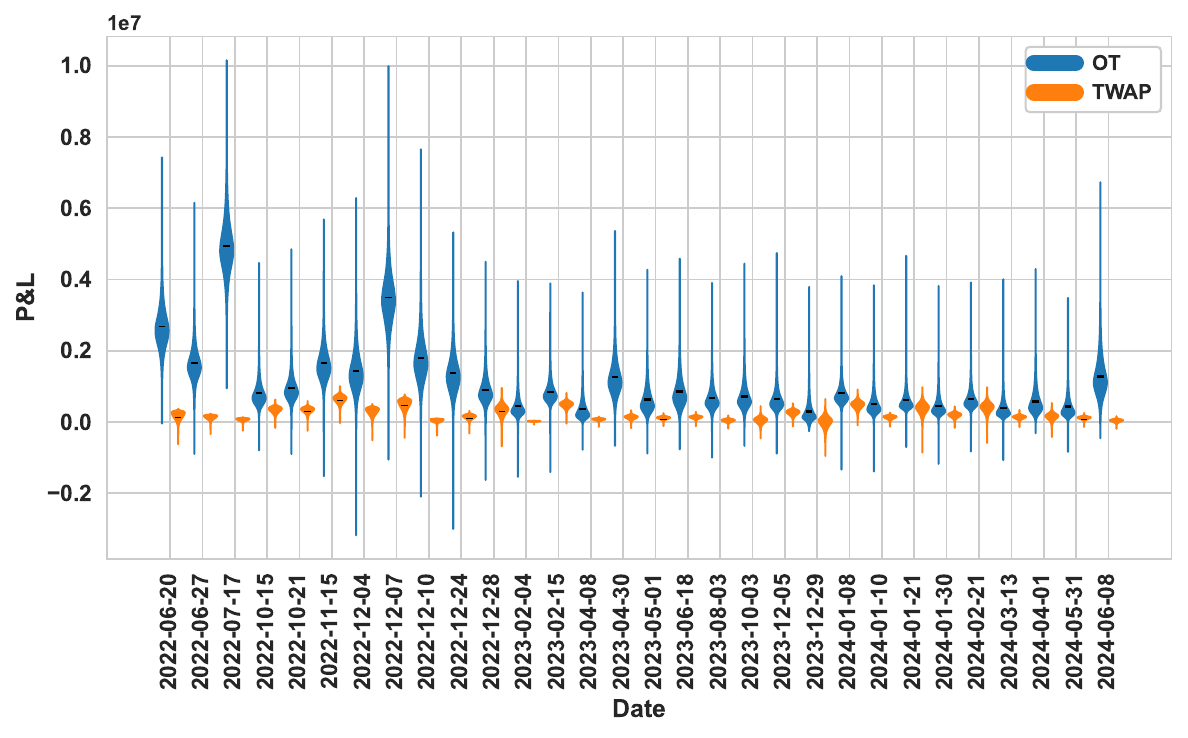}
	\caption{{Empirical distributions of realized P\&L for OT and TWAP
		across the 30 sampled trading days.}}
	\label{fig:oos_perday_violin}
\end{figure}

\begin{table}[!ht]
	\centering
	\small
	\begin{tabular}{l r r}
		\toprule
		Metric & OT vs TWAP & PF vs OT \\
		\midrule
		Mean gain     & +737\,459 EUR \;(+891\%)     & +111\,179 EUR \;(+23\%)  \\
		Median gain   & +333\,676 EUR \;(+224\%)     & +93\,016 EUR \;(+19\%)   \\
		Max gain      & +4\,857\,007 EUR \;(+5983\%) & +454\,383 EUR \;{(+68\%)}   \\
		Min gain      & $-52\,368$ EUR \;($-15\%$)   & +22\,752 EUR \;{(+9\%)}   \\
		Win rate      & 28/30 \;(93.3\%)             & 30/30 \;(100\%)          \\
		\bottomrule
	\end{tabular}
	\caption{{Benchmark comparison over 30 randomly selected trading
		days.}}
	\label{tab:oos_benchmarking}
\end{table}

\FloatBarrier

\subsection{Sensitivity to Jump Intensity}
\label{sec:jump_sensitivity}

We examine the sensitivity of trading performance to the jump
intensity $\lambda$ in the price process. The optimal control is
solved for each value of $\lambda$ and evaluated on simulated
trajectories of $(X_t, Y_t)$ with $10^5$ Monte Carlo paths per day
over 30 randomly selected trading days, {under
$\lambda \in \{0,0.4167,2.083\}\,\mathrm{h}^{-1}$,
corresponding to approximately $0$, $10$, and $50$ expected jumps per
24 hours.}

Figure~\ref{fig:jump_boxplot} displays the aggregated P\&L box plots
across the 30 days for each jump intensity. Several features are
visible. First, both the median and mean P\&L increase with $\lambda$,
indicating that higher jump activity provides additional trading
opportunities that the producer can take advantage of. The optimal
policy adapts its trading rate in response to price jumps, capturing
favorable movements while the imbalance penalty limits exposure to
adverse ones.

Second, the interquartile range and the extent of the whiskers grow
with $\lambda$, reflecting heavier tails in the P\&L distribution.
The number and magnitude of outliers also increase, confirming that
jump risk materially affects the tail behavior. The diffusion-only
case ($\lambda = 0$) produces the tightest distribution, suggesting that, a {pure diffusion} model underestimates both the upside potential and the downside risk of the simulated P\&L distribution. Within this setting, the choice of price model materially affects the assessment of tail risk.

Third, the gap between the mean (diamond markers) and the median
widens as $\lambda$ increases, which is characteristic of
right-skewed, heavy-tailed distributions. This confirms that the
inclusion of jumps in the price dynamics is not merely a modeling
refinement but has direct economic consequences for both the expected
profitability and the risk profile of the trading strategy.

\begin{figure}[!ht]
	\centering
	\includegraphics[width=0.55\linewidth]{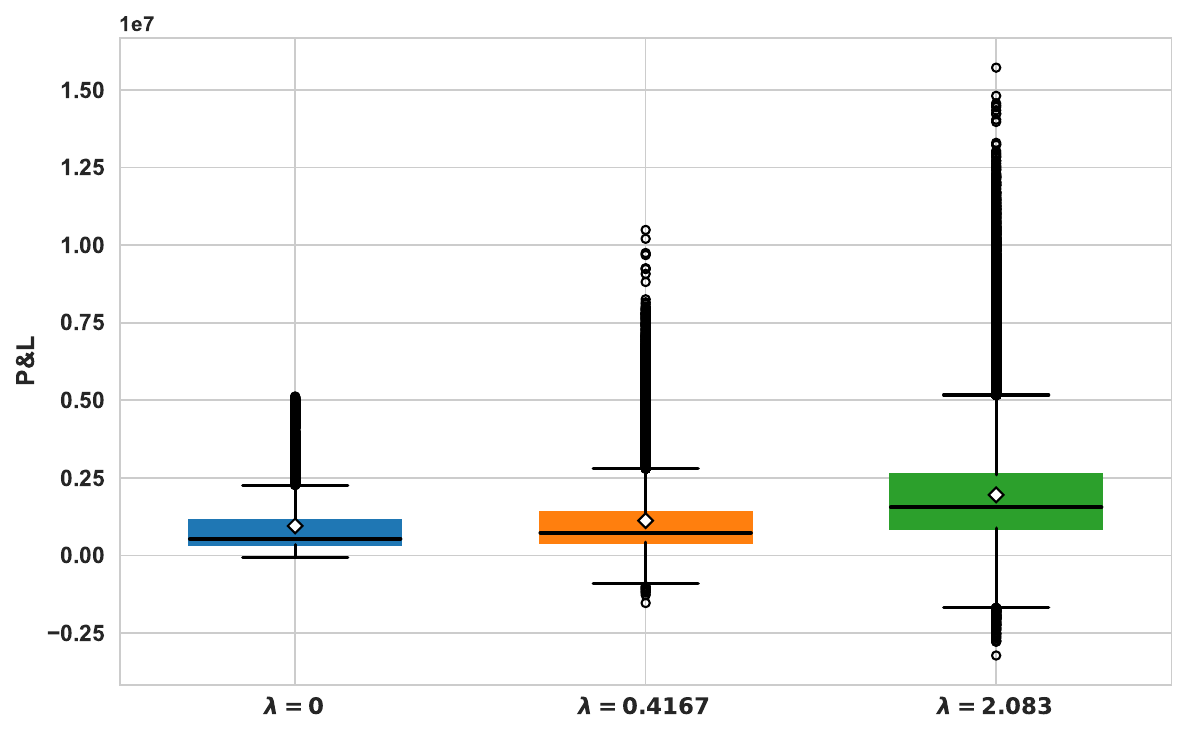}
	\caption{Box plots of simulated P\&L under varying jump intensities
$\lambda \in \{0,\,0.4167,\,2.083\}$, {corresponding respectively to approximately $0$, $10$, and $50$ expected
jumps per 24 hours,} aggregated over 30 randomly selected
trading days ($10^5$ MC paths per day). Diamond markers
indicate the mean.}
	\label{fig:jump_boxplot}
\end{figure}
\FloatBarrier

\subsection{Sensitivity to Delivery Window Length}
\label{sec:delivery_window_sensitivity}

A distinguishing feature of the proposed formulation is the explicit
modeling of the delivery window length $L$ through the energy-based
imbalance penalty integrated over $[T - L,\, T]$. To
assess the impact of this modeling choice on trading performance, we
solve the three-stage control problem for delivery window lengths
corresponding to 15-minute, 30-minute, and 1-hour products and
evaluate the resulting optimal policy on simulated trajectories with
$10^5$ Monte Carlo paths per day over 30 randomly selected trading
days.

Figure~\ref{fig:L_sensitivity} displays box plots of the simulated
P\&L for the three delivery windows. Both the median and mean P\&L
increase with $L$, indicating that longer delivery products yield
more favorable economic outcomes. A potential explanation is that the average energy produced over a
longer delivery window exhibits lower variance than the production
over shorter time periods, making the imbalance penalty more
predictable and thus easier to hedge against. As a result, the producer can commit to trading
positions with greater confidence when the settlement is based on
energy integrated over a longer interval. The interquartile range
also grows with $L$, reflecting the wider range of outcomes associated
with longer delivery windows.

From a practical standpoint, these results suggest that trading hourly
products may yield more favorable risk-adjusted outcomes, as the
longer delivery window provides a natural averaging effect that
reduces the exposure to short-term production fluctuations. This
further highlights that replacing the energy-based settlement over
$[T-L,\, T]$ with a pointwise penalty at a single instant, as is
common in the existing literature, may lead to a misspecified control
problem that overestimates imbalance risk and yields suboptimal
trading policies.

\FloatBarrier
\begin{figure}[!ht]
	\centering
	\includegraphics[width=0.55\linewidth]{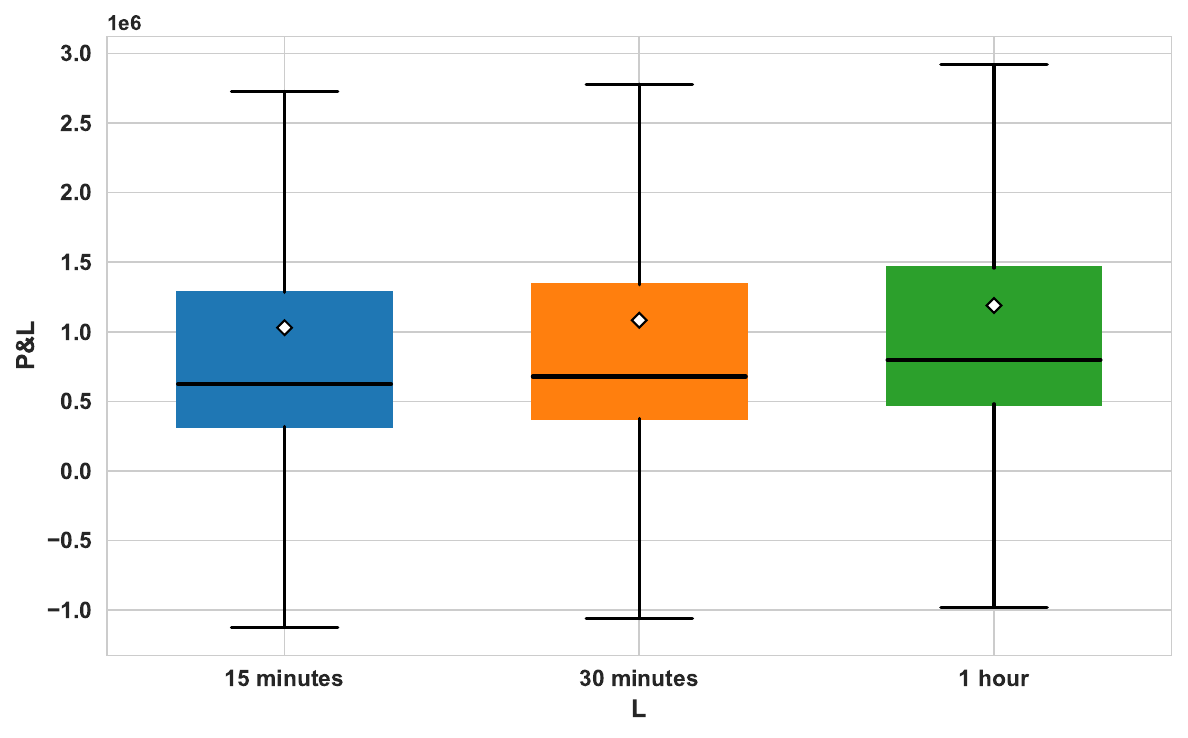}
	\caption{Box plots of simulated P\&L across 30 randomly selected
		trading days for delivery window lengths $L$ corresponding to
		15-minute, 30-minute, and 1-hour products. Diamond markers
		indicate the mean. }
	\label{fig:L_sensitivity}
\end{figure}
\FloatBarrier

\subsection{Sensitivity to Trading Horizon}
\label{sec:trading_horizon_sensitivity}

We investigate the impact of the total problem horizon $T$ on the
performance of the optimal strategy. Since the lead time $h = 5$~min
and delivery window $L = 1$~h are held fixed, varying $T$ directly
determines the length of the active trading window via
$T_{\mathrm{gc}} = T - h - L$. For instance, $T = 24$~h corresponds
to $T_{\mathrm{gc}} \approx 22$~h~$55$~min, while $T = 12$~h
corresponds to $T_{\mathrm{gc}} \approx 10$~h~$55$~min. The optimal
control is solved for each value of $T$ and evaluated on simulated
trajectories with $10^5$ Monte Carlo paths per day over 30 randomly
selected trading days.

Figure~\ref{fig:trading_horizon_sensitivity} displays box plots of the
simulated P\&L for $T \in \{12\text{h},\, 18\text{h},\, 24\text{h}\}$.
Both the median and mean P\&L increase with the horizon, confirming
that a longer trading window allows the optimal policy to spread
trades over a greater time span, reducing the effective liquidity cost
per unit of energy traded. The interquartile range also grows with
$T$, reflecting the increased variability that comes with a longer
exposure to stochastic price and production dynamics. The number of
outliers above the upper whisker increases for longer horizons,
indicating that the producer benefits from rare but profitable
trading opportunities that are only accessible when sufficient time
remains. Conversely, shorter horizons compress the trading activity
into a smaller interval, forcing more aggressive trading rates and
increasing exposure to temporary market impact, which results in a
tighter but lower P\&L distribution. These results suggest that a longer active trading window is beneficial to the producer to adjust their position.

\begin{figure}[!ht]
	\centering
	\includegraphics[width=0.55\linewidth]{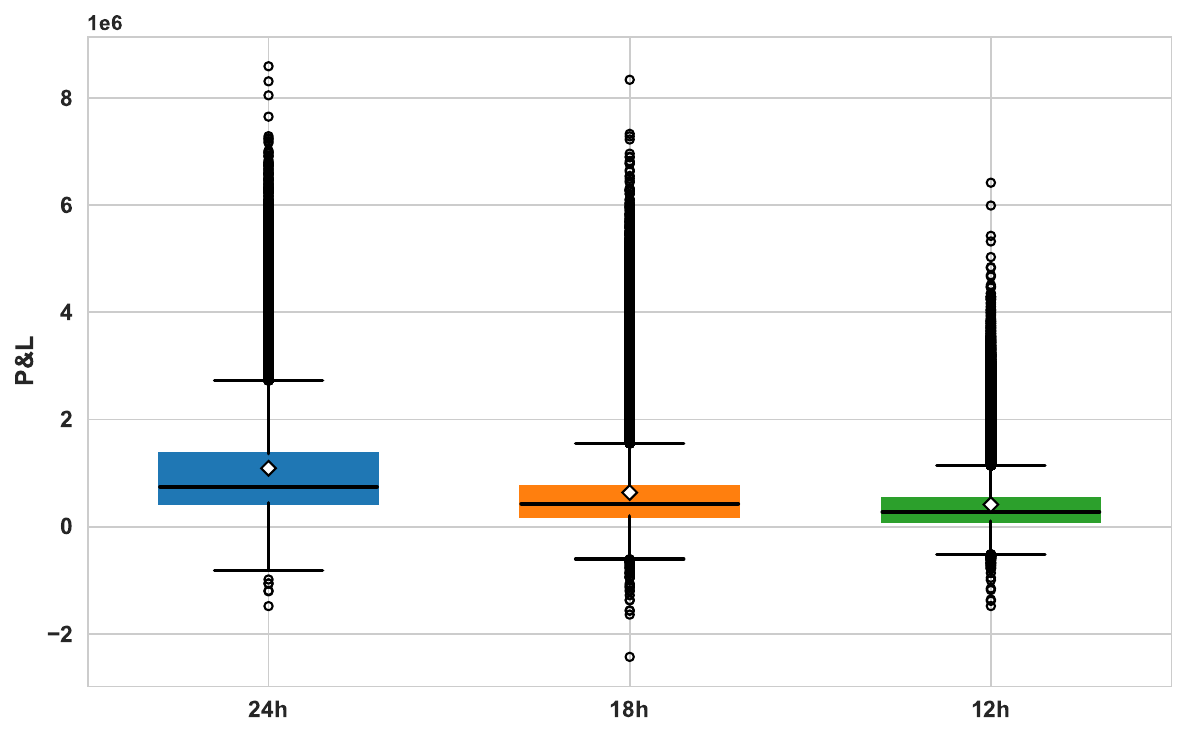}
	\caption{Box plots of simulated P\&L across 30 randomly selected
		trading days for problem horizons
		$T \in \{12\text{h},\, 18\text{h},\, 24\text{h}\}$ with
		$h = 5$~min and $L = 1$~h fixed. Diamond markers indicate the
		mean.}
	\label{fig:trading_horizon_sensitivity}
\end{figure}
\FloatBarrier

\subsection{Sensitivity to Liquidity and Imbalance Penalties}
\label{sec:sensitivity_gamma_beta}

We conclude the numerical experiments by analyzing the influence of the
liquidity parameter $\gamma$ and the imbalance penalty $\beta$ on the
optimal trading policy. The optimal
strategy is applied on the realized price and production
paths for the trading day 2023-04-12.

Figure~\ref{fig:gamma_sensitivity} displays the inventory and trading
rate trajectories for $\gamma \in \{0.005,\, 0.01,\, 0.05\}$ with the value of $\beta = 145$ fixed. Lower
values of $\gamma$ reduce the cost of rapid execution, enabling the
producer to trade more aggressively and to track the realized
wind production $X_t$ more closely. As $\gamma$ increases, the penalty
on trading speed forces the producer to spread its trades over longer
periods, resulting in smoother inventory trajectories that deviate
further from the production profile.

Figure~\ref{fig:beta_sensitivity} shows the corresponding trajectories
for $\beta \in \{116,\, 145,\, 174\}$ with the value of $\gamma = 0.02$ fixed. Increasing $\beta$ raises the
cost of overcommitting relative to actual delivery, which incentivizes
the producer to align inventory more tightly with the anticipated
production. As a result, higher $\beta$ values produce inventory paths
that converge earlier and more closely toward $X_t$, with the trading
rate adjusting accordingly. Conversely, lower $\beta$ values relax the
imbalance constraint, allowing the producer to prioritize
speculation of favorable price movements at the expense of larger
potential imbalances at delivery.
\FloatBarrier
\begin{figure}[!ht]
	\centering
	\includegraphics[width=0.65\linewidth]{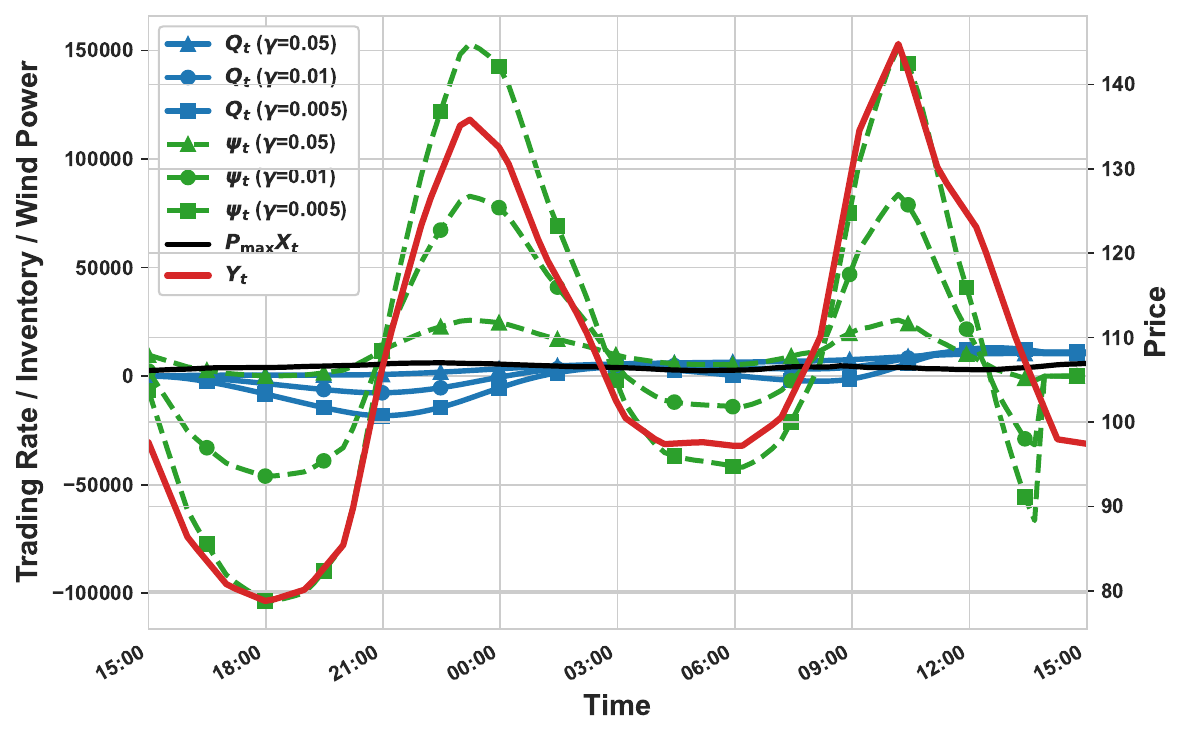}
	\caption{{Inventory} $Q_t$, trading rate $\psi_t$,
		realized production $X_t$, and realized price $Y_t$ for the
		trading day 2023-04-12 under varying liquidity parameter
		$\gamma \in \{0.005,\, 0.01,\, 0.05\}$.}
	\label{fig:gamma_sensitivity}
\end{figure}

\FloatBarrier
\begin{figure}[!ht]
	\centering
	\includegraphics[width=0.65\linewidth]{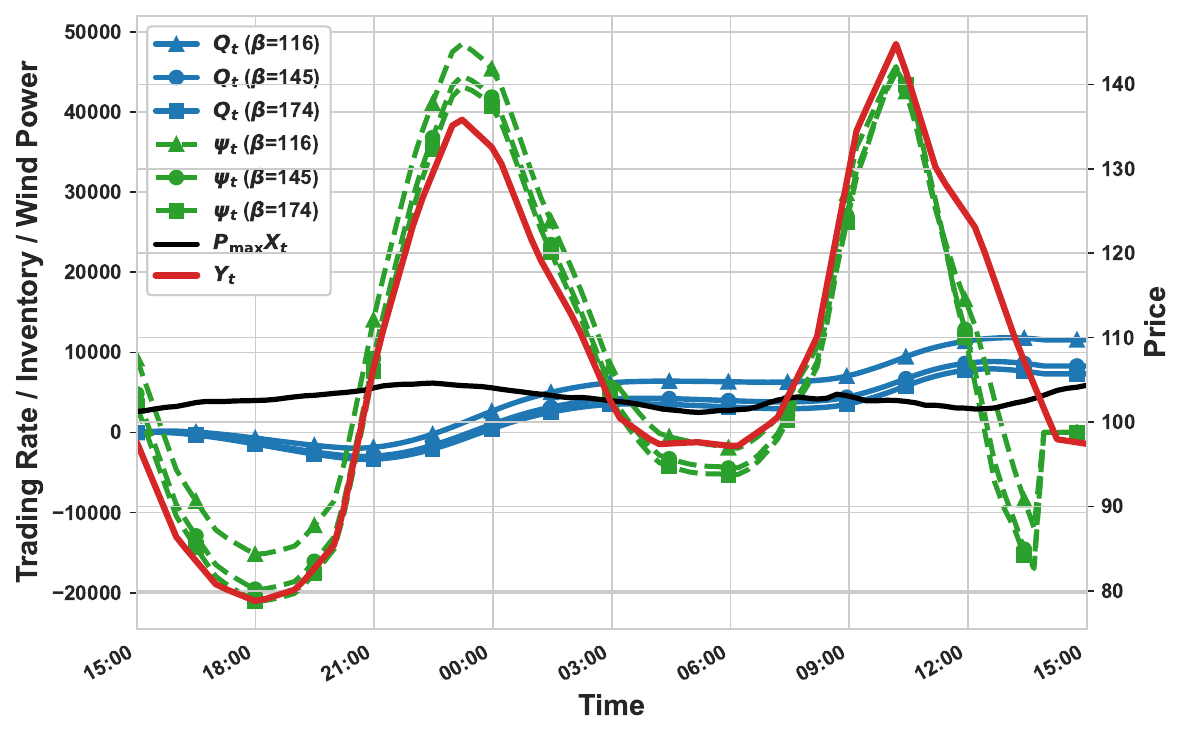}
	\caption{{Inventory} $Q_t$, trading rate $\psi_t$,
		realized production $X_t$, and realized price $Y_t$ for the
		trading day 2023-04-12 under varying imbalance penalty
		$\beta \in \{116,\, 145,\, 174\}$.}
	\label{fig:beta_sensitivity}
\end{figure}
\FloatBarrier

\subsection{Visualization of the Value Function}
\label{sec:value_function_structure}

To provide insight into the shape of the approximated value function,
Figures~\ref{fig:V_surfaces_tgc} and~\ref{fig:V_surfaces_t0} show
two-dimensional cross-sections of the Stage~III numerical value
function $V^{\mathrm{III}}$ near gate closure $t \approx T_{\mathrm{gc}}$
and at the initial time $t = 0$, respectively, for the trading day
2022-12-11. Each surface is obtained by fixing one state variable at
the midpoint of its computational domain and varying the remaining
two. The value function is computed on a uniform grid with
$(N_x, N_y, N_q) = (50, 50, 200)$.

Figure~\ref{fig:V_surfaces_tgc} also shows that near gate closure, the value
function inherits the non-smooth structure of the terminal penalty
$g(\xi) = \beta|\xi|$. The $(y,q)$ cross-section exhibits a sharp
ridge near the region where the inventory matches the expected
delivered energy, with steep gradients on both sides reflecting the
symmetric penalization of shortfall and surplus. The $(x,q)$
cross-section shows a similar kink along the $q$-direction, while the
$(x,y)$ cross-section reveals the dependence on production and price
levels close to terminal time.

Figure~\ref{fig:V_surfaces_t0} shows that the backward propagation
of the PDE has smoothed the value function at $t= 0$. The kink inherited from
the terminal penalty is no longer visible, as the remaining trading
horizon provides sufficient time for the optimal policy to absorb
imbalance risk. {The $(x,q)$ cross-section shows the dependence of the value function on
the inventory relative to the production level, reflecting the associated
imbalance cost.} The
$(x,y)$ cross-section exhibits a clear dependence on the price level
$y$, which determines the profitability of trading, and on the
production level $x$, which governs the expected imbalance at
delivery. The $(y,q)$ cross-section isolates the joint effect of
price and inventory on the cost-to-go, now smoother in both
directions.

\begin{figure}[!ht]
	\centering
	\subfloat[$(x, q)$ plane, $y = y_{\mathrm{mid}}$]{%
		\includegraphics[width=0.32\textwidth]{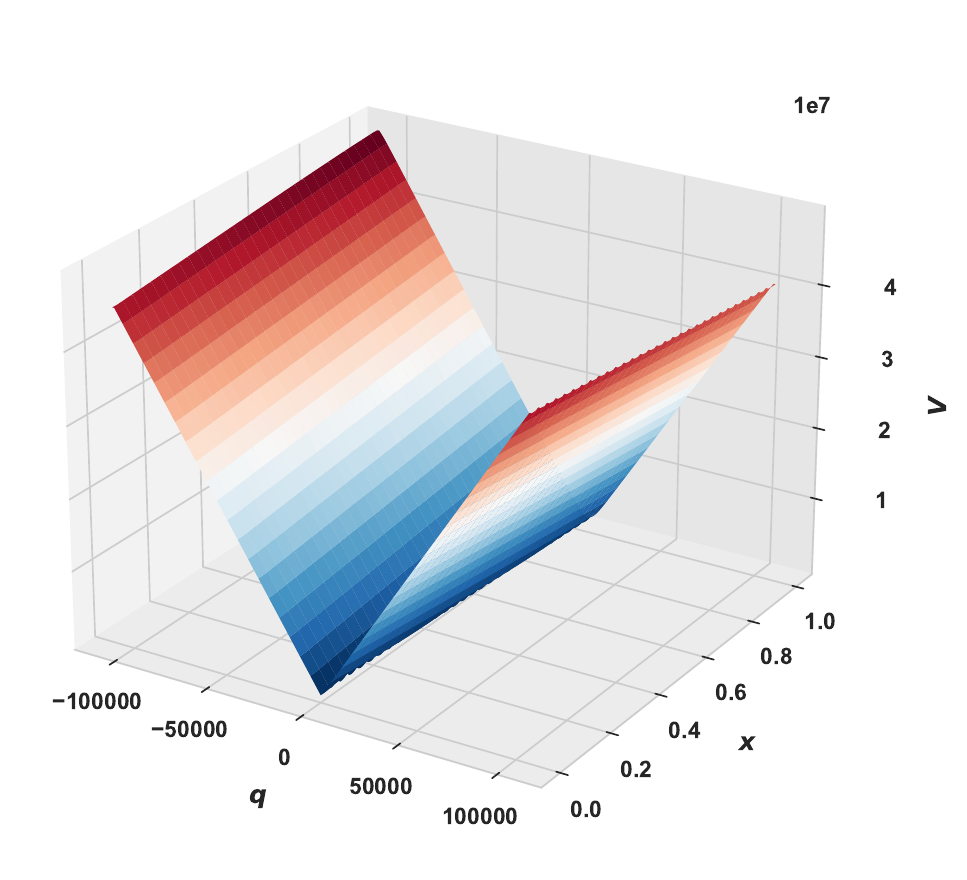}%
		\label{fig:V_xq_tgc}}
	\hfill
	\subfloat[$(x, y)$ plane, $q = q_{\mathrm{mid}}$]{%
		\includegraphics[width=0.32\textwidth]{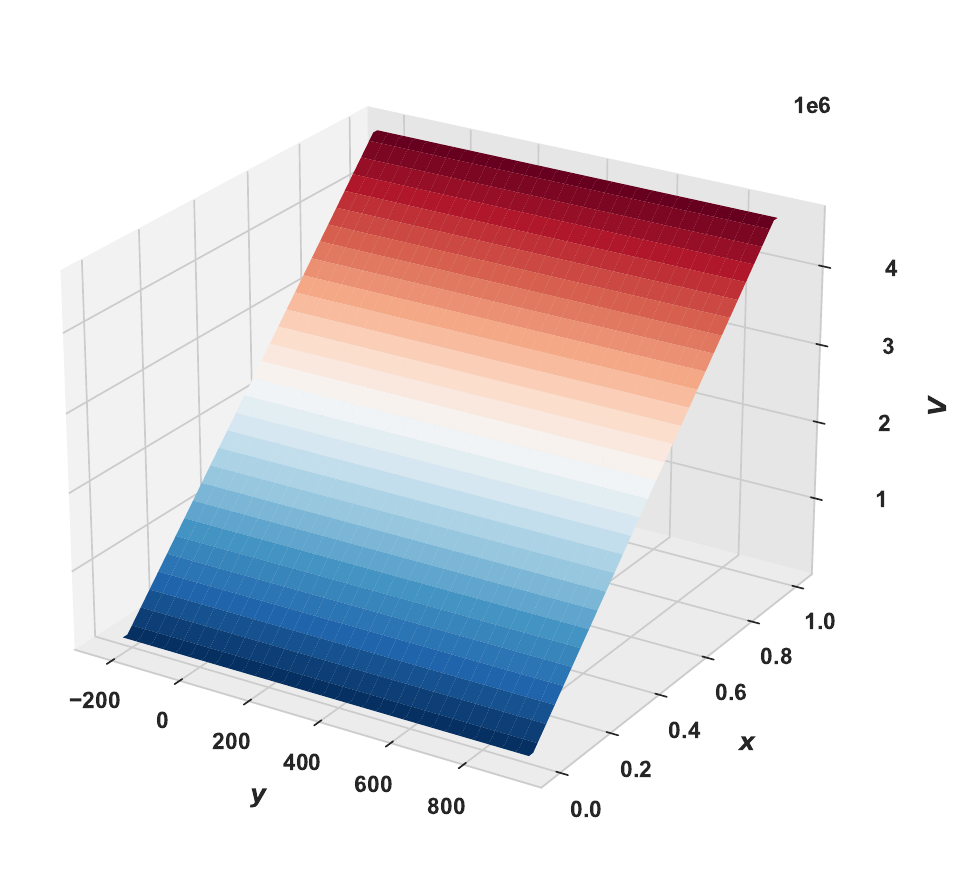}%
		\label{fig:V_xy_tgc}}
	\hfill
	\subfloat[$(y, q)$ plane, $x = x_{\mathrm{mid}}$]{%
		\includegraphics[width=0.32\textwidth]{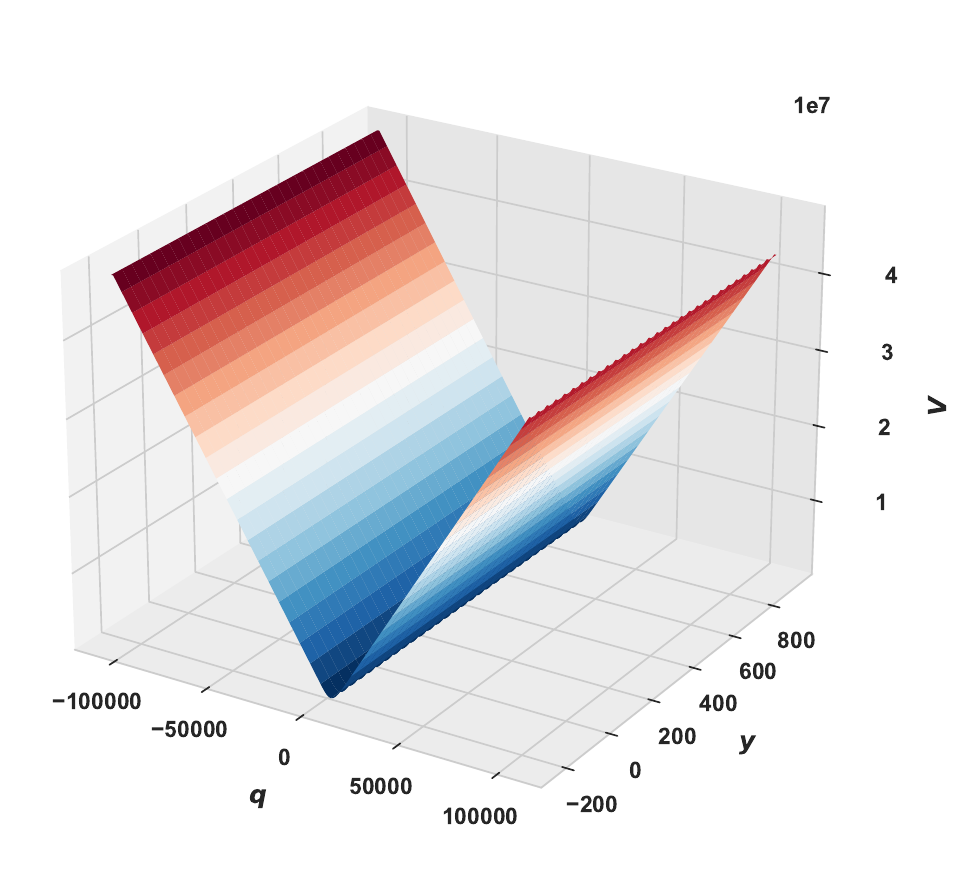}%
		\label{fig:V_yq_tgc}}
	\caption{Cross-sections of the numerical value function
		$V^{\mathrm{III}}(T_{\mathrm{gc}}, x_i, y_j, q_k)$ near gate
		closure, each with one state variable fixed at the midpoint of
		its computational domain. Trading day: 2022-12-11.}
	\label{fig:V_surfaces_tgc}
\end{figure}

\begin{figure}[!ht]
	\centering
	\subfloat[$(x, q)$ plane, $y = y_{\mathrm{mid}}$]{%
		\includegraphics[width=0.32\textwidth]{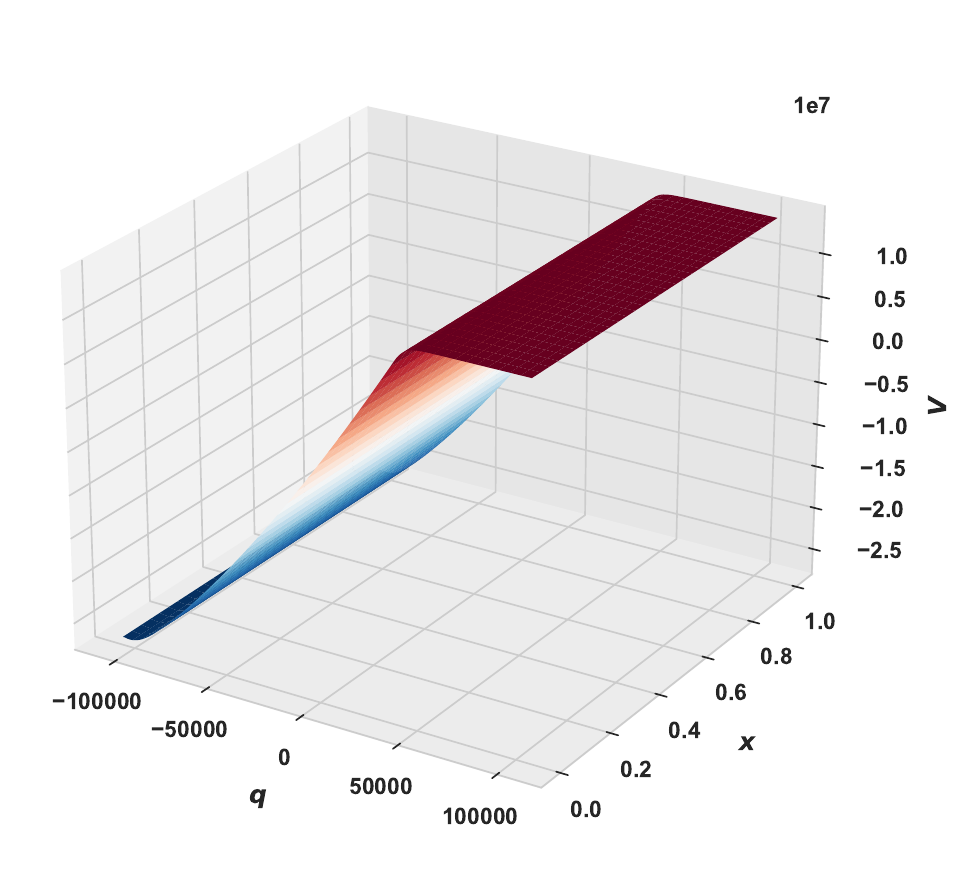}%
		\label{fig:V_xq_t0}}
	\hfill
	\subfloat[$(x, y)$ plane, $q = q_{\mathrm{mid}}$]{%
		\includegraphics[width=0.32\textwidth]{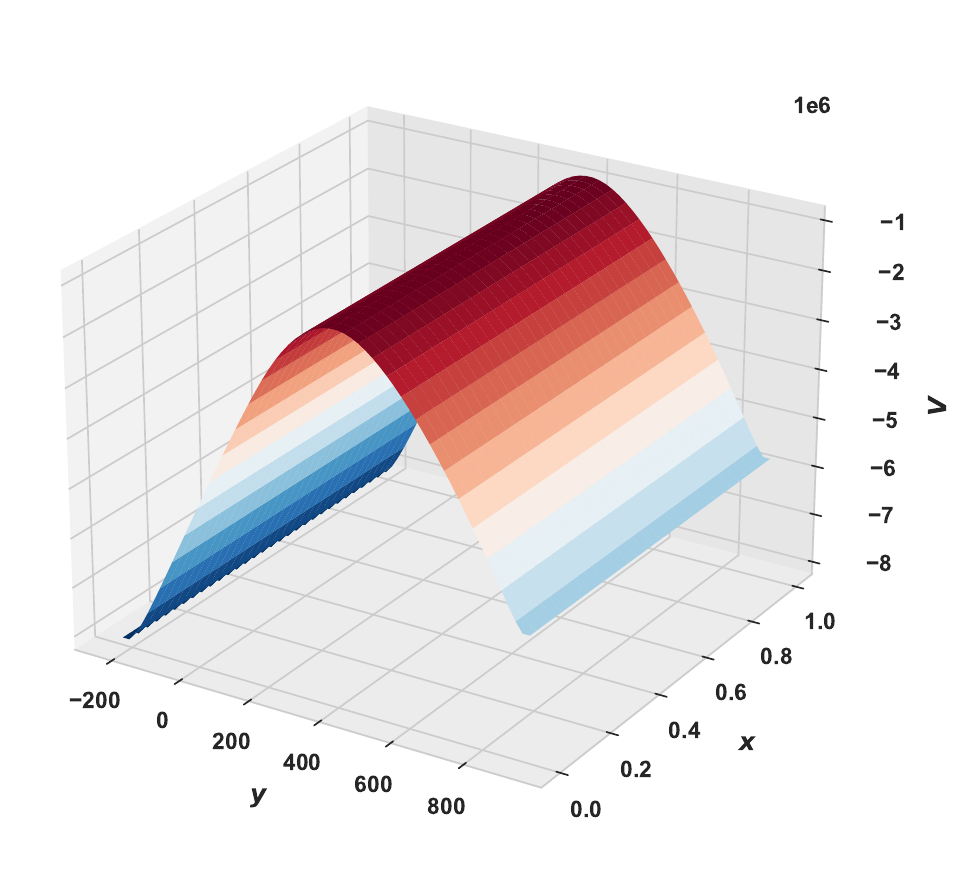}%
		\label{fig:V_xy_t0}}
	\hfill
	\subfloat[$(y, q)$ plane, $x = x_{\mathrm{mid}}$]{%
		\includegraphics[width=0.32\textwidth]{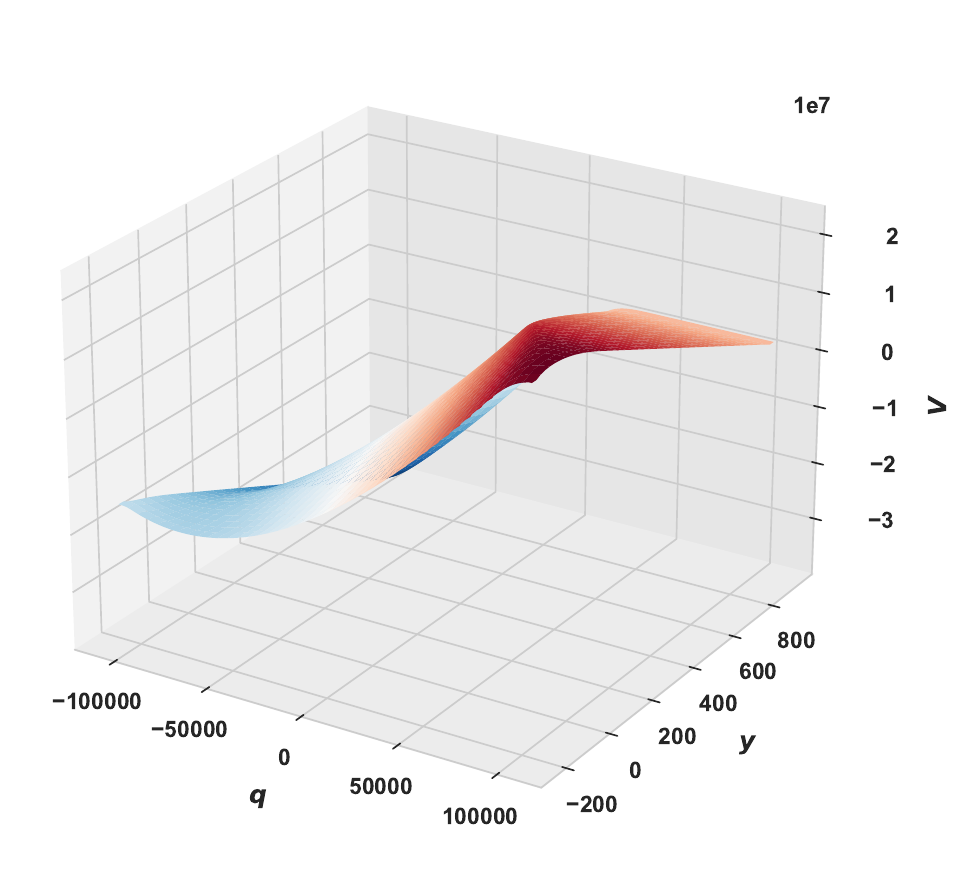}%
		\label{fig:V_yq_t0}}
	\caption{Cross-sections of the numerical value function
		$V^{\mathrm{III}}(0, x_i, y_j, q_k)$ at $t = 0$, each with one
		state variable fixed at the midpoint of its computational domain.
		Trading day: 2022-12-11.}
	\label{fig:V_surfaces_t0}
\end{figure}

{\section*{Declarations}}

\noindent{\textbf{Funding.} This work was supported by the Alexander von
Humboldt Foundation. Michael Samet acknowledges support by the Helmholtz
School for Data Science in Life, Earth and Energy (HDS-LEE).}

\noindent{\textbf{Competing interests.} The authors declare no competing
interests.}

\noindent{\textbf{Ethics approval.} Not applicable.}

\noindent{\textbf{Data availability.} The data used in this study are
publicly available from the SMARD platform at
\href{https://www.smard.de/en/downloadcenter/download-market-data}{https://www.smard.de}.
The data sources and preprocessing steps are described in
Section~\ref{sec:data_description_parameters}.}

\noindent{\textbf{Code availability.} The source code used to produce the
numerical results will be made publicly available in a GitHub repository
upon publication. Until then, it is available from the corresponding author
upon reasonable request.}

\noindent{\textbf{Use of artificial intelligence.} The authors edited the
manuscript with AI assistance and take full responsibility for the content
of the manuscript.}


{
\bibliographystyle{plain}
\bibliography{bibliography} 
}
\appendix
{
\section{Proof of Lemma~\ref{lem:stageII_terminal_lipschitz}}
\label{app:proof_stageII_terminal_lipschitz}

\begin{proof}
Let $x,x'\in[0,1]$, and let
$X^{\varepsilon,x}$ and $X^{\varepsilon,x'}$ denote the regularized
production processes driven by the same Brownian motion, with
$$
X_{T_{\mathrm{gc}}}^{\varepsilon,x}=x,
\qquad
X_{T_{\mathrm{gc}}}^{\varepsilon,x'}=x'.
$$
Under Assumptions~\ref{ass:lipschitz_muX}
and~\ref{ass:lipschitz_sigmaX}, existence and strong uniqueness follow
from \cite{karatzas2014brownian}.

We define
$$
\Delta X_s
:=
X_s^{\varepsilon,x}
-
X_s^{\varepsilon,x'},
\qquad
s\in[T_{\mathrm{gc}},T].
$$
Without loss of generality, assume that $x\geq x'$. By the comparison
theorem for one-dimensional stochastic differential equations,
$$
X_s^{\varepsilon,x}
\geq
X_s^{\varepsilon,x'}
\quad\text{a.s. for all }s\in[T_{\mathrm{gc}},T].
$$
Hence, $\Delta X_s\geq0$ almost surely. Since the production drift is
affine in the state variable,
$$
\mu_X(r,X_r^{\varepsilon,x})
-
\mu_X(r,X_r^{\varepsilon,x'})
=
-\theta(r)\Delta X_r.
$$
It follows that
$$
\begin{aligned}
\Delta X_s
={}&
x-x'
-
\int_{T_{\mathrm{gc}}}^s
\theta(r)\Delta X_r\,\mathrm{d}r
\\
&+
\int_{T_{\mathrm{gc}}}^s
\left(
\sigma_X^\varepsilon(X_r^{\varepsilon,x})
-
\sigma_X^\varepsilon(X_r^{\varepsilon,x'})
\right)
\,\mathrm{d}B_r^X.
\end{aligned}
$$
Under Assumption~\ref{ass:lipschitz_sigmaX}, the stochastic integral is
square integrable and therefore has zero expectation. Thus,
$$
\mathbb{E}[\Delta X_s]
=
x-x'
-
\int_{T_{\mathrm{gc}}}^s
\theta(r)\mathbb{E}[\Delta X_r]\,\mathrm{d}r
=
\exp\left(
-\int_{T_{\mathrm{gc}}}^s
\theta(r)\,\mathrm{d}r
\right)
(x-x').
$$
Since $\Delta X_s\geq0$ almost surely and
$\theta(r)\geq\theta_0>0$ by \eqref{eq:mean_reversion_condition},
\begin{equation}
\label{eq:regularized_production_l1_contraction}
\mathbb{E}\left[|\Delta X_s|\right]
=
\exp\left(
-\int_{T_{\mathrm{gc}}}^s
\theta(r)\,\mathrm{d}r
\right)
|x-x'|
\leq
|x-x'|,
\qquad
s\in[T_{\mathrm{gc}},T].
\end{equation}
The case $x<x'$ follows by interchanging $x$ and $x'$.

Combining the Stage~II interface condition with the Stage~I
representation and applying the tower property gives
\begin{equation}
\label{eq:regularized_stageII_terminal_representation}
v^{\mathrm{II},\varepsilon}(T_{\mathrm{gc}},x;\,q)
=
\mathbb{E}\left[
g\left(
q
-
P_{\max}
\int_{T_{\mathrm{gc}}+h}^{T}
X_s^{\varepsilon,x}\,\mathrm{d}s
\right)
\right].
\end{equation}
Let $q,q'\in\mathbb{R}$. Using
\eqref{eq:regularized_stageII_terminal_representation} for the initial
states $(x,q)$ and $(x',q')$, together with the Lipschitz continuity of
$g(\xi)=\beta|\xi|$, gives
$$
\begin{aligned}
&
\left|
v^{\mathrm{II},\varepsilon}(T_{\mathrm{gc}},x;\,q)
-
v^{\mathrm{II},\varepsilon}(T_{\mathrm{gc}},x';\,q')
\right|
\\
&\leq
\beta
\mathbb{E}\left[
\left|
(q-q')
-
P_{\max}
\int_{T_{\mathrm{gc}}+h}^{T}
\Delta X_s\,\mathrm{d}s
\right|
\right]
\\
&\leq
\beta|q-q'|
+
\beta P_{\max}
\int_{T_{\mathrm{gc}}+h}^{T}
\mathbb{E}\left[|\Delta X_s|\right]
\,\mathrm{d}s.
\end{aligned}
$$
Applying \eqref{eq:regularized_production_l1_contraction} and using
$T=T_{\mathrm{gc}}+h+L$, we obtain
$$
\begin{aligned}
&
\left|
v^{\mathrm{II},\varepsilon}(T_{\mathrm{gc}},x;\,q)
-
v^{\mathrm{II},\varepsilon}(T_{\mathrm{gc}},x';\,q')
\right|
\\
&\leq
\beta P_{\max}L|x-x'|
+
\beta|q-q'|
\\
&\leq
\max\{\beta P_{\max}L,\beta\}
\left(
|x-x'|
+
|q-q'|
\right).
\end{aligned}
$$
The estimate stated in the lemma therefore holds with
$$
C=\max\{\beta P_{\max}L,\beta\}.
$$
\end{proof}
}

\section{Proof of Theorem~\ref{thm:viscosity_reg}}
\label{app:proof_well_posedness_reg}
    
\begin{proof}
After the sign change
$
\tilde v^\varepsilon:=-v^{\mathrm{III},\varepsilon},
$
the regularized Stage~III problem is cast as a finite-horizon maximization
problem for a controlled jump-diffusion of the type studied
in~\cite{pham1998optimal}, with state variable
$
\boldsymbol{\xi}=(x,y,q)\in[0,1]\times\mathbb{R}^2,
$
drift components $\mu_X(t,x)$, $\mu_Y(t,y)$, and $\psi$,
diffusion coefficients $\sigma_X^\varepsilon(x)$ and $\sigma>0$,
running reward $-\ell(y,\psi)$, terminal
reward $-v^{\mathrm{II},\varepsilon}(T_{\mathrm{gc}},x;q)$, and jump amplitude
$
\Gamma(z)
$. In our considered framework there is no discount factor as in \cite{pham1998optimal}.
Our setting is a pure control problem with deterministic terminal time,
hence it falls within the framework of~\cite{pham1998optimal} as a
special case of their controlled jump-diffusion problem with no stopping. In the following paragraphs we verify that the conditions of Assumptions~\ref{ass:lipschitz}-\ref{ass:cost_lip} hold and hence fit the framework of~\cite{pham1998optimal}. Since the study of~\cite{pham1998optimal} is done on the state space $\mathbb{R}^d$ for some $d\in\mathbb{N}$, whereas the state variable $x$ in our setting takes values in $[0,1]$ (see Section~\ref{sec:wind_dynamics}), we reformulate our setting by a standard extension argument. We define the extended drift and diffusion coefficients and the extended terminal cost by
$$
\widetilde\mu_X(t,x):=
\begin{cases}
\mu_X(t,0), & x<0,\\
\mu_X(t,x), & x\in[0,1],\\
\mu_X(t,1), & x>1,
\end{cases}
\qquad
\widetilde\sigma_X^\varepsilon(x):=
\begin{cases}
0, & x<0,\\
\sigma_X^\varepsilon(x), & x\in[0,1],\\
0, & x>1,
\end{cases}
$$
and
$$
\widetilde v^{\mathrm{II},\varepsilon}(T_{\mathrm{gc}},x;q):=
\begin{cases}
v^{\mathrm{II},\varepsilon}(T_{\mathrm{gc}},0;q), & x<0,\\
v^{\mathrm{II},\varepsilon}(T_{\mathrm{gc}},x;q), & x\in[0,1],\\
v^{\mathrm{II},\varepsilon}(T_{\mathrm{gc}},1;q), & x>1.
\end{cases}
$$
The extended coefficients and extended terminal cost coincide with the original functions on $[0,1]$, and are set to be constant, equal to the values at the boundary outside $[0,1]$. Our aim in what follows is to verify Assumptions~\ref{ass:lipschitz}-\ref{ass:cost_lip}.

The continuity assertion in Assumption~\ref{ass:continuity} follows directly
from the construction of the model. More specifically, we have that
$$
p_X(\cdot),p_Y(\cdot)\in C^1([0,T_{\mathrm{gc}}]),
\qquad
\theta(\cdot)\in C([0,T_{\mathrm{gc}}]),
$$
Consequently, the drift functions $\mu_X(t,x)$ and $\mu_Y(t,y)$ are
continuous in $(t,x)$ and $(t,y)$, respectively. The regularized diffusion
coefficient $\sigma_X^\varepsilon(x)$ defined in~\eqref{eq:regularized_diffusion}
is continuous on $[0,1]$ by construction. The extended coefficients $\widetilde\mu_X(t,\cdot)$ and $\widetilde\sigma_X^\varepsilon$ are continuous on $\mathbb{R}$. For $\widetilde\mu_X(t,\cdot)$, this follows from the continuity of $\mu_X(t,\cdot)$ on $[0,1]$, while for $\widetilde\sigma_X^\varepsilon$ it follows from the continuity of $\sigma_X^\varepsilon$ on $[0,1]$ together with $\sigma_X^\varepsilon(0)=\sigma_X^\varepsilon(1)=0$.

We next verify Assumption~\ref{ass:lipschitz}. Since
$\theta(\cdot)\in C([0,T_{\mathrm{gc}}])$, it attains its maximum on
$[0,T_{\mathrm{gc}}]$. The drift term $\mu_X(t,x)$ satisfies
$$
|\mu_X(t,x)-\mu_X(t,x')|
=
\theta(t)|x-x'|
\le
\left(\max_{t\in[0,T_{\mathrm{gc}}]}\theta(t)\right)|x-x'|, \quad x,x'\in[0,1].
$$
Similarly, the drift term $\mu_Y(t,y)$ satisfies
$$
|\mu_Y(t,y)-\mu_Y(t,y')|
=
\kappa|y-y'|.
$$
For the regularized diffusion coefficient,
\eqref{eq:regularized_diffusion} shows that
$\sigma_X^\varepsilon(x)$ is piecewise differentiable and continuous on
$[0,1]$. On $[0,\varepsilon]$ and $[1-\varepsilon,1]$ it is linear
with slope
$$
\frac{\sigma_X(\varepsilon)}{\varepsilon}
=
\sqrt{\frac{2\alpha\theta_0(1-\varepsilon)}{\varepsilon}}.
$$
On $[\varepsilon,1-\varepsilon]$, one has
$\sigma_X^\varepsilon(x)=\sigma_X(x)=\sqrt{2\alpha\theta_0x(1-x)}$, and hence
$$
|(\sigma_X^\varepsilon)'(x)|
=
\frac{\sqrt{2\alpha\theta_0}|1-2x|}{2\sqrt{x(1-x)}}
\le
\frac{\sqrt{2\alpha\theta_0}(1-2\varepsilon)}{2\sqrt{\varepsilon(1-\varepsilon)}},
\quad x\in[\varepsilon,1-\varepsilon].
$$
Moreover, we have that
$$
\frac{1-2\varepsilon}{2\sqrt{\varepsilon(1-\varepsilon)}}
\le
\sqrt{\frac{1-\varepsilon}{\varepsilon}}, \quad \text{for all } \varepsilon\in(0,1/2).
$$ 
Hence, taking the maximum derivative for $x\in[0,1]$, we have that
$$
|\sigma_X^\varepsilon(x)-\sigma_X^\varepsilon(x')|
\le
\sqrt{\frac{2\alpha\theta_0(1-\varepsilon)}{\varepsilon}}|x-x'|,
\quad \text{for all } x,x'\in[0,1].
$$
We now prove that the same Lipschitz constants hold for the extended coefficients on $\mathbb{R}$. Set
$$
L_X:=\max_{t\in[0,T_{\mathrm{gc}}]}\theta(t),
\qquad
L_\sigma:=\sqrt{\frac{2\alpha\theta_0(1-\varepsilon)}{\varepsilon}}.
$$
Since $\widetilde\mu_X$ is constant on $(-\infty,0]$ and on $[1,\infty)$, only the cross-boundary cases must be checked. If $x\le 0\le x'\le 1$, then
$$
|\widetilde\mu_X(t,x)-\widetilde\mu_X(t,x')|
=
|\mu_X(t,0)-\mu_X(t,x')|
\le
L_X|x'|
\le
L_X|x'-x|.
$$
If $0\le x\le 1\le x'$, then similarly
$$
|\widetilde\mu_X(t,x)-\widetilde\mu_X(t,x')|
=
|\mu_X(t,x)-\mu_X(t,1)|
\le
L_X|1-x|
\le
L_X|x'-x|.
$$
If $x\le 0$ and $x'\ge 1$, then
$$
|\widetilde\mu_X(t,x)-\widetilde\mu_X(t,x')|
=
|\mu_X(t,0)-\mu_X(t,1)|
\le
L_X
\le
L_X|x'-x|,
$$
since $x'-x\ge 1$. Hence
$$
|\widetilde\mu_X(t,x)-\widetilde\mu_X(t,x')|
\le
L_X|x-x'|,
\qquad x,x'\in\mathbb{R}.
$$
The argument for $\widetilde\sigma_X^\varepsilon$ is identical, using that $\widetilde\sigma_X^\varepsilon$ vanishes on $(-\infty,0]\cup[1,\infty)$ and that $\sigma_X^\varepsilon$ is $L_\sigma$-Lipschitz on $[0,1]$. Thus
$$
|\widetilde\sigma_X^\varepsilon(x)-\widetilde\sigma_X^\varepsilon(x')|
\le
L_\sigma|x-x'|,
\qquad x,x'\in\mathbb{R}.
$$

The price volatility $\sigma_Y=\sigma$ is constant and therefore
trivially Lipschitz. The cross-diffusion term
$\rho\widetilde\sigma_X^\varepsilon(x)\sigma$ inherits Lipschitz continuity
in $x$ from $\widetilde\sigma_X^\varepsilon(x)$, with constant
$|\rho|\sigma\sqrt{\frac{2\alpha\theta_0(1-\varepsilon)}{\varepsilon}}$.
Hence Assumption~\ref{ass:lipschitz} holds for the extended coefficients on $\mathbb{R}$ with
$$
K_1
=
\max\left\{
\max_{t\in[0,T_{\mathrm{gc}}]}\theta(t),
\kappa,
\sqrt{\frac{2\alpha\theta_0(1-\varepsilon)}{\varepsilon}},
|\rho|\sigma\sqrt{\frac{2\alpha\theta_0(1-\varepsilon)}{\varepsilon}}
\right\}.
$$

We now verify Assumption~\ref{ass:levy}. The jump component is
a compound Poisson process with finite intensity $\lambda<\infty$,
hence $\nu(\mathbb{R})=\lambda<\infty$. For the
double-exponential random variable $Z$ we have that
the second moment is
$$
\mathbb{E}[Z^2]
= p_+\frac{2}{\eta_+^2} + p_-\frac{2}{\eta_-^2} < \infty.
$$
Consequently, $\int_{\mathbb{R}} |z|^2\nu(\mathrm{d}z)
= \lambda\mathbb{E}[Z^2] < \infty$, confirming
\eqref{eq:levy_finiteness}. In addition, since $\Gamma(z) = (0,z,0)^\top$ does
not depend on $(t,\boldsymbol{\xi},\psi)$ and
$|\Gamma(z)| = |z| = \varrho(z)$, the conditions
\eqref{eq:jump_amplitude_bound} follow trivially.

Finally, we verify Assumption~\ref{ass:cost_lip}. For the running
cost $\ell(y,\psi) = -\psi y + \tfrac{\gamma}{2}\psi^2$, we have
$$
|\ell(y,\psi)-\ell(y',\psi)|
= |\psi||y-y'|
\le \bar{\psi}|y-y'|,
$$
where $\bar{\psi}:=\max(|\psi_{\min}|,|\psi_{\max}|)$, so the
running cost is Lipschitz in $y$ with constant $\bar{\psi}$.

For the terminal cost
$v^{\mathrm{II},\varepsilon}(T_{\mathrm{gc}},x;q)$,
Lemma~\ref{lem:stageII_terminal_lipschitz} shows that there exists
$C>0$ such that
$$
|v^{\mathrm{II},\varepsilon}(T_{\mathrm{gc}},x;q)
- v^{\mathrm{II},\varepsilon}(T_{\mathrm{gc}},x';q')|
\le C(|x-x'|+|q-q'|), \quad x,x'\in[0,1].
$$
The argument for $\widetilde v^{\mathrm{II},\varepsilon}$ is identical, using the three cross-boundary cases $x\le 0\le x'\le 1$, $0\le x\le 1\le x'$, and $x\le 0\le 1\le x'$. Thus
$$
|\widetilde v^{\mathrm{II},\varepsilon}(T_{\mathrm{gc}},x;q)
-\widetilde v^{\mathrm{II},\varepsilon}(T_{\mathrm{gc}},x';q')|
\le
C(|x-x'|+|q-q'|),
\qquad x,x'\in\mathbb{R}.
$$

Consequently, Assumption~\ref{ass:cost_lip} holds with
$K_2=\max\{\bar\psi,C\}$.

The dynamic programming principle
(Proposition~3.1 of~\cite{pham1998optimal}), the viscosity
characterization (Theorem~3.1), and the comparison principle
(Theorem~4.1) then yield that $\tilde v^\varepsilon$ is the unique
viscosity solution of the corresponding HJB-PIDE on
$[0,T_{\mathrm{gc}})\times\mathbb{R}^3$ in the
class of continuous functions with at most linear growth and that are uniformly
continuous in $(x,y,q)$ uniformly in $t$. Moreover, by Proposition 3.3 in \cite{pham1998optimal}, the value
function is globally Lipschitz continuous in the  spatial variables, uniformly in $t$.

The process $\widetilde X_t^\varepsilon$ with extended coefficients and regularized diffusion has continuous sample paths. Under the mean-reversion condition~\eqref{eq:mean_reversion_condition}, one has
$$
\widetilde\mu_X(t,0)=\mu_X(t,0)=\dot p_X(t)+\theta(t)p_X(t)\ge 0,
\qquad
\widetilde\mu_X(t,1)=\mu_X(t,1)=\dot p_X(t)-\theta(t)(1-p_X(t))\le 0,
$$
for all $t\in[0,T_{\mathrm{gc}}]$. Moreover, by the definition of the piecewise extension and by~\eqref{eq:regularized_diffusion},
$$
\widetilde\sigma_X^\varepsilon(x)=0 \quad \text{for } x\notin(0,1),
\qquad
\widetilde\sigma_X^\varepsilon(0)=\widetilde\sigma_X^\varepsilon(1)=0.
$$
Therefore any exit from $[0,1]$ would have to occur through one of the boundary points. However, at $x=0$ and $x=1$ the diffusion vanishes, and outside $[0,1]$ the extended dynamics have drift directed inward, namely
$$
\widetilde\mu_X(t,x)=\widetilde\mu_X(t,0)\ge 0 \quad \text{for } x\le 0,
\qquad
\widetilde\mu_X(t,x)=\widetilde\mu_X(t,1)\le 0 \quad \text{for } x\ge 1.
$$
Consequently, since the diffusion vanishes at $\{0,1\}$, the boundary dynamics are purely deterministic outside $[0,1]$, and the inward-pointing drift prevents the process from crossing outward. Hence, by  the continuity of sample paths of $\tilde{X}^\varepsilon_t$, the process cannot leave $[0,1]$. Therefore $[0,1]$ is invariant for $\tilde{X}^\varepsilon_t$. Consequently, for every admissible control and every initial state $(x,y,q)\in[0,1]\times\mathbb{R}^2$, the extended process remains in the physical domain and therefore evolves according to the original regularized coefficients and terminal cost. The associated payoff functionals coincide, and the restriction of $\tilde v^\varepsilon$ to $[0,T_{\mathrm{gc}})\times[0,1]\times\mathbb{R}^2$ is exactly $-v^{\mathrm{III},\varepsilon}$, which concludes the proof.
\end{proof}

\section{Regularization of the Jacobi Diffusion}
\label{app:regularization}
Figure~\ref{fig:regularized_volatility} shows the global effect of the
regularization on the interval $x\in[0,1]$. For completeness, the localized
behavior near the boundary is illustrated in Figure~\ref{fig:regularized_zoom}.

\FloatBarrier
\begin{figure}[!ht]
	\centering
	\includegraphics[width=0.55\linewidth]{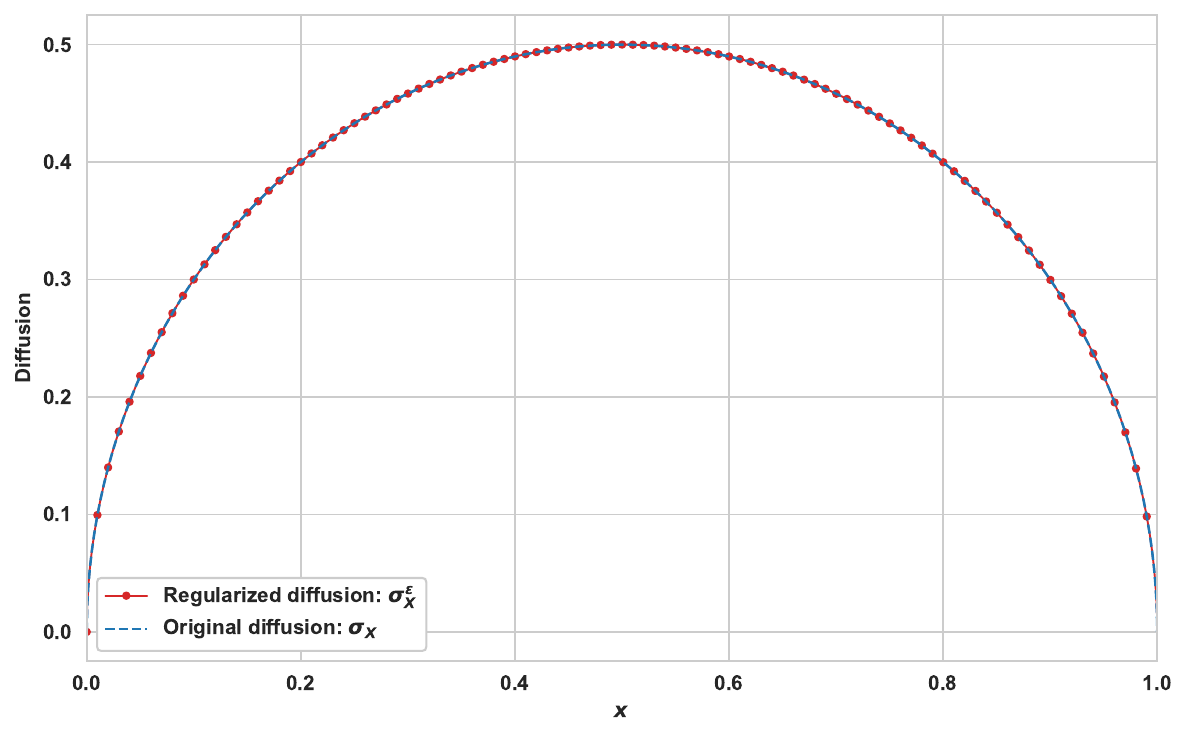}
	\caption{Original Jacobi diffusion $\sigma_X$ (dashed) and its regularized
		counterpart $\sigma_X^{\varepsilon}$ (solid) for $\varepsilon=10^{-4}$ on
		$x\in[0,1]$. By construction, $\sigma_X^{\varepsilon}=\sigma_X$ on
		$[\varepsilon,1-\varepsilon]$ and $\sigma_X^{\varepsilon}$ is linear on
		$[0,\varepsilon]$ and $[1-\varepsilon,1]$.}
	\label{fig:regularized_volatility}
\end{figure}

\FloatBarrier
\begin{figure}[!ht]
	\centering
	\includegraphics[width=0.55\linewidth]{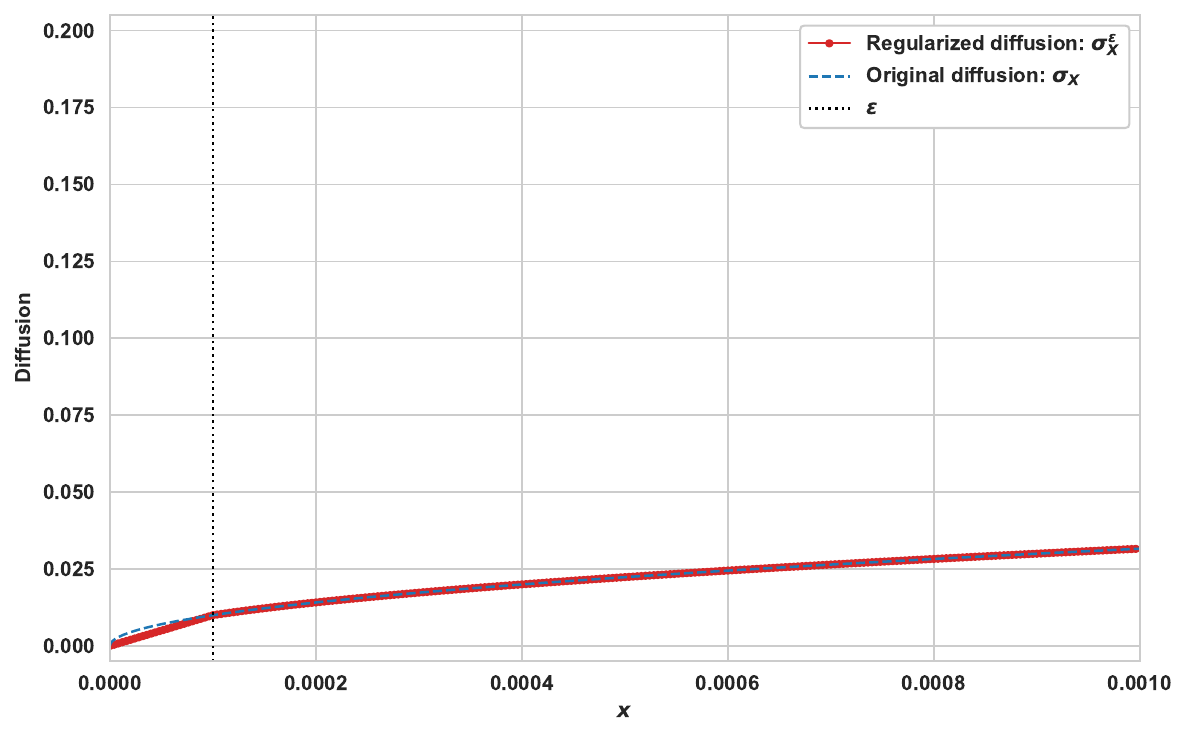}
	\caption{Zoom near the left boundary illustrating the linear behavior of the
		regularized volatility $\sigma_X^{\varepsilon}$ for
		$\varepsilon =10^{-4}$ near $x=0$.}
	\label{fig:regularized_zoom}
\end{figure}

Figure~\ref{fig:regularization_convergence} reports $\|V(0,x,y,q) - V^{\varepsilon}(0,x,y,q)\|_\infty$ as a function of $\varepsilon$. Both $V$ and $V^{\varepsilon}$ are computed on the same grid, with the same domain truncation, boundary treatment, and solver settings as used throughout Section~\ref{sec:numerical_experiments}, with the parameters of Table~\ref{tab:parameters}; the regularization parameter $\varepsilon$ is the only quantity varied.

The rapid decay observed in the
plot provides numerical evidence that the regularized solutions
$V^{\varepsilon}$ converge to $V$ as $\varepsilon \to 0$.

\begin{figure}[!ht]
	\centering
	\includegraphics[width=0.55\linewidth]{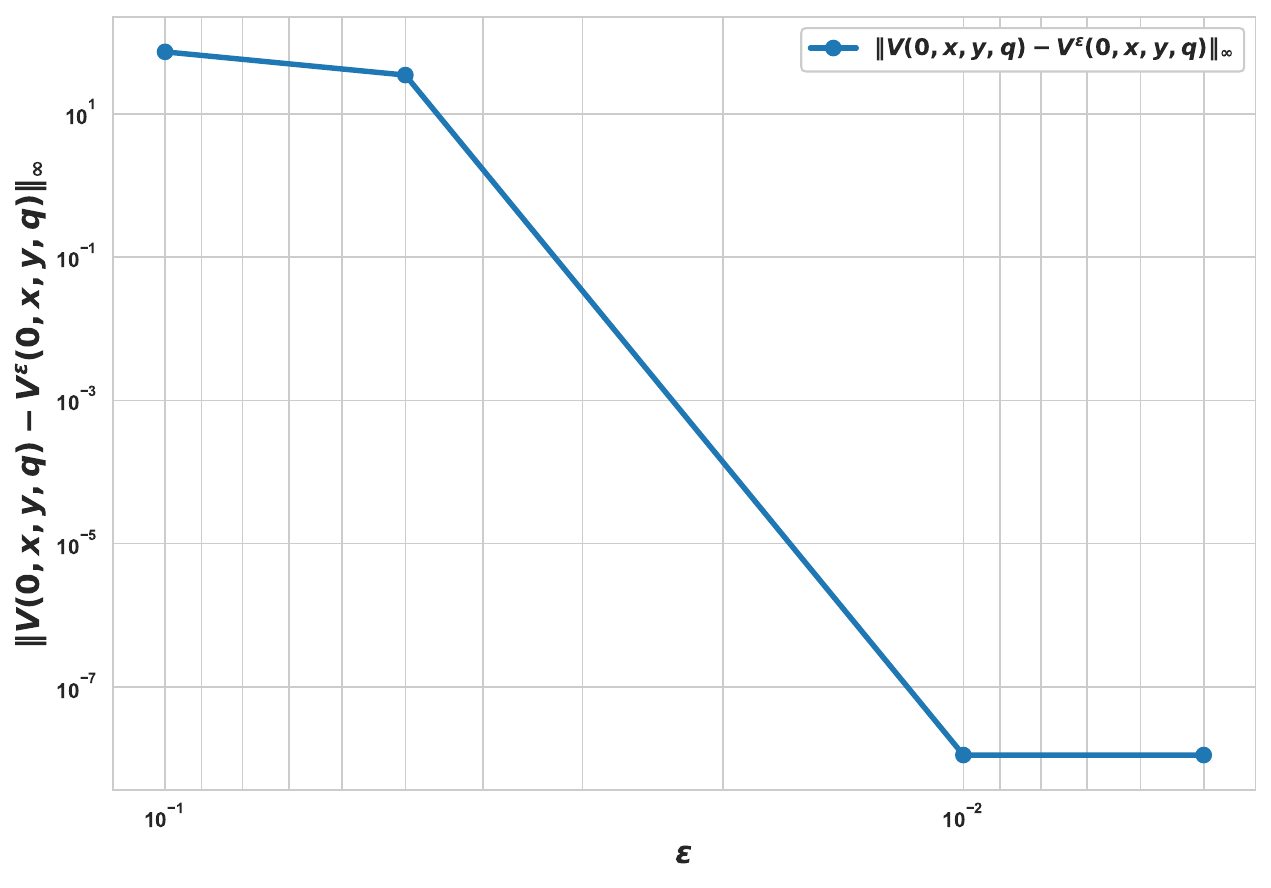}
	\caption{Sensitivity of $\|V(0,x,y,q) - V^{\varepsilon}(0,x,y,q)\|_\infty$ to the regularization parameter $\varepsilon$ for the trading day 2023-04-12. The reference solution $V$ is computed with the original diffusion coefficient $\sigma_X$, the regularized solution $V^{\varepsilon}$ with the coefficient $\sigma_X^{\varepsilon}$, and the difference is evaluated at $t=0$ over the full spatial grid.}
	\label{fig:regularization_convergence}
\end{figure}


{
\section{Proof of Lemma~\ref{lem:production_process_convergence}}
\label{app:proof_production_process_convergence}

\begin{proof}
For $a,b\geq0$, we have that $|\sqrt a-\sqrt b|^2\leq|a-b|$. Hence, for
$x,x'\in[0,1]$,
$$
\begin{aligned}
|\sigma_X(x)-\sigma_X(x')|^2
&\leq2\alpha\theta_0
\left|x(1-x)-x'(1-x')\right|\\
&=2\alpha\theta_0|x-x'|\,|1-x-x'|\\
&\leq2\alpha\theta_0|x-x'|.
\end{aligned}
$$
The coefficients $\sigma_X^\varepsilon$ and $\sigma_X$ agree on
$[\varepsilon,1-\varepsilon]$. On each boundary interval, their
difference is bounded by $\sigma_X(\varepsilon)$. Therefore,
$$
\left\|\sigma_X^\varepsilon-\sigma_X\right\|_{L^\infty(0,1)}
\leq\sqrt{2\alpha\theta_0\varepsilon}.
$$

Let $0\leq t\leq s\leq T$ and $x\in[0,1]$. For $r\in[t,s]$, set
$$
\Delta X_r^\varepsilon
:=
X_r^{\varepsilon,t,x}-X_r^{t,x}.
$$
Since the two equations have the same affine drift,
$$
\mu_X(r,X_r^{\varepsilon,t,x})
-
\mu_X(r,X_r^{t,x})
=
-\theta(r)\Delta X_r^\varepsilon.
$$

We use the Yamada-Watanabe approximation of the absolute value.
Let $\delta>0$ and $R>1$, and let
$\phi_{\delta,R}$ be a continuous nonnegative function supported
in $[\delta/R,\delta]$, satisfying
$$
\int_{\delta/R}^{\delta}
\phi_{\delta,R}(u)\,\mathrm{d}u
=
1,
\qquad
\phi_{\delta,R}(u)
\leq
\frac{2}{u\log R}.
$$
We further define $\varphi_{\delta,R}\in C^2(\mathbb{R})$ by
$$
\varphi_{\delta,R}(z)
:=
\int_0^{|z|}
\int_0^y
\phi_{\delta,R}(u)\,\mathrm{d}u\,\mathrm{d}y.
$$
Then
$$
|z|
\leq
\delta+\varphi_{\delta,R}(z),
\qquad
\left|\varphi_{\delta,R}'(z)\right|
\leq1,
\qquad
\varphi_{\delta,R}'(z)z\geq0,
$$
and
$$
0\leq
\varphi_{\delta,R}''(z)
\leq
\frac{2}{|z|\log R}
\mathbf{1}_{[\delta/R,\delta]}(|z|),
\qquad z\in\mathbb{R}\setminus\{0\}.
$$

Since $\Delta X_t^\varepsilon=0$, applying It\^o's formula and taking
expectations yields
$$
\begin{aligned}
\mathbb{E}\left[
\varphi_{\delta,R}(\Delta X_s^\varepsilon)
\right]
={}&-
\mathbb{E}\left[
\int_t^s\theta(r)
\varphi_{\delta,R}'(\Delta X_r^\varepsilon)
\Delta X_r^\varepsilon\,\mathrm{d}r
\right]\\
&+\frac12\mathbb{E}\left[
\int_t^s
\varphi_{\delta,R}''(\Delta X_r^\varepsilon)
\left|
\sigma_X^\varepsilon(X_r^{\varepsilon,t,x})
-\sigma_X(X_r^{t,x})
\right|^2\mathrm{d}r
\right].
\end{aligned}
$$
The stochastic integral has zero expectation because the diffusion
coefficients are bounded on $[0,1]$. Since $\theta(r)\geq0$, the first
term on the right hand side is nonpositive. Moreover,
$$
\begin{aligned}
&\left|
\sigma_X^\varepsilon(X_r^{\varepsilon,t,x})
-\sigma_X(X_r^{t,x})
\right|^2\\
&\qquad\leq
2\left\|\sigma_X^\varepsilon-\sigma_X
\right\|_{L^\infty(0,1)}^2
+4\alpha\theta_0|\Delta X_r^\varepsilon|.
\end{aligned}
$$
On the support of $\varphi_{\delta,R}''$,
$|\Delta X_r^\varepsilon|\geq\delta/R$. Consequently,
$$
\begin{aligned}
\mathbb{E}|\Delta X_s^\varepsilon|
\leq\delta+(s-t)\left(
\frac{2R}{\delta\log R}
\left\|\sigma_X^\varepsilon-\sigma_X
\right\|_{L^\infty(0,1)}^2
+\frac{4\alpha\theta_0}{\log R}
\right).
\end{aligned}
$$
Thus
$$
\begin{aligned}
&\sup_{\substack{0\leq t\leq s\leq T\\x\in[0,1]}}
\mathbb{E}\left[
\left|X_s^{\varepsilon,t,x}-X_s^{t,x}\right|
\right]\\
&\qquad\leq
\delta+T\left(
\frac{2R}{\delta\log R}
\left\|\sigma_X^\varepsilon-\sigma_X
\right\|_{L^\infty(0,1)}^2
+\frac{4\alpha\theta_0}{\log R}
\right).
\end{aligned}
$$
We first let $\varepsilon\downarrow0$, then
$R\uparrow\infty$, and finally $\delta\downarrow0$.
\end{proof}

\section{Proof of Proposition~\ref{prop:value_function_regularization}}
\label{app:proof_value_function_regularization}

\begin{proof}
The imbalance penalty $g(\xi)=\beta|\xi|$ is globally Lipschitz with
constant $\beta$. Let $t\in[T-L,T]$. By the conditional expectation
representation of the Stage~I value function and the Lipschitz
continuity of $g$,
\begin{equation}
\label{eq:stageI_regularization_estimate}
\begin{aligned}
&\left|v^{\mathrm{I},\varepsilon}(t,x,m;\,q)
-v^{\mathrm{I}}(t,x,m;\,q)\right|\\
&\qquad\leq
\beta P_{\max}\int_t^T
\mathbb{E}\left[
\left|X_s^{\varepsilon,t,x}-X_s^{t,x}\right|
\right]\mathrm{d}s\\
&\qquad\leq
\beta P_{\max}L
\sup_{\substack{0\leq r\leq u\leq T\\z\in[0,1]}}
\mathbb{E}\left[
\left|X_u^{\varepsilon,r,z}-X_u^{r,z}\right|
\right].
\end{aligned}
\end{equation}

Let $t\in[T_{\mathrm{gc}},T-L]$. By the Markov property of the
production process and the tower property, together with the
conditional expectation representations of Stages~I and~II,
\begin{equation}
\label{eq:stageII_regularization_estimate}
\begin{aligned}
&\left|v^{\mathrm{II},\varepsilon}(t,x;\,q)
-v^{\mathrm{II}}(t,x;\,q)\right|\\
&\qquad\leq
\beta P_{\max}\int_{T-L}^T
\mathbb{E}\left[
\left|X_s^{\varepsilon,t,x}-X_s^{t,x}\right|
\right]\mathrm{d}s\\
&\qquad\leq
\beta P_{\max}L
\sup_{\substack{0\leq r\leq u\leq T\\z\in[0,1]}}
\mathbb{E}\left[
\left|X_u^{\varepsilon,r,z}-X_u^{r,z}\right|
\right].
\end{aligned}
\end{equation}

Let $t\in[0,T_{\mathrm{gc}}]$ and fix
$\psi\in\mathcal{A}_{t,T}$. Consider the original and regularized
Stage~III problems driven by the same Brownian motions and Poisson
random measure. Since the regularization affects only the production
diffusion coefficient, the price processes coincide. Moreover, in both
systems,
\begin{equation}
\label{eq:inventory_common_control}
Q_s
=
q+\int_t^s\psi_r\,\mathrm{d}r,
\qquad s\in[t,T_{\mathrm{gc}}],
\end{equation}
so the inventory processes coincide as well. Consequently, the
running costs are identical. By the Markov property of the production
process and the tower property, together with the Stage~II and
Stage~I representations,
\begin{equation}
\label{eq:cost_functional_regularized_estimate}
\begin{aligned}
&\left|J^{\mathrm{III},\varepsilon}(t,x,y,q;\psi)
-J^{\mathrm{III}}(t,x,y,q;\psi)\right|\\
&\qquad\leq
\beta P_{\max}\int_{T-L}^T
\mathbb{E}\left[
\left|X_s^{\varepsilon,t,x}-X_s^{t,x}\right|
\right]\mathrm{d}s\\
&\qquad\leq
\beta P_{\max}L
\sup_{\substack{0\leq r\leq u\leq T\\z\in[0,1]}}
\mathbb{E}\left[
\left|X_u^{\varepsilon,r,z}-X_u^{r,z}\right|
\right].
\end{aligned}
\end{equation}
The bound in~\eqref{eq:cost_functional_regularized_estimate} is uniform
in $\psi\in\mathcal{A}_{t,T}$. Since the two value functions are
defined by taking the infimum over the same admissible control set
$\mathcal{A}_{t,T}$,
\begin{equation}
\label{eq:stageIII_regularization_estimate}
\begin{aligned}
&\left|v^{\mathrm{III},\varepsilon}(t,x,y,q)
-v^{\mathrm{III}}(t,x,y,q)\right|\\
&\qquad\leq
\sup_{\psi\in\mathcal{A}_{t,T}}
\left|J^{\mathrm{III},\varepsilon}(t,x,y,q;\psi)
-J^{\mathrm{III}}(t,x,y,q;\psi)\right|\\
&\qquad\leq
\beta P_{\max}L
\sup_{\substack{0\leq r\leq u\leq T\\z\in[0,1]}}
\mathbb{E}\left[
\left|X_u^{\varepsilon,r,z}-X_u^{r,z}\right|
\right].
\end{aligned}
\end{equation}

Taking the suprema over the corresponding time and state domains
in~\eqref{eq:stageI_regularization_estimate},
\eqref{eq:stageII_regularization_estimate}, and
\eqref{eq:stageIII_regularization_estimate}, and applying
Lemma~\ref{lem:production_process_convergence}, proves
Proposition~\ref{prop:value_function_regularization}.
\end{proof}

\section{Proof of Proposition~\ref{prop:price_bound}}
\label{sec:proof_prob_bound_Y}

\begin{proof}
Let
$$
D_t:=Y_t-p_Y(t),\qquad t\in [0,T_{\mathrm{gc}}].
$$
Subtracting $\dot p_Y(t)\,\mathrm{d}t$ from \eqref{eq:price_sde} and using
$D_0=Y_0-p_Y(0)=0$ yields
$$
\mathrm{d}D_t
=
-\kappa D_t\,\mathrm{d}t
+\sigma\,\mathrm{d}B_t^Y
+\mathrm{d}J_t,
\qquad
D_0=0.
$$

Hence, we can define the following decomposition of the process $D_t$
$$
D_t
=
\underbrace{\sigma\int_0^t e^{-\kappa(t-s)}\,\mathrm{d}B_s^Y}_{=:U_t}
\;+\;
\underbrace{\int_0^t e^{-\kappa(t-s)}\,\mathrm{d}J_s}_{=:\widetilde J_t}.
$$

By subadditivity of the supremum norm, we have that
$$
\mathbb{P}\!\left(
\sup_{0\le t\le T_{\mathrm{gc}}} |D_t| > K_U+K_J
\right)
\le
\mathbb{P}\!\left(
\sup_{0\le t\le T_{\mathrm{gc}}} |U_t| > K_U
\right)
+
\mathbb{P}\!\left(
\sup_{0\le t\le T_{\mathrm{gc}}} |\widetilde J_t| > K_J
\right).
$$

Since each finite linear combination of $(U_t)_{t\in[0,T_{\mathrm{gc}}]}$ is an It\^o integral with a deterministic integrand, it is Gaussian. Hence $U_t$ is a centered Gaussian process. Moreover, $U_t$ is the unique strong solution of the following SDE
$$
\mathrm{d}U_t=-\kappa U_t\,\mathrm{d}t+\sigma\,\mathrm{d}B_t^Y,
\qquad U_0=0,
$$
whose coefficients are globally Lipschitz, therefore $U_t$ has continuous sample paths \cite{karatzas2014brownian}, and its variance is given by
$$
\mathrm{Var}[U_t]
=
\sigma^2\int_0^t e^{-2\kappa(t-s)}\,\mathrm{d}s
=
\frac{\sigma^2}{2\kappa}\bigl(1-e^{-2\kappa t}\bigr),
$$
which is non-decreasing in $t$. Therefore,
$$
\sigma_{\sup}^2
:=
\sup_{0\le t\le T_{\mathrm{gc}}}\mathrm{Var}[U_t]
=
\frac{\sigma^2}{2\kappa}\bigl(1-e^{-2\kappa T_{\mathrm{gc}}}\bigr).
$$

Since $U_t$ is a continuous Gaussian process on the compact interval $[0,T_{\mathrm{gc}}]$, it is almost surely bounded, and we define
$$
m_U:=\mathbb{E}\!\left[\sup_{0\le t\le T_{\mathrm{gc}}} U_t\right]
$$

By the Borell-TIS inequality (see, e.g., \cite[Theorem~2.1.1]{adler2007random}), for every $u>0$, we have
$$
\mathbb{P}\!\left(
\sup_{0\le t\le T_{\mathrm{gc}}} U_t > m_U+u
\right)
\le
\exp\!\left(-\frac{u^2}{2\sigma_{\sup}^2}\right).
$$

Since $-U$ has the same law as $U$, the same estimate holds for $-U$ i.e.,
$$
\mathbb{P}\!\left(
\sup_{0\le t\le T_{\mathrm{gc}}} (-U_t) > m_U+u
\right)
\le
\exp\!\left(-\frac{u^2}{2\sigma_{\sup}^2}\right).
$$

Hence, by the union bound, we obtain
$$
\mathbb{P}\!\left(
\sup_{0\le t\le T_{\mathrm{gc}}} |U_t| > m_U+u
\right)
\le
2\exp\!\left(-\frac{u^2}{2\sigma_{\sup}^2}\right).
$$

Setting $u=\sigma_{\sup}\sqrt{2\ln(4/\varepsilon)}$ yields
$$
\mathbb{P}\!\left(
\sup_{0\le t\le T_{\mathrm{gc}}} |U_t| > K_U(\varepsilon)
\right)
\le
\frac{\varepsilon}{2}.
$$

The second part of the proof aims to derive an estimate for $\widetilde J_t$. Since $J_t$ is a compensated compound Poisson process, it has finite variation on compact intervals. Hence the integration by parts formula applies to
$$
\widetilde J_t=\int_0^t e^{-\kappa(t-s)}\,\mathrm{d}J_s,
$$
and yields
$$
\widetilde J_t
=
J_t-\kappa\int_0^t e^{-\kappa(t-s)}J_s\,\mathrm{d}s.
$$

Therefore,
$$
|\widetilde J_t|
\le
|J_t|
+\kappa\int_0^t e^{-\kappa(t-s)}|J_s|\,\mathrm{d}s
\le
\sup_{0\le s\le t}|J_s|
\left(1+\kappa\int_0^t e^{-\kappa(t-s)}\,\mathrm{d}s\right)
=
\sup_{0\le s\le t}|J_s|\,(2-e^{-\kappa t}),
$$
and hence using $2-e^{-\kappa t}\le 2$,
$$
\sup_{0\le t\le T_{\mathrm{gc}}}|\widetilde J_t|
\le
2\sup_{0\le t\le T_{\mathrm{gc}}}|J_t|.
$$

For every $\alpha\in(-\eta_-,\eta_+)$, we define
$$
{\mathcal M_t^{(\alpha)}}
:=
\exp\!\bigl(\alpha J_t-t\Lambda_c(\alpha)\bigr),
\qquad
\Lambda_c(\alpha)
=
\lambda\bigl(\mathbb{E}[e^{\alpha Z}]-1-\alpha\,\mathbb{E}[Z]\bigr).
$$

Since $e^x\ge 1+x$ for $x \in \mathbb{R}$, we have $\Lambda_c(\alpha)\ge 0$. Moreover, for $\alpha_+\in(0,\eta_+)$ and $x>0$, the inequality
$$
e^{\alpha_+J_t}\le {\mathcal M_t^{(\alpha_+)}}\,e^{T_{\mathrm{gc}}\Lambda_c(\alpha_+)}
$$
implies
$$
\left\{
\sup_{0\le t\le T_{\mathrm{gc}}} J_t>x
\right\}
\subset
\left\{
\sup_{0\le t\le T_{\mathrm{gc}}} {\mathcal M_t^{(\alpha_+)}}
\ge e^{\alpha_+x-T_{\mathrm{gc}}\Lambda_c(\alpha_+)}
\right\}.
$$

Since ${\mathcal M^{(\alpha_+)}}$ is a nonnegative càdlàg martingale, hence a right-continuous submartingale, the submartingale inequality in \cite[Theorem~3.8(i)]{karatzas2014brownian} yields
$$
\mathbb{P}\!\left(
\sup_{0\le t\le T_{\mathrm{gc}}} J_t>x
\right)
\le
e^{-\alpha_+x+T_{\mathrm{gc}}\Lambda_c(\alpha_+)}.
$$

Applying the same argument to $-J$ with $\alpha_-\in(0,\eta_-)$ gives
$$
\mathbb{P}\!\left(
\sup_{0\le t\le T_{\mathrm{gc}}} (-J_t)>x
\right)
\le
e^{-\alpha_-x+T_{\mathrm{gc}}\Lambda_c(-\alpha_-)}.
$$

Therefore, by the union bound,
$$
\mathbb{P}\!\left(
\sup_{0\le t\le T_{\mathrm{gc}}} |J_t|>x
\right)
\le
e^{-\alpha_+x+T_{\mathrm{gc}}\Lambda_c(\alpha_+)}
+
e^{-\alpha_-x+T_{\mathrm{gc}}\Lambda_c(-\alpha_-)}.
$$

Set $x=\frac{K_J(\varepsilon)}{2}$, the definition of $K_J(\varepsilon)$ ensures that each exponential term is at most $\varepsilon/4$. Hence
$$
\mathbb{P}\!\left(
\sup_{0\le t\le T_{\mathrm{gc}}} |J_t|>\frac{K_J(\varepsilon)}{2}
\right)
\le
\frac{\varepsilon}{2}.
$$

Combining this with the bound on $\widetilde J_t$ yields
$$
\mathbb{P}\!\left(
\sup_{0\le t\le T_{\mathrm{gc}}} |\widetilde J_t|>K_J(\varepsilon)
\right)
\le
\frac{\varepsilon}{2}.
$$

To conclude, combining the bounds for $U_t$ and $\widetilde J_t$ gives
$$
\mathbb{P}\!\left(
\sup_{0\le t\le T_{\mathrm{gc}}} |D_t|>K(\varepsilon)
\right)
\le
\varepsilon.
$$
\end{proof}

\section{{Proof of the characteristic equations in the inventory variable}}
\label{sec:characteristics}
\begin{proof}
To facilitate the method of characteristics analysis, we
introduce a regularization of the original terminal cost
$
g(\xi)=\beta|\xi|
$, defined by
$$
g^\delta(\xi):=\beta\left(\sqrt{\xi^2+\delta^2}-\delta\right),
\qquad \delta>0.
$$
This regularization approximates the original terminal cost from below, with an
 error of order $\delta$, 
\begin{equation}
\label{eq:gdelta_uniform_error}
0\le g(\xi)-g^\delta(\xi)\le \beta\delta
\qquad\text{for all }\xi\in\mathbb{R}.
\end{equation}
Hence $g^\delta\to g$ uniformly on $\mathbb{R}$ as $\delta\downarrow 0$. In
particular, the regularized problem differs from the original one by an arbitrarily small perturbation of the terminal cost, while the
advantage of $g^\delta$ is that it is differentiable and therefore allows us to
justify taking derivatives in calculations that follow. Moreover, the characteristic
bound derived later is independent of $\delta$, so the resulting truncation
interval does not depend on the regularization and remains valid as
$\delta\downarrow 0$. Thus the regularization is used only as a technical device, without affecting the generality of the determined inventory bounds.

{Let $\varepsilon\in(0,1/2)$ and $\delta>0$} and let
$v^{\mathrm{II},\varepsilon,\delta}$ denote the corresponding regularized
Stage~II value function. Let $(x,y)\in[0,1]\times[y_{\min},y_{\max}]$ be fixed, we study the reduced
time-reversed PDE in $(\tau,q)$, with
$
\tau=T_{\mathrm{gc}}-t
$, which reads as
\begin{equation}
\label{eq:char_reduced_PDE}
\frac{\partial \tilde v}{\partial \tau}
+
\frac{1}{2\gamma}
\left(
y-\frac{\partial \tilde v}{\partial q}
\right)^2
=
0,
\qquad
\tau\in[0,T_{\mathrm{gc}}],
\end{equation}
with terminal cost
$$
\tilde v(0,q)=v^{\mathrm{II},\varepsilon,\delta}(T_{\mathrm{gc}},x;q).
$$

We first justify the differentiability of the terminal cost with respect to the
inventory variable $q$. Since the control is inactive on
$[T_{\mathrm{gc}},T]$, the inventory remains equal to $q$. Moreover, the
delivery mismatch is determined by the future accumulated production over the
interval $[T-L,T]$. Hence
\begin{equation}
\label{eq:stageII_terminal_representation}
v^{\mathrm{II},\varepsilon,\delta}(T_{\mathrm{gc}},x;q)
=
\mathbb{E}\left[
g^\delta\left(q-P_{\max}\int_{T-L}^{T} X_s\,ds\right)
\,\middle|\, X_{T_{\mathrm{gc}}}=x
\right].
\end{equation}
Moreover, we have that
\begin{equation}
\label{eq:gdelta_derivative}
(g^\delta)'(\xi)=\beta\frac{\xi}{\sqrt{\xi^2+\delta^2}},
\qquad
\left|(g^\delta)'(\xi)\right|\le\beta
\qquad\text{for all }\xi\in\mathbb{R}.
\end{equation}

By the Leibniz rule for differentiation under the integral sign, applied to
\eqref{eq:stageII_terminal_representation}, we obtain
\begin{equation}
\label{eq:terminal_q_derivative}
\partial_q v^{\mathrm{II},\varepsilon,\delta}(T_{\mathrm{gc}},x;q)
=
\mathbb{E}\left[
(g^\delta)'\left(q-P_{\max}\int_{T-L}^{T} X_s\,ds\right)
\,\middle|\, X_{T_{\mathrm{gc}}}=x
\right].
\end{equation}
Indeed, for each realization of the path $(X_s)_{s\in[T-L,T]}$, the map
$$
q\mapsto
g^\delta\left(q-P_{\max}\int_{T-L}^{T} X_s\,ds\right)
$$
is differentiable, with derivative
$$
(g^\delta)'\left(q-P_{\max}\int_{T-L}^{T} X_s\,ds\right).
$$

To justify differentiation under the conditional expectation in
\eqref{eq:stageII_terminal_representation}, fix $x\in[0,1]$ and let
$\nu_X$ denote the conditional law of
$(X_s)_{s\in[T-L,T]}$ given $X_{T_{\mathrm{gc}}}=x$. Then
\eqref{eq:stageII_terminal_representation} can be written as
\begin{equation}
\label{eq:stageII_terminal_representation_measure}
v^{\mathrm{II},\varepsilon,\delta}(T_{\mathrm{gc}},x;q)
=
\int
g^\delta\left(q-P_{\max}\int_{T-L}^{T} X_s(\omega)\,ds\right)
\,\nu_X(d\omega).
\end{equation}
For each fixed $\omega$, the map
$$
q\mapsto
g^\delta\left(q-P_{\max}\int_{T-L}^{T} X_s(\omega)\,ds\right)
$$
is differentiable, with derivative
$$
(g^\delta)'\left(q-P_{\max}\int_{T-L}^{T} X_s(\omega)\,ds\right).
$$
Moreover, by \eqref{eq:gdelta_derivative},
$$
\left|
(g^\delta)'\left(q-P_{\max}\int_{T-L}^{T} X_s(\omega)\,ds\right)
\right|
\le \beta
\qquad\text{for all }q\in\mathbb{R}\text{ and all }\omega.
$$
Since $\nu_X$ is a probability measure, the constant $\beta$ is
$\nu_X$-integrable. Therefore the assumptions of the Leibniz rule for
differentiation under the integral sign are satisfied, and we obtain
\begin{equation}
\label{eq:terminal_q_derivative}
\partial_q v^{\mathrm{II},\varepsilon,\delta}(T_{\mathrm{gc}},x;q)
=
\mathbb{E}\left[
(g^\delta)'\left(q-P_{\max}\int_{T-L}^{T} X_s\,ds\right)
\,\middle|\, X_{T_{\mathrm{gc}}}=x
\right].
\end{equation}
Consequently, using \eqref{eq:gdelta_derivative},
\begin{equation}
\label{eq:psi2_bound}
\left|
\partial_q v^{\mathrm{II},\varepsilon,\delta}(T_{\mathrm{gc}},x;q)
\right|
\le \beta
\qquad\text{for every }q\in\mathbb{R}.
\end{equation}

We define
$$
p_1:=\partial_\tau \tilde v,
\qquad
p_2:=\partial_q \tilde v,
$$
Hence, the reduced PDE can be rewritten as
$$
p_1+\frac{1}{2\gamma}(y-p_2)^2=0.
$$
The characteristic system is then given by
$$
\begin{cases}
\dot\tau(s)=1,\\[4pt]
\dot q(s)=-\dfrac{1}{\gamma}\bigl(y-p_2\bigr),\\[8pt]
\dot p_1(s)=0,\\[4pt]
\dot p_2(s)=0,
\end{cases}
\qquad
\tau(0)=0,
\qquad
q(0)=r,
\qquad
p_2(0)=\psi_2(r),
$$
where
$
\psi_2(r)
=
\partial_q v^{\mathrm{II},\varepsilon,\delta}(T_{\mathrm{gc}},x;r).
$
By \eqref{eq:psi2_bound}, we have that
$$
|\psi_2(r)|\le\beta
\qquad\text{for every }r\in\mathbb{R}.
$$

Since $\dot p_2=0$, the slope is constant along each characteristic, consequently
$$
p_2(s)\equiv\psi_2(r).
$$
Integrating $\dot q$ gives
\begin{equation}
\label{eq:char_q}
q(s)=r-\frac{s}{\gamma}\bigl(y-\psi_2(r)\bigr),
\qquad s\in[0,T_{\mathrm{gc}}].
\end{equation}

Passing back to forward time by
$
t=T_{\mathrm{gc}}-s,
$
the corresponding forward characteristic speed is
$
\frac{y-\psi_2(r)}{\gamma}.
$
By setting $y\in[y_{\min},y_{\max}]$ together with \eqref{eq:psi2_bound}, we obtain
\begin{equation}
\label{eq:slope_envelope}
\frac{y_{\min}-\beta}{\gamma}
\le
\frac{y-\psi_2(r)}{\gamma}
\le
\frac{y_{\max}+\beta}{\gamma}.
\end{equation}

Hence, for any affine  trajectory starting from
$
Q_0=0
$
we have using \eqref{eq:slope_envelope}
$$
Q(t)=ct,
\qquad
c\in
\left[
\frac{y_{\min}-\beta}{\gamma},
\frac{y_{\max}+\beta}{\gamma}
\right],
\qquad
t\in[0,T_{\mathrm{gc}}],
$$
and therefore
\begin{equation}
\label{eq:Q_bounds_t}
\frac{t}{\gamma}(y_{\min}-\beta)
\le
Q(t)
\le
\frac{t}{\gamma}(y_{\max}+\beta),
\qquad
t\in[0,T_{\mathrm{gc}}].
\end{equation}

Since the two bounds in \eqref{eq:Q_bounds_t} are linear in $t$, their extrema
over $[0,T_{\mathrm{gc}}]$ are attained at the endpoints. Hence
\begin{equation}
\label{eq:Q_bounds_uniform}
T_{\mathrm{gc}}
\min\left\{\frac{y_{\min}-\beta}{\gamma},0\right\}
\le
Q(t)
\le
T_{\mathrm{gc}}
\max\left\{\frac{y_{\max}+\beta}{\gamma},0\right\},
\qquad
t\in[0,T_{\mathrm{gc}}].
\end{equation}
Accordingly, for any $\epsilon>0$, we define
\begin{equation}
q_{\min}
=
T_{\mathrm{gc}}
\min\left\{\frac{y_{\min}-\beta}{\gamma},0\right\}
-\epsilon,
\qquad
q_{\max}
=
T_{\mathrm{gc}}
\max\left\{\frac{y_{\max}+\beta}{\gamma},0\right\}
+\epsilon.
\end{equation}
\end{proof}
\vspace{-1cm}
{\section{{Proof of the Asymptotic Price Boundary Condition}}
\label{app:farfield}
\begin{proof}

In this section, we derive the affine condition imposed at the artificial
price endpoints $y \in \{y_{\min}, y_{\max}\}$. In the asymptotic regime
$|y| \to \infty$, the optimal control saturates, i.e., $\psi^* = \psi_{\max}$
as $y \to +\infty$ and $\psi^* = \psi_{\min}$ as $y \to -\infty$. If we let
$\bar\psi$ denote the saturated value, then the Hamiltonian reads as
\begin{equation}
\label{eq:app_saturated_hamiltonian}
H(y, \partial_q v) = -\bar\psi\,y + \tfrac{\gamma}{2}\bar\psi^2 + \bar\psi\,\partial_q v.
\end{equation}
Consequently, by linearity of the Hamiltonian in $y$  we assume the  following asymptotic ansatz 
\begin{equation}
\label{eq:app_affine_ansatz}
{v^{III}(T_{\mathrm{gc}}-\tau, x, y, q)} = a(\tau)\,y + r(\tau, x, q),
\qquad {a \in C^1([0,T_{\rm gc}])},
\quad r : [0, T_{\rm gc}] \times [0,1] \times [q_{\min}, q_{\max}] \to \mathbb{R}.
\end{equation}
{Substituting \eqref{eq:app_affine_ansatz} into the HJB
equation} and identifying the coefficient of $y$, we obtain
\begin{equation}
\label{eq:app_drift_contribution}
{\mu_Y(T_{\mathrm{gc}}-\tau, y)}\,\partial_y v
= {\bigl(\dot p_Y(T_{\mathrm{gc}}-\tau) - \kappa(y - p_Y(T_{\mathrm{gc}}-\tau))\bigr)} a(\tau)
= -\kappa\,a(\tau)\,y + \mathcal{O}(1).
\end{equation}
The coefficient of $y$  thus satisfies
\begin{equation}
\label{eq:app_farfield_slope_ode}
a'(\tau) + \kappa\,a(\tau) + \bar\psi = 0,
\qquad a(0) = 0,
\end{equation}
where $a(0) = 0$ follows from the terminal condition, which is
    independent of $y$. The unique solution of \eqref{eq:app_farfield_slope_ode} on $[0, T_{\rm gc}]$ is
\begin{equation}
\label{eq:app_farfield_slope_solution}
a(\tau) = -\bar\psi\,\frac{1 - e^{-\kappa\tau}}{\kappa}.
\end{equation}
Taking $\bar\psi = \psi_{\max}$ for $y \to +\infty$ and
$\bar\psi = \psi_{\min}$ for $y \to -\infty$ gives the two possible solutions
\begin{equation}
\label{eq:app_farfield_slopes}
a_+(\tau) = -\psi_{\max}\,\frac{1 - e^{-\kappa\tau}}{\kappa},
\qquad
a_-(\tau) = -\psi_{\min}\,\frac{1 - e^{-\kappa\tau}}{\kappa}.
\end{equation}
Motivated by \eqref{eq:app_farfield_slopes}, we impose the slopes
$a_\pm(\tau)$ at the truncation endpoints as the boundary condition
\eqref{eq:price_farfield_condition}.
\end{proof}
}

{\section{Proof of Lemma~\ref{lem:coefficient_nonnegativity}}
\label{app:proof_coefficient_nonnegativity}

\begin{proof}
Equating the off-diagonal entries in
\eqref{eq:positive_type_representation} gives
\begin{equation}
\label{eq:wxy_from_offdiagonal}
w_i^{xy}
=
\frac{|\rho|\,\sigma_X(x_i)\sigma}
{2s_i\Delta x\,\Delta y},
\end{equation}
which is nonnegative because $\rho<0$, $\sigma_X(x_i)>0$, $\sigma>0$,
$s_i\ge1$, $\Delta x>0$, and $\Delta y>0$. Equating the two diagonal
entries gives
\begin{equation}
\label{eq:wx_wy_from_diagonal}
w_i^x
=
\frac{\sigma_X(x_i)^2}{2(\Delta x)^2}
-
s_i^2 w_i^{xy},
\qquad
w_i^y
=
\frac{\sigma^2}{2(\Delta y)^2}
-
w_i^{xy}.
\end{equation}
Substituting \eqref{eq:wxy_from_offdiagonal} into
\eqref{eq:wx_wy_from_diagonal}, the inequality $w_i^x\ge0$ is equivalent
to $s_i \le R_i/|\rho|$, and the inequality $w_i^y\ge0$ is equivalent to
$s_i \ge |\rho|R_i$. Thus
\eqref{eq:coefficient_nonnegativity_condition} is necessary and
sufficient for all coefficients in
\eqref{eq:positive_type_representation} to be nonnegative.
\end{proof}}

{\section{Proof of Lemma~\ref{lem:jacobi_coefficient_nonnegativity}}
\label{app:proof_jacobi_coefficient_nonnegativity}

\begin{proof}
Since $x_i=i\Delta x$ and $\Delta x=1/N_x$, the Jacobi coefficient gives
\begin{equation}
\label{eq:jacobi_R_identity}
|\rho|R_i
=
\frac{|\rho|\sqrt{2\alpha\theta_0}}{\sigma}\,
\Delta y\,\sqrt{i(N_x-i)}.
\end{equation}
The lower bound in \eqref{eq:jacobi_lower_mesh_condition} is equivalent
to $R_1\ge|\rho|$. Since $R_i\ge R_1$ for all
$i=1,\dots,N_x-1$, we have $R_i\ge|\rho|$ for all such $i$. Therefore
$|\rho|R_i\le s_i$ follows from $s_i=\lceil|\rho|R_i\rceil$.

It remains to prove $s_i\le R_i/|\rho|$. If
$|\rho|\le R_i\le1/|\rho|$, then $|\rho|R_i\le1$, so $s_i=1$, and
$s_i\le R_i/|\rho|$. If $R_i>1/|\rho|$, then
$$
s_i=\left\lceil|\rho|R_i\right\rceil\le|\rho|R_i+1.
$$
Since $0<|\rho|\le1/\sqrt2$ and $R_i>1/|\rho|$,
$$
R_i\left(\frac{1}{|\rho|}-|\rho|\right)
>
\frac{1}{|\rho|^2}-1
\ge1.
$$
Thus $|\rho|R_i+1\le R_i/|\rho|$, and hence
$s_i\le R_i/|\rho|$. The claim follows from
Lemma~\ref{lem:coefficient_nonnegativity}.
\end{proof}}

{\section{Proof of Lemma~\ref{lem:jacobi_interior_oblique_stencil}}
\label{app:proof_jacobi_interior_oblique_stencil}

\begin{proof}
By \eqref{eq:jacobi_R_identity}, condition
\eqref{eq:jacobi_upper_mesh_condition} is equivalent to
$|\rho|R_1\le1$. If \eqref{eq:oblique_stencil_condition} holds
for all $i=1,\dots,N_x-1$, then in particular $s_1\le1$. Since
$s_1=\lceil|\rho|R_1\rceil$, this implies $|\rho|R_1\le1$, and hence
\eqref{eq:jacobi_upper_mesh_condition} is satisfied.

We now prove the sufficiency of the condition. For $1\le i\le N_x/2$, the inequality
$|\rho|R_i\le i$ is equivalent to
$$
\frac{|\rho|\sqrt{2\alpha\theta_0}}{\sigma}\,
\Delta y
\sqrt{\frac{N_x-i}{i}}
\le1.
$$
The left-hand side is decreasing in $i$, and at $i=1$ it is bounded by
\eqref{eq:jacobi_upper_mesh_condition}. Hence $|\rho|R_i\le i$ for
$1\le i\le N_x/2$. Since $i \in \mathbb{N}$ and
$s_i=\lceil|\rho|R_i\rceil$, we get $s_i\le i$. Replacing $i$ by
$N_x-i$ gives $s_i\le N_x-i$ for $N_x/2<i\le N_x-1$. Therefore
$s_i\le\min\{i,N_x-i\}$ for all $i=1,\dots,N_x-1$.
\end{proof}}

\section{{Proof of Proposition~\ref{prop:xy_mmatrix}}}
\label{app:proof_xy_mmatrix}

{
\begin{proof}
We first address the monotonicity of the interior part of the stencil. Let $1\le i\le N_x-1$ and $2\le j\le N_y-2$. By
\eqref{eq:xy_transport_rates}, we have that the coefficients satisfy
\[
a_i^{x,W},\ a_i^{x,E},\ a_j^{y,S},\ a_j^{y,N}\ge 0,
\]
and by \eqref{eq:xy_diffusion_rates} together with
\eqref{eq:coefficient_nonnegativity_condition},
\[
\nu_i^x,\ \nu_i^y,\ \nu_i^{xy}\ge 0.
\]
Hence, the oblique-stencil indices $i\pm s_i,\ j\mp 1$ are admissible by
\eqref{eq:oblique_stencil_condition}. Substitution in
\eqref{eq:xy_matrix_coefficients} yields
\begin{equation*}
C^W_{i,j},\ C^E_{i,j},\ C^S_{i,j},\ C^N_{i,j},\
C^{SE}_{i,j},\ C^{NW}_{i,j}\le 0,
\end{equation*}
and
\begin{equation*}
C^C_{i,j}+C^W_{i,j}+C^E_{i,j}
+C^S_{i,j}+C^N_{i,j}
+C^{SE}_{i,j}+C^{NW}_{i,j}
=1.
\end{equation*}
Hence
\begin{equation*}
C^C_{i,j}-\bigl(|C^W_{i,j}|+|C^E_{i,j}|
+|C^S_{i,j}|+|C^N_{i,j}|
+|C^{SE}_{i,j}|+|C^{NW}_{i,j}|\bigr)
=1,
\end{equation*}
Consequently
\begin{equation*}
C^C_{i,j}=1+|C^W_{i,j}|+|C^E_{i,j}|
+|C^S_{i,j}|+|C^N_{i,j}|
+|C^{SE}_{i,j}|+|C^{NW}_{i,j}|
>0.
\end{equation*}
Solving \eqref{eq:xy_matrix_row} for ${W^{n+1}_{i,j,k}}$,
\begin{equation*}
{W^{n+1}_{i,j,k}}
=
\frac{1}{C^C_{i,j}}
\Bigl(
{U^{n+1}_{i,j,k}}
-C^W_{i,j}\,{W^{n+1}_{i-1,j,k}}
-C^E_{i,j}\,{W^{n+1}_{i+1,j,k}}
-C^S_{i,j}\,{W^{n+1}_{i,j-1,k}}
-C^N_{i,j}\,{W^{n+1}_{i,j+1,k}}
\end{equation*}
\begin{equation*}
-C^{SE}_{i,j}\,{W^{n+1}_{i+s_i,j-1,k}}
-C^{NW}_{i,j}\,{W^{n+1}_{i-s_i,j+1,k}}
\Bigr).
\end{equation*}
Since $C^C_{i,j}>0$ and $-C^W_{i,j},-C^E_{i,j},-C^S_{i,j},-C^N_{i,j},
-C^{SE}_{i,j},-C^{NW}_{i,j}\ge 0$, the right-hand side is nondecreasing
in ${U^{n+1}_{i,j,k}}$ and in each neighboring value of ${W^{n+1}}$. The
interior stencil is therefore monotone.

\smallskip
The verification for the numerical boundary conditions
consists of:
\begin{itemize}
\item the one-sided equations at $x=0$ and $x=1$, using
\eqref{eq:production_boundary_rows},
\eqref{eq:production_drift_boundary}, and
\eqref{eq:production_diffusion_boundary};
\item the affine boundary condition in $y$ at $j=0$ and $j=N_y$, using
\eqref{eq:affine_boundary_rows};
\item the sign and diagonal-dominance properties of the assembled
coefficient matrix $\mathbf A$, from which $\mathbf A^{-1}\ge 0$ and
the monotonicity of the full implicit $(x,y)$ solve follow.
\end{itemize}
\end{proof}
}


\section{Proof of Proposition~\ref{prop:loc_error}}
\label{app:proof_loc_error}

\begin{proof}
{Let $(t,x,q)$ be fixed, let $y\in[y_{\min},y_{\max}]$, and set
$\tau=T_{\mathrm{gc}}-t$. Define}
$$
{
\Delta\mathcal{J}_+(y)
:=
\mathcal{J}_+[v](t,x,y,q)
-
\widehat{\mathcal{J}}_+[v](t,x,y,q),
\qquad
\Delta\mathcal{J}_-(y)
:=
\mathcal{J}_-[v](t,x,y,q)
-
\widehat{\mathcal{J}}_-[v](t,x,y,q).
}
$$
{From \eqref{eq:jump_decomp} and \eqref{eq:jump_localized}, the terms
$-\lambda v(t,x,y,q)$ cancel, and therefore}
\begin{equation}\label{eq:loc_decomp_app}
{
\mathcal{I}_y[v](t,x,y,q)
-
\widehat{\mathcal{I}}_y[v](t,x,y,q)
=
\lambda\Bigl(
p_+\,\Delta\mathcal{J}_+(y)
+
p_-\,\Delta\mathcal{J}_-(y)
\Bigr).
}
\end{equation}

{We first consider the positive semi-infinite integral. Let
$d_+:=y_{\max}-y$. From \eqref{eq:Jplus_def} and
\eqref{eq:Jplus_tr},}
\begin{align*}
{\Delta\mathcal{J}_+(y)}
&{
=
\eta_+
\int_{d_+}^{\infty}
\Bigl[
v(t,x,y+z,q)
-
v(t,x,y_{\max},q)
-
a_+(\tau)(z-d_+)
\Bigr]
e^{-\eta_+z}\,\mathrm{d}z.}
\end{align*}
{With the change of variable $z=d_++u$, we have
$y+z=y_{\max}+u$, and hence}
\begin{equation}
\label{eq:positive_exterior_error_app}
{
\Delta\mathcal{J}_+(y)
=
e^{-\eta_+d_+}\eta_+
\int_0^\infty
R_+(u)e^{-\eta_+u}\,\mathrm{d}u.
}
\end{equation}

{For the negative semi-infinite integral, let
$d_-:=y-y_{\min}$. From \eqref{eq:Jminus_def} and
\eqref{eq:Jminus_tr},}
\begin{align*}
{\Delta\mathcal{J}_-(y)}
&{
=
\eta_-
\int_{-\infty}^{-d_-}
\Bigl[
v(t,x,y+z,q)
-
v(t,x,y_{\min},q)
-
a_-(\tau)(z+d_-)
\Bigr]
e^{\eta_-z}\,\mathrm{d}z.}
\end{align*}
{Setting $z=-d_--u$ gives $y+z=y_{\min}-u$, and therefore}
\begin{equation}
\label{eq:negative_exterior_error_app}
{
\Delta\mathcal{J}_-(y)
=
e^{-\eta_-d_-}\eta_-
\int_0^\infty
R_-(u)e^{-\eta_-u}\,\mathrm{d}u.
}
\end{equation}

{Substituting
\eqref{eq:positive_exterior_error_app} and
\eqref{eq:negative_exterior_error_app} into
\eqref{eq:loc_decomp_app} gives the identity
\eqref{eq:loc_error_bound}. Applying the triangle inequality yields the
stated bound.}
\end{proof}

\section{{Proof of Proposition~\ref{prop:ode_jump}}}
\begin{proof}
{Let $(t,x,q)$ be fixed and set
$\tau=T_{\mathrm{gc}}-t$. We first derive the differential relation for
the interior contribution $\mathcal{J}_{+,\mathrm{loc}}[v]$ and then add
the contribution from the linear continuation.}

Let $y\in[y_{\min},y_{\max}]$.  The change of variable $u=y+z$ in
\eqref{eq:Jplus_loc} gives
\begin{equation}\label{eq:Jplus_loc_factored_app}
\mathcal{J}_{+,\mathrm{loc}}[v](t,x,y,q)
=
\eta_+\, e^{\eta_+ y}
\int_y^{y_{\max}} v(t,x,u,q)\,e^{-\eta_+ u}\,\mathrm{d}u.
\end{equation}
Let
$$
F_+(y):=\int_y^{y_{\max}} v(t,x,u,q)\,e^{-\eta_+ u}\,\mathrm{d}u.
$$
Since $y\mapsto v(t,x,y,q)$ is continuous on $[y_{\min},y_{\max}]$ by
assumption, the integrand $u\mapsto v(t,x,u,q)\,e^{-\eta_+ u}$ is continuous.
By the fundamental theorem of calculus, $F_+\in C^1([y_{\min},y_{\max}])$ and
$$
F_+'(y)=-v(t,x,y,q)\,e^{-\eta_+ y}.
$$
Since $y\mapsto \eta_+\,e^{\eta_+ y}$ belongs to $C^\infty(\mathbb{R})$, it follows from
\eqref{eq:Jplus_loc_factored_app} that
$y\mapsto \mathcal{J}_{+,\mathrm{loc}}[v](t,x,y,q)$ belongs to
$C^1([y_{\min},y_{\max}])$.  Applying the product rule to
\eqref{eq:Jplus_loc_factored_app}, we have that
\begin{align*}
\partial_y \mathcal{J}_{+,\mathrm{loc}}[v](t,x,y,q)
&=
\eta_+^2\, e^{\eta_+ y}\, F_+(y)
\;+\;
\eta_+\, e^{\eta_+ y}\, F_+'(y) \\
&=
\eta_+\,\mathcal{J}_{+,\mathrm{loc}}[v](t,x,y,q)
\;-\;
\eta_+\, v(t,x,y,q) \\
&=
\eta_+\Bigl(
\mathcal{J}_{+,\mathrm{loc}}[v](t,x,y,q)-v(t,x,y,q)
\Bigr).
\end{align*}
At $y=y_{\max}$, the integral $F_+(y_{\max})=0$. Consequently,
$\mathcal{J}_{+,\mathrm{loc}}[v](t,x,y_{\max},q)=0$.

{By \eqref{eq:Jplus_tr},}
$$
{
\widehat{\mathcal{J}}_+[v](t,x,y,q)
=
\mathcal{J}_{+,\mathrm{loc}}[v](t,x,y,q)
+
B_+(y),
}
$$
{where}
$$
{
B_+(y)
:=
e^{-\eta_+(y_{\max}-y)}
\left(
v(t,x,y_{\max},q)
+
\frac{a_+(\tau)}{\eta_+}
\right).
}
$$
{Since $\partial_y B_+(y)=\eta_+B_+(y)$, it follows that}
\begin{align*}
{
\partial_y\widehat{\mathcal{J}}_+[v](t,x,y,q)}
&{
=
\eta_+\Bigl(
\mathcal{J}_{+,\mathrm{loc}}[v](t,x,y,q)
-
v(t,x,y,q)
+
B_+(y)
\Bigr)}
\\
&{
=
\eta_+\Bigl(
\widehat{\mathcal{J}}_+[v](t,x,y,q)
-
v(t,x,y,q)
\Bigr).}
\end{align*}
{At $y=y_{\max}$, we have
$\mathcal{J}_{+,\mathrm{loc}}[v](t,x,y_{\max},q)=0$ and
$B_+(y_{\max})
=
v(t,x,y_{\max},q)+a_+(\tau)/\eta_+$, which gives
\eqref{eq:ode_Jplus_loc}.}

{We now proceed analogously for the negative semi-infinite integral.}
Let $y\in[y_{\min},y_{\max}]$. The
change of variable $u=y+z$ in \eqref{eq:Jminus_loc} gives
\begin{equation}\label{eq:Jminus_loc_factored_app}
\mathcal{J}_{-,\mathrm{loc}}[v](t,x,y,q)
=
\eta_-\, e^{-\eta_- y}
\int_{y_{\min}}^{y} v(t,x,u,q)\,e^{\eta_- u}\,\mathrm{d}u.
\end{equation}
Let
$$
F_-(y):=\int_{y_{\min}}^{y} v(t,x,u,q)\,e^{\eta_- u}\,\mathrm{d}u.
$$
As before, continuity of $y\mapsto v(t,x,y,q)$ implies that
$F_-\in C^1([y_{\min},y_{\max}])$ with
$F_-'(y)=v(t,x,y,q)\,e^{\eta_- y}$, and it follows from
\eqref{eq:Jminus_loc_factored_app} that
$y\mapsto \mathcal{J}_{-,\mathrm{loc}}[v](t,x,y,q)$ belongs to
$C^1([y_{\min},y_{\max}])$. Applying the product rule to
\eqref{eq:Jminus_loc_factored_app}, we have that
\begin{align*}
\partial_y \mathcal{J}_{-,\mathrm{loc}}[v](t,x,y,q)
&=
-\eta_-^2\, e^{-\eta_- y}\, F_-(y)
\;+\;
\eta_-\, e^{-\eta_- y}\, F_-'(y) \\
&=
-\eta_-\,\mathcal{J}_{-,\mathrm{loc}}[v](t,x,y,q)
\;+\;
\eta_-\, v(t,x,y,q) \\
&=
\eta_-\Bigl(
v(t,x,y,q)-\mathcal{J}_{-,\mathrm{loc}}[v](t,x,y,q)
\Bigr).
\end{align*}
At $y=y_{\min}$, the integral $F_-(y_{\min})=0$. Consequently,
$\mathcal{J}_{-,\mathrm{loc}}[v](t,x,y_{\min},q)=0$.

{By \eqref{eq:Jminus_tr},}
$$
{
\widehat{\mathcal{J}}_-[v](t,x,y,q)
=
\mathcal{J}_{-,\mathrm{loc}}[v](t,x,y,q)
+
B_-(y),
}
$$
{where}
$$
{
B_-(y)
:=
e^{-\eta_-(y-y_{\min})}
\left(
v(t,x,y_{\min},q)
-
\frac{a_-(\tau)}{\eta_-}
\right).
}
$$
{Since $\partial_y B_-(y)=-\eta_-B_-(y)$, we obtain}
\begin{align*}
{
\partial_y\widehat{\mathcal{J}}_-[v](t,x,y,q)}
&{
=
\eta_-\Bigl(
v(t,x,y,q)
-
\mathcal{J}_{-,\mathrm{loc}}[v](t,x,y,q)
-
B_-(y)
\Bigr)}
\\
&{
=
\eta_-\Bigl(
v(t,x,y,q)
-
\widehat{\mathcal{J}}_-[v](t,x,y,q)
\Bigr).}
\end{align*}
{At $y=y_{\min}$,
$\mathcal{J}_{-,\mathrm{loc}}[v](t,x,y_{\min},q)=0$ and
$B_-(y_{\min})
=
v(t,x,y_{\min},q)-a_-(\tau)/\eta_-$, which gives
\eqref{eq:ode_Jminus_loc}.}
\end{proof}

\section{{Proof of Proposition~\ref{prop:cfl}}}
\label{app:monotonicity_jump}

\begin{proof}[Proof of Proposition~\ref{prop:cfl}]
Let $(i,k)$ be fixed and let $j\in\{1,\dots,N_y-1\}$ be an interior price variable
node. {Unrolling the recurrence in \eqref{eq:rec_plus} from}
$$
{
\widehat{\mathcal{J}}_{+,N_y}
=
W^{n+1}_{i,N_y,k}
+
\frac{a_+(\tau_{n+1})}{\eta_+}
}
$$
{gives}
\begin{equation}\label{eq:unrolled_plus}
{
\widehat{\mathcal{J}}_{+,j}
=
\sum_{\ell=j}^{N_y-1}
r_+^{\,\ell-j}(1-r_+)\,
W^{n+1}_{i,\ell,k}
+
r_+^{\,N_y-j}W^{n+1}_{i,N_y,k}
+
r_+^{\,N_y-j}
\frac{a_+(\tau_{n+1})}{\eta_+}.
}
\end{equation}
{Similarly, unrolling \eqref{eq:rec_minus} from}
$$
{
\widehat{\mathcal{J}}_{-,0}
=
W^{n+1}_{i,0,k}
-
\frac{a_-(\tau_{n+1})}{\eta_-}
}
$$
{yields}
\begin{equation}\label{eq:unrolled_minus}
{
\widehat{\mathcal{J}}_{-,j}
=
\sum_{\ell=1}^{j}
r_-^{\,j-\ell}(1-r_-)\,
W^{n+1}_{i,\ell,k}
+
r_-^{\,j}W^{n+1}_{i,0,k}
-
r_-^{\,j}
\frac{a_-(\tau_{n+1})}{\eta_-}.
}
\end{equation}

Substituting \eqref{eq:unrolled_plus} and \eqref{eq:unrolled_minus} into
\eqref{eq:discrete_jump} and isolating the term $\ell=j$, we obtain
\begin{equation}\label{eq:jump_update_expanded}
\begin{aligned}
{
W^{n+1}_{i,j,k}
+
\Delta\tau\,
\widehat{\mathcal{I}}_y[W^{n+1}]_{i,j,k}}
&{
=
\Delta\tau\,\lambda\,p_+
\sum_{\ell=j+1}^{N_y-1}
r_+^{\,\ell-j}(1-r_+)\,
W^{n+1}_{i,\ell,k}}
\\
&{\quad
+
\Delta\tau\,\lambda\,p_-
\sum_{\ell=1}^{j-1}
r_-^{\,j-\ell}(1-r_-)\,
W^{n+1}_{i,\ell,k}}
\\
&{\quad
+
\Delta\tau\,\lambda\,p_+
r_+^{\,N_y-j}W^{n+1}_{i,N_y,k}}
\\
&{\quad
+
\Delta\tau\,\lambda\,p_-
r_-^{\,j}W^{n+1}_{i,0,k}}
\\
&{\quad
+
\Bigl(
1-\Delta\tau\,\lambda\,
(p_+r_+ + p_-r_-)
\Bigr)W^{n+1}_{i,j,k}}
\\
&{\quad
+
\Delta\tau\,\lambda
\left(
p_+r_+^{\,N_y-j}
\frac{a_+(\tau_{n+1})}{\eta_+}
-
p_-r_-^{\,j}
\frac{a_-(\tau_{n+1})}{\eta_-}
\right).}
\end{aligned}
\end{equation}
where we used $p_+(1-r_+)+p_-(1-r_-)-1 = -(p_+r_++p_-r_-)$, which holds
since $p_++p_-=1$.

{All coefficients multiplying values of $W^{n+1}$ in
\eqref{eq:jump_update_expanded} are nonnegative except possibly the
diagonal coefficient. In particular, the coefficients of the boundary
values $W^{n+1}_{i,N_y,k}$ and $W^{n+1}_{i,0,k}$ are
$\Delta\tau\lambda p_+r_+^{N_y-j}$ and
$\Delta\tau\lambda p_-r_-^j$, respectively, and are nonnegative.
The final line of \eqref{eq:jump_update_expanded} depends only on the
prescribed slopes and is an additive term.}

It remains to verify that the diagonal coefficient is nonnegative. Since
$r_\pm\in(0,1)$ and $p_++p_-=1$, we have that
$$
0<p_+r_++p_-r_-<1.
$$
Under the CFL condition ${\Delta\tau}\le 1/\lambda$, it follows that
$$
0
\;\le\;
1-{\Delta\tau}\,\lambda
\;\le\;
1-{\Delta\tau}\,\lambda\,(p_+r_+ + p_-r_-).
$$
Consequently, {all coefficients multiplying the  values in
\eqref{eq:jump_update_expanded} are nonnegative.}

If we let ${\widetilde W^{n+1}}$ be another grid function satisfying
${W^{n+1}_{i,\ell,k}\le \widetilde W^{n+1}_{i,\ell,k}}$ for all
$\ell=0,\dots,N_y$, {and use the same prescribed slopes for both
grid functions, then the additive slope terms in
\eqref{eq:jump_update_expanded} are identical.} Applying
\eqref{eq:jump_update_expanded} to both ${W^{n+1}}$
and ${\widetilde W^{n+1}}$, we obtain
\begin{equation}
\label{eq:jump_mononotonicity_end}
{
W^{n+1}_{i,j,k}
+
\Delta\tau\,
\widehat{\mathcal{I}}_y[W^{n+1}]_{i,j,k}
}
\;\le\;
{
\widetilde W^{n+1}_{i,j,k}
+
\Delta\tau\,
\widehat{\mathcal{I}}_y[\widetilde W^{n+1}]_{i,j,k},
}
\qquad j=1,\dots,N_y-1.
\end{equation}

{Finally, after the interior update, the boundary values satisfy
\eqref{eq:post_jump_affine_refill}. Each boundary value is the adjacent
interior value plus a prescribed term independent of the input grid
function. Hence the numerical boundary condition is also order-preserving.}
Equation \eqref{eq:jump_mononotonicity_end} {together with
\eqref{eq:post_jump_affine_refill}} proves the monotonicity of the explicit
jump step under the provided CFL condition
${\Delta\tau}\le 1/\lambda$.
\end{proof}

\end{document}